\pdfoutput=1
\documentclass[a4paper,11pt]{article} 
\usepackage{float}
\usepackage[utf8x]{inputenc} 
\usepackage[english]{babel}
\usepackage{amsmath,amsthm,amsfonts,amssymb,mathrsfs, bbm}
\usepackage{latexsym}
\usepackage{authblk}
\usepackage{lineno}
\usepackage{eucal}
\usepackage[dvipsnames]{xcolor}
\usepackage{hyperref}
\usepackage{afterpage}
\usepackage{lmodern}
\usepackage[a4paper,left=2.9cm,right=2.9cm]{geometry}
\usepackage{graphicx}
\usepackage{bm}
\usepackage{mathtools}
\DeclarePairedDelimiter{\ceil}{\lceil}{\rceil}

\definecolor{BeauBlue}{rgb}{0, 0.2, .9}
\definecolor{BeauOrange}{rgb}{.8, .1, 0}
\usepackage{hyperref}
\hypersetup{
	%bookmarksnumbered=true
	colorlinks = true,
	linkcolor = BeauBlue,
	urlcolor = cyan,
	citecolor = BeauOrange,
}

\numberwithin{equation}{section}

%mathematical statements
\newtheorem{theorem}{Theorem}[section] % reset theorem numbering for each chapter
\newtheorem{proposition}[theorem]{Proposition} 

\newtheorem{lemma}[theorem]{Lemma}
\newtheorem{definition}[theorem]{Definition}  % definition numbers are dependent on theorem numbers
\newtheorem{example}[theorem]{Example}
\newtheorem{remark}[theorem]{Remark}

\DeclareMathOperator{\Tr}{Tr} 

%\documentclass[a4paper,11pt]{article}
%\usepackage{amsbsy}
%\usepackage[utf8x]{inputenc} 
%\usepackage[english]{babel}
%\usepackage{amsmath,amsthm,amsfonts,amssymb,mathrsfs}
%\usepackage{showlabels}
%\usepackage{amssymb}
%\usepackage{hyperref}
%\usepackage{dsfont}
%\usepackage[style=alphabetic,url=false,doi=false,isbn=false,backend=biber,maxbibnames=20]{biblatex}
%\AtEveryBibitem{\mathcal{L}earfield{number}}
%\addbibresource{References.bib}
%\addbibresource{gram.bib}
%\usepackage{authblk}
%\usepackage{cleveref}
%\reversemarginpar

%\newtheorem{theorem}{Theorem}[section]
%\newtheorem{lemma}[theorem]{Lemma}
%\newtheorem{proposition}[theorem]{Proposition}
%\newtheorem{rem}[theorem]{Remark}
%\newtheorem{remark}[theorem]{Remark}
%\newtheorem{definition}[theorem]{Definition}
%\newtheorem{prop}[theorem]{Proposition}

%\numberwithin{theorem}{section}

%\newtheorem{assumption}{Assumption}
%\renewcommand{\theassumption}{\Alph{assumption}}

%\DeclareMathOperator{\disc}{disc}

\PassOptionsToPackage{dvipsnames}{xcolor}
\usepackage{graphicx} % Required for inserting images

\usepackage{tikz}
\usepackage{tikz-3dplot}
\usepackage{pgfplots}
\pgfplotsset{compat=1.17}
\usepackage{comment}
\usepackage{xcolor}
\usepackage{subfig}
\usepackage[compat=1.1.0]{tikz-feynman}
\usepackage{amsmath}

\usetikzlibrary{matrix,calc,positioning,decorations.markings,decorations.pathmorphing,decorations.pathreplacing,cd}
\usetikzlibrary{arrows,decorations.markings}
\usetikzlibrary{shapes.misc}
\usetikzlibrary{shapes.geometric}
\usetikzlibrary{shapes.multipart}
\usetikzlibrary{decorations.markings}
\usetikzlibrary{backgrounds}
\usetikzlibrary{shadows}

\usetikzlibrary{automata,positioning}
\tikzset{snake it/.style={decorate, decoration=snake}}

\pgfarrowsdeclaredouble{doubled}{doubled}{stealth}{stealth}
\tikzset{
->-/.style={postaction={decorate,
   decoration={markings,mark=at position .53 with {\arrow{stealth};}}}
   },
->>-/.style={postaction={decorate,
   decoration={markings,mark=at position .52 with {\arrow{doubled};}}}
   },   
}

%\tikzset{->-/.style={decoration={markings, mark=at position #1 with {\arrow{>}}},postaction={decorate}}}
%\tikzset{-<-/.style={decoration={markings, mark=at position #1 with {\arrow{<}}},postaction={decorate}}}
%\tikzset{auto shift/.style={auto=right,->, to path={ let \p1=(\tikztostart),\p2=(\tikztotarget), \n1={atan2(\y2-\y1,\x2-\x1)},\n2={\n1+180} in ($(\tikztostart.{\n1})!1mm!270:(\tikztotarget.{\n2})$) -- ($(\tikztotarget.{\n2})!1mm!90:(\tikztostart.{\n1})$) \tikztonodes}}}

\usetikzlibrary{shapes.geometric} 
\tikzset{cross/.style={cross out, draw=black, minimum size=2*(#1-\pgflinewidth), inner sep=0pt, outer sep=0pt},
%default radius will be 1pt. 
cross/.default={4pt}}

\pgfdeclarelayer{bottom}
\pgfdeclarelayer{middleb}
\pgfdeclarelayer{middlet}
\pgfdeclarelayer{top}
\pgfsetlayers{bottom,middleb,main,middlet,top}
\usepgfplotslibrary{colorbrewer}
\pgfplotsset{compat=newest}
\pgfplotsset{colormap/violet}

\pgfdeclarelayer{background layer}
\pgfdeclarelayer{foreground layer}
\pgfsetlayers{background layer,main,foreground layer}

\title{Stability of the $\pi$-Flux Phase for $\mathbb{Z}_{2}$ Lattice Gauge Theory Coupled to Fermionic Matter}
\author[1]{Leonardo Goller}
\author[1]{Marcello Porta}
\affil[1]{Mathematics Area, SISSA, Via Bonomea 265, 34136 Trieste, Italy}

\date{\today}

\begin{document}

\maketitle

\begin{abstract} We consider the two-dimensional $\mathbbm{Z}_{2}$ Ising gauge theory coupled to fermionic matter. In absence of electric fields, we prove that, at half-filling, the ground state of the gauge theory coincides with the $\pi$-flux phase, associated with magnetic flux equal to $\pi$ in every elementary lattice plaquette, provided the fermionic hopping is large enough. This proves in particular the semimetallic behavior of the ground state of the model. Furthermore, we compute the magnetic susceptibility of the gauge theory, and we prove that it is given by the one of massless $2d$ Dirac fermions, thus rigorously justifying recent numerical computations. The proof is based on reflection positivity and chessboard estimates, and on lattice conservation laws for the computation of the transport coefficient.
\end{abstract}

\tableofcontents

\section{Introduction}

The flux-phase conjecture stated that the optimum, energy minimizing magnetic flux for electrons hopping on a planar, bipartite graph at half-filling is equal to $\pi$ per lattice plaquette. This remarkable statement should be compared with the phenomenon of diamagnetism, namely the fact that the optimal magnetic flux per plaquette is zero if the density of particles is small enough. This conjecture has been proved by Lieb in a seminal paper \cite{Lieb}, in a form that applies to positive temperature, to higher dimensions and to a class of interacting models. An improved proof, that allows to further generalize the result, has been given by Macris and Nachtergaele in \cite{NM}. 

In \cite{Lieb, NM}, the gauge field enters via the choice of the hopping parameters of the lattice model, and it is static: the fermions hop on the two-dimensional lattice in the presence of a background gauge field. It is a natural question to understand what happens when the gauge field is promoted to a dynamical one, and this is the question that motivates the present work. We shall consider a system of fermions on a square lattice $\Gamma_{L}$ of side $L$, with periodic boundary conditions, coupled to a $\mathbb{Z}_{2}$-valued gauge field, living on the bonds of the lattice. The Hilbert space of the system is:
\begin{equation}
\mathcal{H}_{L} = \mathbb{C}^{2|E(\Gamma_{L})|} \otimes \mathcal{F}_{L}\;,
\end{equation}
where $E(\Gamma_{L})$ is the set of edges of $\Gamma_{L}$, and $\mathcal{F}_{L}$ is the usual fermionic Fock space. The Hamiltonian of the system is:
\begin{equation}\label{eq:Hdef}
H = -\sum_{\Lambda \in P(\Gamma_{L})} \Big[\prod_{(x,\mu) \in \partial \Lambda} Z_{x,\mu}\Big] -t \sum_{x\in \Gamma_{L}} \sum_{\mu = 1,2} (a^{+}_{x} Z_{x,\mu} a^{-}_{x+e_{\mu}} + \text{h.c.})\;,
\end{equation}
where: $P(\Gamma_{L})$ is the set of all elementary lattice plaquettes $\Lambda$ of $\Gamma_{L}$, $e_{1} = (1,0)$, $e_{2} = (0,1)$, and $(x,\mu) \equiv (x, x+e_{\mu})$ is a bond on $E(\Gamma_{L})$; $Z_{x,\mu}$ is the third Pauli matrix, associated with the bond $(x,\mu)$; $a^{\pm}_{x}$ are the usual fermionic creation and annihilation operators, creating or destroying a particle at $x\in \Gamma_{L}$. The quantity $t>0$ is the hopping parameter, and it defines the coupling constant of the system. This model in (\ref{eq:Hdef}) is actually a special case of the following more general Hamiltonian:
\begin{equation}\label{eq:Hgen}
H_{\varepsilon} = H + \varepsilon \sum_{(x,\mu) \in E(\Gamma_{L})} X_{x,\mu}\;,
\end{equation}
where $X_{x,\mu}$ is the first Pauli matrix. As discussed later in Section \ref{sec:model}, the $X$-operators have the interpretation of electric field operators, while the $Z$-operators have the interpretation of magnetic vector potential operators.

The Hamiltonian acts on a subspace $\mathcal{H}_{L}^{\text{phys}}$ of $\mathcal{H}_{L}$, called the physical subspace, formed by elements of $\mathcal{H}_{L}$ that satisfy a lattice version of the Gauss' law. Namely:
\begin{equation}\label{eq:Qgauss}
Q_{x} \psi = \psi\;,\qquad \text{for all $x\in \Gamma_{L}$}
\end{equation}
with $Q_{x} = A_{x} (-1)^{a^{+}_{x} a^{-}_{x}}$, and $A_{x}$ given by the ``star operator''
\begin{equation}
A_{x} = X_{x, 1} X_{x, 2} X_{x-e_{1},1} X_{x-e_{2},2}\;.
\end{equation}
The Hamiltonian is gauge invariant, in the sense that it commutes with all $Q_{x}$ operators. Eq. (\ref{eq:Qgauss}) can be viewed as an ``exponentiated'' version of the usual Gauss law, as will be discussed in more detail in Section \ref{sec:model}.

The pure gauge part is obtained setting $t=0$. The resulting model is called Ising gauge theory, and it is the simplest example of lattice gauge theory; see \cite{Fra} for a review. It has been first discussed in a seminal paper of Wegner \cite{Weg}, as a statistical mechanics system exhibiting a phase transition that cannot be detected via a local order parameter. The two phases correspond respectively to an area law (high temperature) versus a perimeter law (low temperature) for the expectation value of Wilson loop operators. The work \cite{Weg} actually considered a Euclidean model, where the gauge fields are represented by classical spins. In \cite{FS}, Fradkin and Susskind introduced the Hamiltonian formalism for the quantum system, on a physical Hilbert space that takes into account the Gauss' law with no sources. Then, in \cite{FrSc} Fradkin and Shenker considered the Ising gauge theory coupled to a dynamical matter field, described by a quantum Ising model, and discussed the existence of a continuous interpolation between the confined phase and a Higgs-like phase. We refer to \cite{Fra} for a pedagogical introduction to the subject, and for further references.

In the last years, there has been a lot of activity in the condensed matter community in the study of this gauge theory coupled to fermionic matter, see {\it e.g.} \cite{AG, gazit, prosko, gazit2, koenig}. The coupled model displays a rich phase diagram \cite{gazit}, as shown by numerical simulations; away from half-filling, the gauge fields mediate an attractive interaction between the fermions, which opens a spectral gap and realizes a superfluid state. Instead, at half-filling, the ground state of the gauge theory is described by a so-called orthogonal semimetallic state \cite{NMS, gazit2}, characterized by Dirac-like low energy excitations on the fermionic sector, and it is stable against against superconducting pairing; this phase turns out to be related to the emergence of the $\pi$-flux phase at zero temperature. The emergence of the semimetallic phase has been further investigated in \cite{prosko}, theoretically and numerically, and in \cite{koenig}, where the authors study the toric code model coupled to fermions. A physical signature of the emergent semimetallic phase is the value of a certain transport coefficient, the magnetic susceptibility. This transport coefficient has been studied in \cite{gazit}, where it has been shown numerically that, in a suitable parameter range, it agrees with the value predicted by $2+1$ dimensional massless Dirac fermions and for graphene \cite{Ando}. Most of the mentioned results are based on numerical simulations and Monte Carlo methods; in fact, the Monte Carlo analysis of the Ising gauge theory coupled to matter turns out to be unaffected by the sign problem, for an even number of fermion flavours ({\it e.g.} spinning fermions). Finally, let us also mention that the coupling of Majorana fermions on the honeycomb lattice with $\mathbb{Z}_{2}$ gauge fields has been studied in the context of an exactly solvable model for anyons in \cite{Kitaev}.

Here we will be interested in the rigorous analysis of the gauge theory coupled to fermionic matter. We will focus on the half-filling case, at $\varepsilon = 0$ (no external electric field). As $t\to \infty$, the model becomes purely fermionic, and it is a special case of the large class of systems studied in \cite{Lieb, NM}: at half-filling, the lowest energy is attained in correspondence with the $\pi$-flux phase. This is in contrast with what happens at $t=0$, where the ground state of the system is realized by gauge configuration with flux $0$ per lattice plaquette. Thus, a phase transition must occur at an intermediate value of the hopping parameter.

In this work we will prove that the ground state of the gauge theory at $\varepsilon = 0$ and at half-filling is still described by the $\pi$-flux phase, provided $t$ is large enough (of course, uniformly in the system's size). Thus, our result proves the stability of the $\pi$-flux phase in presence of a dynamical gauge field, and it also provides a quantitative estimate for how large the hopping parameter should be in order to fall within this phase. The result provides a precise control of the correlation functions of the ground state, and it shows the emergence of Dirac-like behavior at large distances. Also, we compute the magnetic susceptibility of the gauge theory, and we rigorously prove the agreement with massless Dirac fermions, observed numerically in \cite{gazit}.

The proof builds on the seminal paper \cite{Lieb}, and as \cite{Lieb} it is based on reflection positivity techniques. We adapt the strategy of \cite{Lieb} to the present case, where the main difference is the presence of the gauge constraint in the definition of the physical Hilbert space. Reflection positivity allows to prove that indeed the lowest fermionic energy is attained in correspondence with the $\pi$-flux phase; but this is not enough to prove that this phase is stable against dynamical fluctuations of the gauge field in the infinite volume limit. To prove stability, one has to quantity the price in (free) energy of the insertion of $0$-flux plaquettes in the $\pi$-flux phase, equivalent to the removal of magnetic monopoles from the system. To achieve this, we use a beautiful consequence of reflection positivity, the chessboard estimate \cite{FILS, FL}: this ultimately allows to prove a lower bound on the free energy cost of the monopoles' removal, and in particular it allows to prove that this cost grows {\it linearly} with the number of removed monopoles. The chessboard strategy actually provides a way to compute this cost, in terms of the free energy of a staggered monopoles' configuration, in which $\pi$ and $0$ flux plaquettes are alternated. Our proof of the chessboard inequality follows the exposition of \cite{Tasaki}, adapted to the present $\mathbb{Z}_{2}$-gauge theory context.

Thus, the strategy allows to prove that the zero temperature physics of the gauge theory is effectively described by the $\pi$-flux phase, as observed numerically. This phase is however not unique: the $\pi$-flux phase can be realized for different values of the $\mathbb{Z}_{2}$-fluxes across the non-contractible loops of the torus. This amounts to four gauge-inequivalent realizations of the $\pi$-flux phase, which correspond to effective periodic/antiperiodic boundary conditions for the fermionic degrees of freedom. The ground state energy of the fermionic system in the $\pi$-flux phase turns out to be very weakly dependent on the choice of these boundary conditions; in this sense, our result shows that the ground state of the gauge theory is approximately four-fold degenerate, in the infinite volume limit.

Finally, we use the explicit structure of the ground state of the gauge theory to compute the magnetic susceptibility. The analysis is based on lattice conservation laws and on the emergent Dirac-like form of the low energy spectrum, and it is based on a non-trivial adaptation of the strategy used in \cite{GMPcond} for graphene, see also \cite{GJMP, GMPhald} for the application to the critical Haldane-Hubbard model. With respect to \cite{GMPcond, GJMP, GMPhald}, here we study the response to a static and space-varying external magnetic field, instead of a space homogeneous and time-dependent electric field. Also, another difference with respect to the computation of the conductivity for semimetals is that here one ends up considering $8$ different cases, associated with periodic/antiperiodic boundary conditions in space and in imaginary time. After isolating an exact cancellation due to the presence of a zero mode, associated with a particular choice of space-time boundary conditions, the remaining cases turn out to contribute equally to the transport coefficient.

The paper is organized as follows. In Section \ref{sec:model} we define the model, we prove some basic facts about the Hamiltonian and the Gibbs state, and we state our main results: Theorem \ref{thm:main} for the stability of the $\pi$-flux phase, and Proposition \ref{prp:susc} for the computation of the magnetic susceptibility. In Section \ref{sec:RP} we adapt the argument of \cite{Lieb} to our gauge theory, and we prove the chessboard estimate, following \cite{Tasaki}. In particular, we show the optimality of the $\pi$-flux phase; also, we compute the fermionic ground state energy with $\pi$-flux background, and we discuss the weak dependence with respect to the fluxes across non-contractible loops. Then, using the chessboard estimate, we prove a lower bound on the increase in free energy due to the monopoles' removal, that scales linearly with the number of $0$-flux insertions. In Section \ref{sec:proofmain} we put everything together and we prove Theorem \ref{thm:main}, while in Section \ref{sec:proofsusc} we prove Proposition \ref{prp:susc}.

\paragraph{Acknowledgements.} We thank Alessandro Giuliani for insightful discussions about reflection positivity. We also thank Fabrizio Caragiulo, Simone Fabbri and Harman Preet Singh for useful comments and Davide Morgante for helpful advices with Ti\textit{k}Z pictures. L. G. and M. P. acknowledge support by the European Research Council through the ERC-StG MaMBoQ, n. 802901. M. P. acknowledges support from the MUR, PRIN 2022 project MaIQuFi cod. 20223J85K3. This work has been carried out under the auspices of the GNFM of INdAM. We gratefully acknowledge hospitality from the University of Z\"urich, where part of this work has been carried out.

\section{The model}\label{sec:model}

We consider a system of fermions on a two-dimensional lattice, coupled to a dynamical, $\mathbb{Z}_{2}$-valued gauge field. In this section we shall introduce the various objects entering the theory. Let $\Gamma_{L}$ be the square lattice of side $L$, with periodic boundary conditions:
\begin{equation}
\Gamma_{L} = \mathbb{Z}^{2} / L \mathbb{Z}^{2}\;.
\end{equation}
We shall denote by $x$ the points on $\Gamma_{L}$, which we will also call the vertices of $\Gamma_{L}$. Let us denote by $E(\Gamma_{L})$ the set of edges of $\Gamma_{L}$, which will be denoted by pairs $(x, x+ e_{\mu})$, with $\mu = 1,2$ and $e_{\mu}$ the standard basis of $\mathbb{R}^{2}$. We shall use the short-hand notation $(x, \mu)$ to denote the edge $(x, x+ e_{\mu})$. We shall denote by $P(\Gamma_{L})$ the set of elementary lattice plaquettes of $\Gamma_{L}$, which we will denote by $\Lambda$.

\subsection{Pure gauge sector}\label{sec:puregauge}
For each edge $(x,\mu)$ we associate a finite dimensional Hilbert space, which we identify with $\mathbb{C}^{2}$. The total Hilbert space of the gauge sector of the model is:
\begin{equation}
\mathcal{H}^{\text{g}}_{L} = \bigotimes_{(x,\mu) \in E(\Gamma_{L})} \mathbb{C}^{2} \simeq \mathbb{C}^{2|E(\Gamma_{L})|}\;.
\end{equation}
On this space, a special role will be played by the Pauli matrices $X_{x, \mu}$, $Z_{(x,\mu)}$. Recall the anticommutation relations:
\begin{equation}\label{eq:XZ}
\begin{split}
X_{x, \mu} Z_{x', \nu} &= (-1)^{\delta_{x, x'} \delta_{\mu,\nu}}  Z_{x', \nu} X_{x, \mu}\\
X_{x, \mu} X_{x', \nu} &=    X_{x', \nu} X_{x, \mu}\\
Z_{x, \mu} Z_{x', \nu} &= Z_{x', \nu} Z_{x, \mu}\\
X_{x, \mu}^2 &= Z_{x, \mu}^2 = \mathbbm{1}_{x, \mu}\;.   
\end{split}
\end{equation}
For each vertex $x \in \Gamma_{L}$, we define the star operator as, see Fig. \ref{fig:star}:
\begin{equation}\label{star}
A_{x} := X_{x, 1} X_{x, 2} X_{x-e_{1},1} X_{x-e_{2},2}\;.
\end{equation}
The operator $A_{x}$ implements the following transformation on the $Z$ operators:
\begin{equation}
A_{x}^{*} Z_{(x',\nu)} A_{x} = \left\{ \begin{array}{cc} -Z_{(x',\nu)} & \text{if $(x',\nu) \in \{(x, 1), (x, 2), (x-e_{1},1), (x-e_{2},2)\}$} \\ Z_{(x',\nu)} & \text{otherwise.} \end{array} \right.
\end{equation}
That is, the $A_{x}$ operators flip the signs of the $Z$ fields associated with the bonds touching the vertex $x$. 
\begin{figure}
\centering
\begin{tikzpicture}[scale=0.8]
\draw (0,0) grid (4,4);
\draw[->-] (0,4) -- (4,4);
\draw[->-] (0,0) -- (4,0);
\draw[->>-] (0,0) -- (0,4);
\draw[->>-] (4,0) -- (4,4);
\draw[red,fill=red] (1.5,2) circle (.09 cm);
\draw[red,fill=red] (2.5,2) circle (.09 cm);
\draw[red,fill=red] (2,1.5) circle (.09 cm);
\draw[red,fill=red] (2,2.5) circle (.09 cm);
\draw[red, line width = 0.07 cm] (2,1) -- (2,3);
\draw[red, line width = 0.07 cm] (1,2) -- (3,2);
\node[above right] (a) at (2,2) {\scalebox{0.8}{$x$}};
\node[below] (b) at (2,-0.5) {\scalebox{1.6}{$A_x$}};
%\draw[thick, ->-] (0,0) -- (1,1);
%\draw[dashed, thick] 
%\draw[densely dotted] (0.5, 0.5) -- (3.5,0.5);
%\begin{scope}[xshift=6cm]
%\draw (0,0) grid (5,5);
%\draw[thick, red, fill = red, fill opacity = 0.3] (0,0)--(0,4) -- (4,4) -- (4,0) -- (0,0);
%\end{scope}
\end{tikzpicture}
\caption{Graphical representation of the Star Operator.}
    \label{fig:star}
\end{figure}
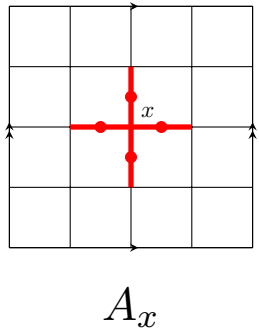
It is easy to check the following properties:
\begin{enumerate}
\item $[A_{x},  A_{x'}] = 0$
\item $A_{x}^2=\mathbbm{1}$
\item $\prod_{x \in \Gamma_{L}} A_{x} =\mathbbm{1}$.
\end{enumerate}
The last property can be viewed as a lattice version of Stokes' theorem ($\Gamma_{L}$ has no boundary). From these properties, we can define the family of projectors:
\begin{equation}
\frac{1 + A_{x}}{2}\;,\qquad \frac{1 - A_{x}}{2}\qquad \forall x\in \Gamma_{L}\;,
\end{equation}
which project over the subspace of $\mathbb{C}^{2^{4}}$ associated with the eigenvalue $+1$, resp. $-1$, of $A_{x}$. Physically, we shall say that the operator $A_{x}$ counts the number of charges associated with the vertex $x$: the eigenvalue $-1$ is associated with one charge present at $x$, while the eigenvalue $+1$ is associated with zero charges present at $x$; this point will be further discussed below.

Since the operators $A_{x}$ can be simultaneously diagonalized, the Hilbert space $\mathcal{H}^{\text{g}}_{L}$ splits into the direct sum of subspaces labelled by their eigenvalues; these subspaces are also called superselection sectors. The physical Hilbert space of the gauge sector is defined as the superselection sector with no background charges. It is:
\begin{equation}\label{eq:physH}
\mathcal{H}^{\text{phys}}_{L} := \big\{ \psi \in \mathcal{H}^{\text{g}}_{L} \, \big|\, A_{x} \psi = \psi\;,\quad \text{for all $x\in \Gamma_{L}$} \big\}\;;
\end{equation}
equivalently, we can represent the physical Hilbert space as:
\begin{equation}\label{eq:physpure}
\mathcal{H}^{\text{phys}}_{L} = \bigotimes_{x\in \Gamma_{L}} \left( \frac{1 + A_{x}}{2} \right) \mathcal{H}^{\text{g}}_{L}\;.
\end{equation}
In physical terms, we say that any vector in $\mathcal{H}^{\text{phys}}_{L}$ satisfies the vacuum Gauss' law. To understand this terminology, we think $X_{x,\mu}$ as the exponential of an electric field, $X_{x,\mu} = e^{i \pi E_{x,\mu}}$, with $E_{x,\mu}$ an operator with eigenvalues $0, 1$. Let us denote by $n_{x} = 0,1$ the number of fermionic charges sitting at $x$. The Gauss' law for the classical bond electric field reads $\text{d}_{x,1} E_{x,1} + \text{d}_{x,2} E_{x,2} = n_{x}$, with $\text{d}_{x,\mu}$ the discrete derivative, $\text{d}_{x,\mu} f(x) = f(x + e_{\mu}) - f(x)$ and $n_{x}$ the number of charges sitting at $x$. Thus, we say that the quantum state $\psi$ satisfies the Gauss' law if:
\begin{equation}\label{eq:gauss0}
X_{x,1} X_{x-e_{1},1}^{*} X_{x,2} X_{x-e_{2},2}^{*} \psi = e^{i \pi n_{x}} \psi\;.
\end{equation}
Using that $X_{x,\mu}^{2} = 1$, we see that vectors in (\ref{eq:physH}) satisfy (\ref{eq:gauss0}) with $n_{x} = 0$ for all $x$, {\it i.e.} no background charges. To complete the analogy with electromagnetism, we can think of $Z_{x,\mu}$ as the exponential of a magnetic vector potential, $Z_{x,\mu} = e^{i \pi A_{x,\mu}}$, with $A_{x,\mu}$ an operator with eigenvalues $0,1$; the properties (\ref{eq:XZ}) can be viewed as a consequence of the canonical commutation relations for the lattice electromagnetic field, with the normalization convention that the only nontrivial commutator is equal to $1/\pi$. In the following we shall refer to $X_{x,\mu}$ as the electric field operator, and to $Z_{x,\mu}$ as the vector potential operator.
\begin{remark}
Observe that, due to the property $\prod_{x \in \Gamma_{L}} A_{x} =\mathbbm{1}$, only $|\Gamma_{L}|-1$ of the constraints in the definition (\ref{eq:physH}) are independent: the physical Hilbert space has dimension $2^{|E(\Gamma_{L})|-|\Gamma_{L}|+1} = 2^{|\Gamma_{L}|-1}$.
\end{remark}
Next, we introduce the notion of physical, or gauge-invariant, pure-gauge observables.
\begin{definition}[Physical pure-gauge observables.] The physical pure-gauge observables are given by the elements of the algebra generated by the $X$ and $Z$ operators that commute with all $A_{x}$ operators.
 \end{definition}
 \begin{example}\label{ex:op}
\leavevmode
\begin{enumerate}
\item The simplest examples of gauge invariant operators are provided by polynomials in the electric field operator only. Let us make a few relevant examples (we refer to \cite{Fra} for a more extensive discussion).
\begin{enumerate}
\item The electric field operator $X_{x, \mu}$ on each edge $(x, \mu)$.
\item Let $\Gamma_{L}^{*}$ be the dual lattice of $\Gamma_{L}$. Elements of the dual lattice are given by the centers of the plaquettes of $\Gamma_{L}$, denoted by $x^{*}$. Choosing an open path $\mathcal{C}^*$ in the dual lattice $\Gamma_{L}^*$ ending in the plaquettes $x^*$ and $x'^*$, consider:
\begin{equation}
\tau^X_{(x^* ,x'^*)}:=   \prod_{(x, \mu)\cap \mathcal{C}^* \neq \emptyset} X_{x, \mu}\;;
\end{equation}
this operator corresponds to the product of electric field operator on the edges threaded by the path $\mathcal{C}^*$, and it is called the monopole pair creation operator.
\item Choosing a closed path $\mathcal{C}^*$ in the dual lattice $\Gamma_{L}^*$, consider:
\begin{equation}\label{thooft}
W^*_{\mathcal{C}^*}:=   \prod_{(x, \mu)\cap \mathcal{C}^* \neq \emptyset} X_{x,\mu}\;;
\end{equation}
this operator is called the ’t Hooft magnetic loop operator. In the physical Hilbert space, deformations of $\mathcal{C}^*$ by contractible cycles do not change the operator $ W^*_{\mathcal{C}^*}$.
\end{enumerate}
\item Let $\mathcal{C}$ be a closed path in $\Gamma_{L}$. The Wilson loop operator (or holonomy) around $\mathcal{C}$ is:
\begin{equation*}
W_{\mathcal{C}}:= \prod_{(x, \mu) \in \mathcal{C}} Z_{x, \mu}\;.
\end{equation*}
This operator measures the $\mathbb{Z}_2$ magnetic flux piercing the path $\mathcal{C}$ and thus, for contractible paths, the number (mod $2$) of monopoles in the interior of the path. If $\Lambda$ is a plaquette and $\mathcal{C} = \partial \Lambda$, we shall set $W_{\mathcal{C}} = B_{\Lambda}$. More generally, if $\Omega = \cup_{i} \Lambda_i$ and $\mathcal{C} = \partial \Omega$, 
\begin{equation*}
W_{\mathcal{C}} = \prod_{i} B_{\Lambda_i}\;.
\end{equation*}
In particular, if $\Omega = \Gamma_{L}$, since the $\partial \Gamma_{L} = \emptyset$ we also have:
\begin{equation}\label{stokesb}
\prod_{\Lambda \in \Omega} B_{\Lambda} = \mathbbm{1}\;.
\end{equation}
\end{enumerate}
\end{example}

\begin{figure}
\centering
\begin{tikzpicture}[scale=0.8]
\draw (0,0) grid (4,4);
\draw[->-] (0,4) -- (4,4);
\draw[->-] (0,0) -- (4,0);
\draw[->>-] (0,0) -- (0,4);
\draw[->>-] (4,0) -- (4,4);
\draw[red,fill=red] (1.5,2) circle (.09 cm);
\draw[line width = 0.07 cm] (1.5,1.5) node[cross,blue] {};
\draw[line width = 0.07 cm] (1.5,2.5) node[cross,blue] {};
\draw[dashed, thick] (1.5,1.5) -- (1.5,2.5);
\draw[thick, red, line width = 0.07 cm] (1,2) -- (2,2);
\node[below] (a) at (2,-0.5) {\scalebox{1.6}{$X_{(x,1)}$}};
%\draw[thick, ->-] (0,0) -- (1,1);
%\draw[densely dotted] (0.5, 0.5) -- (3.5,0.5);
\begin{scope}[xshift=5.5 cm]
\draw (0,0) grid (4,4);
\draw[->-] (0,4) -- (4,4);
\draw[->-] (0,0) -- (4,0);
\draw[->>-] (0,0) -- (0,4);
\draw[->>-] (4,0) -- (4,4);
\draw[red,fill=red] (1.5,0) circle (.09 cm);
\draw[red,fill=red] (1.5,1) circle (.09 cm);
\draw[red,fill=red] (1.5,2) circle (.09 cm);
\draw[red,fill=red] (1.5,3) circle (.09 cm);
\draw[red,fill=red] (1.5,4) circle (.09 cm);
\draw[dashed, thick] (1.5,0) -- (1.5,4);
\node[right] (a) at (1.45,2.5) {\scalebox{0.8}{$\mathcal{C}^*$}};
\draw[thick, red, line width = 0.07 cm] (1,0) -- (2,0);
\draw[thick, red, line width = 0.07 cm] (1,1) -- (2,1);
\draw[thick, red, line width = 0.07 cm] (1,2) -- (2,2);
\draw[thick, red, line width = 0.07 cm] (1,3) -- (2,3);
\draw[thick, red, line width = 0.07 cm] (1,4) -- (2,4);
\node[below] (a) at (2,-0.5) {\scalebox{1.6}{$W^{*}_{\mathcal{C}^*}$}};
%\draw[thick, ->-] (0,0) -- (1,1);
\end{scope}
\begin{scope}[xshift=11cm]
\draw (0,0) grid (4,4);
\draw[->-] (0,4) -- (4,4);
\draw[->-] (0,0) -- (4,0);
\draw[->>-] (0,0) -- (0,4);
\draw[->>-] (4,0) -- (4,4);
\draw[red,fill=red] (0.5,3) circle (.09 cm);
\draw[red,fill=red] (1,2.5) circle (.09 cm);
\draw[red,fill=red] (2,2.5) circle (.09 cm);
\draw[red,fill=red] (2.5,2) circle (.09 cm);
\draw[dashed, thick] (0.5, 3.5) -- (0.5,2.5) -- (2.5,2.5) -- (2.5,1.5);
\draw[thick, red, line width = 0.07 cm] (0,3) -- (1,3);
\draw[thick, red, line width = 0.07 cm] (1,3) -- (1,2);
\draw[thick, red, line width = 0.07 cm] (2,3) -- (2,2);
\draw[thick, red, line width = 0.07 cm] (2,3) -- (2,3);
\draw[thick, red, line width = 0.07 cm] (2,2) -- (3,2);
\node[above] (b) at (0.5,3.55) {\scalebox{0.8}{$x^*$}};
\node[below] (b) at (2.5,1.47) {\scalebox{0.8}{${x'}^*$}};
\draw[line width = 0.05 cm] (0.5,3.5) node[cross,blue] {};
\draw[line width = 0.05 cm] (2.5,1.5) node[cross,blue] {};
\node[below] (a) at (2,-0.3) {\scalebox{1.6}{$\tau^X_{x^*,{x'}^*}$}};
%\draw[thick, ->-] (0,0) -- (1,1);
\end{scope}
\begin{scope}[yshift=-6cm]
\draw (0,0) grid (4,4);
\draw[->-] (0,4) -- (4,4);
\draw[->-] (0,0) -- (4,0);
\draw[->>-] (0,0) -- (0,4);
\draw[->>-] (4,0) -- (4,4);
\draw[blue,fill=blue] (1.5,2) circle (.09 cm);
\draw[blue,fill=blue] (1.5,3) circle (.09 cm);
\draw[blue,fill=blue] (2,2.5) circle (.09 cm);
\draw[blue,fill=blue] (1,2.5) circle (.09 cm);
\draw[thick, blue, line width =0.07 cm] (1,2)--(2,2)--(2,3)--(1,3)--(1,2);
\node (a) at (1.5,2.5)  {\scalebox{1.6}{$\Lambda$}};
\node[below] (a) at (2,-0.5) {\scalebox{1.6}{$B_{\Lambda}$}};
%\draw[thick, ->-] (0,0) -- (1,1);
\end{scope}
\begin{scope}[yshift=-6cm, xshift = 5.5cm]
\draw (0,0) grid (4,4);
\draw[->-] (0,4) -- (4,4);
\draw[->-] (0,0) -- (4,0);
\draw[->>-] (0,0) -- (0,4);
\draw[->>-] (4,0) -- (4,4);
\draw[blue,fill=blue] (1.5,1) circle (.09 cm);
\draw[blue,fill=blue] (2.5,1) circle (.09 cm);
\draw[blue,fill=blue] (3,1.5) circle (.09 cm);
\draw[blue,fill=blue] (2.5,2) circle (.09 cm);
\draw[blue,fill=blue] (2,2.5) circle (.09 cm);
\draw[blue,fill=blue] (1.5,3) circle (.09 cm);
\draw[blue,fill=blue] (1,2.5) circle (.09 cm);
\draw[blue,fill=blue] (1,1.5) circle (.09 cm);
\draw[thick, blue, line width =0.07 cm] (1,1)--(3,1)--(3,2)--(2,2)--(2,3)--(1,3)--(1,1);
\node (a) at (1.5,1.5)  {\scalebox{1.6}{$\Omega$}};
\node[below] (b) at (1.8,1) {\scalebox{0.8}{$\partial\Omega = \mathcal{C}$}};
\node[below] (a) at (2,-0.5) {\scalebox{1.6}{$W_{\mathcal{C}}$}};
%\draw[thick, ->-] (0,0) -- (1,1);
\end{scope}
\begin{scope}[yshift=-6cm, xshift = 11cm]
\draw (0,0) grid (4,4);
\draw[->-] (0,4) -- (4,4);
\draw[->-] (0,0) -- (4,0);
\draw[->>-] (0,0) -- (0,4);
\draw[->>-] (4,0) -- (4,4);
\draw[blue,fill=blue] (1,0.5) circle (.09 cm);
\draw[blue,fill=blue] (1,1.5) circle (.09 cm);
\draw[blue,fill=blue] (1,2.5) circle (.09 cm);
\draw[blue,fill=blue] (1,3.5) circle (.09 cm);
\draw[thick, blue, line width =0.07 cm] (1,0)--(1,4);
\node[right] (a) at (1,2.5)  {\scalebox{0.8}{$\mathcal{C}$}};
\node[below] (a) at (2,-0.5) {\scalebox{1.6}{$W_{\mathcal{C}}$}};
%\draw[thick, ->-] (0,0) -- (1,1);
\end{scope}
\end{tikzpicture}
\caption{Graphical representation of operators introduced in Example \ref{ex:op}.}
\end{figure}
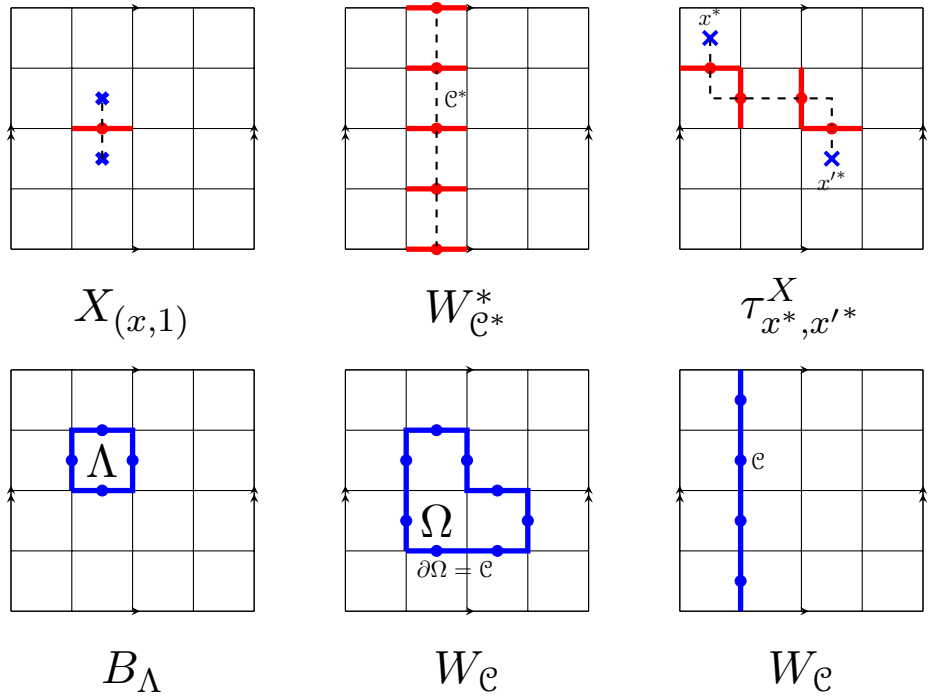
Next, we introduce the Hamiltonian for the gauge sector of the $\mathbb{Z}_{2}$-lattice gauge theory. The pure gauge Hamiltonian is:
\begin{equation}\label{eq:Hgaugeg}
H_{g}(\varepsilon) := - \sum_{\Lambda \in P(\Gamma_{L})} B_{\Lambda} -\varepsilon \sum_{(x,\mu) \in E(\Gamma_{L})} X_{x,\mu}\;.
\end{equation}
The first term favors configurations with $B_{\Lambda} = 1$ (absence of magnetic monopoles), while the second term introduces the possibility for monopole creation/annihilation. 

In this work, we will focus on the case $\varepsilon=0$, where the dynamics generated by $H_{g}(0)$ preserves the number of monopoles in the system. The ground state of $H_{g}(0)$ is described by the monopole-free condition $B_{\Lambda} = 1$ for all $\Lambda \in P(\Gamma_{L})$. Enforcing this condition (and recalling that not all these conditions are independent, due to \eqref{stokesb}), the number of ground states of the pure gauge Hamiltonian for $\varepsilon = 0$ on the physical Hilbert space is:
\footnote{In the left-hand side of (\ref{eq:Euler}): at the numerator, we have the total number of spin configurations, with no constraints; at the denominator, we take into account the presence of the constraints $A_{x} \psi = \psi$ for all $x\in \Gamma_{L}$, which are not all independent due to $\prod_{x\in \Gamma_{L}} A_{x} = 1$; and the presence of the constraints $B_{\Lambda} = 1$ for all $\Lambda \in P(\Gamma_{L})$, which are not all independent, due to $\prod_{\Lambda \in P(\Gamma_{L})} B_{\Lambda} = 1$.}
\begin{equation}\label{eq:Euler}
\frac{2^{|E(\Gamma_{L})|}}{2^{|\Gamma_{L}| - 1} 2^{|P(\Gamma_{L})| - 1}} = 2^{-\chi(\Gamma_{L}) +2} = 4  
\end{equation}
where $\chi(\Gamma_{L})$ is the Euler characteristic of $\Gamma_{L}$. Each ground state is characterized by a choice of $\mathbb{Z}_2$-fluxes along the non contractible loops of $\Gamma_{L}$.
\subsection{Coupling with fermionic matter}
We will now describe the coupling of the gauge fields with fermionic matter, in a second-quantized formalism. The Hilbert space of the fermionic degrees of freedom is the Fock space,
\begin{equation}
\mathcal{F}_{L} = \mathbb{C} \oplus \bigoplus_{n\geq 1} \ell^{2}(\Gamma_{L})^{\wedge n}\simeq \mathbb{C}^{2^{|\Gamma_{L}|}}\;.
\end{equation}
For any $x\in \Gamma_{L}$, we associate the usual fermionic creation and annihilation operators $a^{+}_{x}$ and $a^{-}_{x}$, which satisfy the canonical anticommutation relations:
\begin{equation}
\{ a^{+}_{x}, a^{-}_{y} \} = \delta_{x,y}\;,\qquad  \{ a^{+}_{x}, a^{+}_{y} \} = \{ a^{-}_{x}, a^{-}_{y} \}= 0\;,
\end{equation}
and with the understanding that $a^{+}_{x} = (a^{-}_{x})^{*}$. The algebra $\mathcal{A}$ of fermionic observables is given by the (self-adjoint) polynomials in the creation and annihilation operators. The simplest example is the number operator,
\begin{equation}
N = \sum_{x\in \Gamma_{L}} a^{+}_{x} a^{-}_{x}\;.
\end{equation}
Given a fermionic observable $\mathcal{O}$, we define the parity automorphism $\mathcal{P}$ as:
\begin{equation}
\mathcal{P}(\mathcal{O}) = (-1)^{N} \mathcal{O} (-1)^{N}\;.
\end{equation}  
Being an involution, the eigenvalues of $\mathcal{P}$ are $\pm 1$. We denote by $\mathcal{A}_{+}$ the set of polynomials which are even under $\mathcal{P}$ (eigenvalue $+1$) and by $\mathcal{A}_{-}$ the algebra of the polynomials that are odd under $\mathcal{P}$ (eigenvalue $-1$). In the following, we shall always consider physical observables that belong to the even subalgebra $\mathcal{A}_{+}$. In other words, all the physical observables we shall consider in the following satisfy the global symmetry:
\begin{equation}\label{eq:thetaO}
\mathcal{O} = \mathcal{P} (\mathcal{O})\;.
\end{equation}
We shall now proceed to gauge this symmetry, and to introduce the lattice gauge theory. To this end, we consider the total Hilbert space, for matter coupled to gauge fields,
\begin{equation}
\mathcal{H}_{L} = \mathcal{F}_{L} \otimes \mathcal{H}^{\text{g}}_{L}\;.
\end{equation} 
In order to introduce the physical Hilbert space for the combined system, we will consider states satisfying the Gauss' law in presence of dynamical matter fields. To this end, let us define the $\mathbb{Z}_{2}$-charge operator as:
\begin{equation}
Q_{x} = A_{x} (-1)^{n_{x}}\;,
\end{equation}
where $n_{x} = a^{+}_{x} a^{-}_{x}$. The $\mathbb{Z}_{2}$-charge operator satisfies the following properties, in extension to the properties introduced in Section \ref{sec:puregauge}:
\begin{enumerate}
\item $[Q_{x}, Q_{x'}] = 0$ for all $x,x'$;
\item $Q_{x}^{2} = \mathbbm{1}$;
\item  $\prod_{x \in \Gamma_{L}} Q_{x} = (-1)^{N}$.
\end{enumerate}
The spectrum of $Q_{x}$ consists of $\pm 1$. By the discussion after (\ref{eq:physpure}), we say that a state $\psi$ in the Hilbert space $\mathcal{H}_{L}$ satisfies the Gauss' law if $Q_{x} \psi = \psi$ for all $x\in \Gamma_{L}$. Thus, we define the physical Hilbert space for the coupled system as:
\begin{equation}\label{eq:Htotproj}
\mathcal{H}^{\text{phys}}_{L} := \bigotimes_{x\in \Gamma_{L}} \left( \frac{1 + Q_{x}}{2} \right)\mathcal{H}_{L}\;.
\end{equation}
Next, we extend the notion of physical, or gauge-invariant, observables, in presence of matter fields.
\begin{definition}[Physical observables]\label{def:gi}  The physical observables are given by self-adjoint polynomials in the $X, Z$ operators and in the fermionic operators $a^{\pm}$, that commute with all $Q_{x}$ operators. We shall say that a physical observable $\mathcal{O}$ is magnetic if it belongs to the subalgebra generated by the vector potential operators and by the fermionic operators.
\end{definition}
An example of gauge-invariant observable is given by the product of creation and annihilation operators, dressed by a $\mathbb{Z}_{2}$-Wilson line. That is:
\begin{equation}
W_{x,x'} = a^{+}_{x} \Big(\prod_{(z,\mu) \in \mathcal{C}} Z_{(z,\mu)}\Big) a^{-}_{x'}\;,
\end{equation}
where $\mathcal{C}$ is a path connecting $x$ to $x'$; see Fig. \ref{fig:path}.
\subsection{The Hamiltonian and the Gibbs state}
We now have all the ingredients to introduce the Hamiltonian of the full system, and to define its grand-canonical Gibbs state. 
\begin{figure}
\centering
\begin{tikzpicture}[scale=0.8]
\draw (0,0) grid (4,4);
\draw[->-] (0,4) -- (4,4);
\draw[->-] (0,0) -- (4,0);
\draw[->>-] (0,0) -- (0,4);
\draw[->>-] (4,0) -- (4,4);
\draw[blue,fill=blue] (1,2.5) circle (.09 cm);
\draw[blue,fill=blue] (1.5,2) circle (.09 cm);
\draw[blue,fill=blue] (2.5,2) circle (.09 cm);
\draw[blue,fill=blue] (3,1.5) circle (.09 cm);
\draw[blue, line width = 0.07 cm] (1,3) -- (1,2) -- (3,2) -- (3,1);
\draw[black,fill=black] (1,3) circle (.09 cm);
\draw[black,fill=black] (3,1) circle (.09 cm);
\node[above] (d) at (2.5,2.1) {\scalebox{0.8}{$\mathcal{C}$}};
\node[above left] (a) at (1,3) {\scalebox{0.8}{$x$}};
\node[below right] (c) at (3,1) {\scalebox{0.8}{$x'$}};
\node[below] (b) at (2,-0.5) {\scalebox{1.6}{$W_{x,x'}$}};
%\draw[thick, ->-] (0,0) -- (1,1);
%\draw[dashed, thick] 
%\draw[densely dotted] (0.5, 0.5) -- (3.5,0.5);
%\begin{scope}[xshift=6cm]
%\draw (0,0) grid (5,5);
%\draw[thick, red, fill = red, fill opacity = 0.3] (0,0)--(0,4) -- (4,4) -- (4,0) -- (0,0);
%\end{scope}
\end{tikzpicture}
\caption{Graphical representation of $W_{x,x'}$.}
    \label{fig:path}
\end{figure}
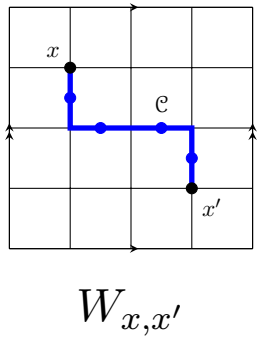

The Hamiltonian is:
\begin{equation}
H(\varepsilon) := H_{g}(\varepsilon) + H_{f}\;,
\end{equation}
where $H_{f}$ describes the fermionic sector of the Hamiltonian, and it is given by the magnetic lattice Laplacian in second quantization:
\begin{equation}\label{eq:Hmatter}
H_{f} := -t \sum_{x\in \Gamma_{L}} \sum_{\mu = 1,2} a^{+}_{x} Z_{x,\mu} a^{-}_{x+e_{\mu}} + \text{h.c.}\;.
\end{equation}
The constant $t>0$ is the hopping parameter of the model. Since we set to $1$ the prefactor of the magnetic term in (\ref{eq:Hgaugeg}), $t$ can be thought as the gauge coupling of the system. Observe that $H(\varepsilon)$ is a gauge-invariant observable, in the sense of Definition \ref{def:gi}. 

Let us introduce the short-hand notation $\mathcal{H}_{L} \equiv \mathcal{H}^{\text{tot}; \text{phys}}_{L}$ for the total, physical Hilbert space. Given a physical observable $\mathcal{O}$, the Gibbs state of the model is:
\begin{equation}\label{eq:gibbs}
\langle \mathcal{O} \rangle^{\varepsilon}_{\beta,\mu,L} := \frac{\Tr_{\mathcal{H}^{\text{phys}}_{L}} \mathcal{O} e^{-\beta (H(\varepsilon) - \mu N)}}{\Tr_{\mathcal{H}^{\text{phys}}_{L}} e^{-\beta (H(\varepsilon) - \mu N)}}\;,
\end{equation}
where $\mu$ is the chemical potential of the system. In view of (\ref{eq:Htotproj}), we can rewrite the expectation value in (\ref{eq:gibbs}) as:
\begin{equation}
\langle \mathcal{O} \rangle^{\varepsilon}_{\beta,\mu,L} = \frac{\Tr_{\mathcal{H}_{L}} \prod_{x\in \Gamma_{L}} \left( \frac{1 + Q_{x}}{2} \right) \mathcal{O} e^{-\beta (H(\varepsilon) - \mu N)}}{\Tr_{\mathcal{H}_{L}} \prod_{x\in \Gamma_{L}} \left( \frac{1 + Q_{x}}{2} \right) e^{-\beta (H(\varepsilon) - \mu N)}}\;.
\end{equation}
We will be interested in the properties of the Gibbs state of the system, at low temperatures. To this end, it is convenient to preliminarily explore the spectral properties of the matter Hamiltonian. We shall say that two set of eigenvalues ${\bm \sigma} = \{\sigma_{x,\mu}\}$ and ${\bm \sigma}' = \{\sigma'_{x,\mu}\}$ of the vector potential operators ${\bm Z} = \{Z_{x,\mu}\}$ are gauge equivalent if they have the same fluxes around all plaquettes of $\Gamma_{L}$ and around all non-contractible cycles of $\Gamma_{L}$.  Also, we shall denote by $H_{f}({\bm \sigma})$ the fermionic Hamiltonian (\ref{eq:Hmatter}) after replacing the operators $\{Z_{\mu,x}\}$ with their eigenvalues. The next proposition is a direct adaptation of \cite{LL}, to the case of periodic boundary conditions. It shows that gauge equivalent configurations give rise to the same spectrum of $H_{f}({\bm \sigma})$.
\begin{proposition}[Gauge equivalent configurations are isospectral]\label{prp:gaugeequiv} If $\{\sigma_{x,\mu}\}$ and $\{\sigma'_{x,\mu}\}$ are gauge equivalent, there exists a unitary operator $U: \mathcal{F}_{L} \to \mathcal{F}_{L}$ such that:
\begin{equation}
U H_{f}(\bm{\sigma}) U^{*} = H_{f}(\bm{\sigma}')\;.
\end{equation} 
\end{proposition}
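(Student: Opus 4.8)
The plan is to reduce gauge equivalence to the existence of a $\mathbb{Z}_{2}$-valued gauge transformation living on the vertices, and then to implement that transformation on $\mathcal{F}_{L}$ by the second quantization of a single-particle diagonal unitary. Concretely, I would write the two configurations additively as $\sigma_{x,\mu} = (-1)^{a_{x,\mu}}$ and $\sigma'_{x,\mu} = (-1)^{a'_{x,\mu}}$ with $a_{x,\mu}, a'_{x,\mu} \in \{0,1\}$, and introduce the bond field $b_{x,\mu} := a'_{x,\mu} - a_{x,\mu} \pmod 2$, which records the discrepancy between the two configurations. The hypothesis that $\bm{\sigma}$ and $\bm{\sigma}'$ carry the same plaquette fluxes says precisely that the sum of $b$ around every $\partial\Lambda$ vanishes mod $2$ (orientations being irrelevant mod $2$), while equality of the fluxes along the two non-contractible cycles says that the sum of $b$ along each generating loop vanishes.

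The central step, and the place where the periodic boundary conditions enter, in contrast with the planar setting of \cite{LL}, is to produce a vertex function $\lambda: \Gamma_{L} \to \{0,1\}$ with $b_{x,\mu} = \lambda(x) + \lambda(x+e_{\mu}) \pmod 2$. I would construct $\lambda$ by discrete integration of $b$: fix a base vertex $x_{0}$, set $\lambda(x_{0}) = 0$, and for any $x$ put $\lambda(x) = \sum_{(z,\nu) \in \gamma} b_{z,\nu} \pmod 2$ along an arbitrary lattice path $\gamma$ from $x_{0}$ to $x$. Well-definedness is the heart of the matter: it requires the sum of $b$ around every closed lattice loop to vanish mod $2$. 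On the torus any closed loop is, mod $2$, a combination of plaquette boundaries and of the two generating non-contractible cycles; the plaquette contributions vanish because $\bm{\sigma}$ and $\bm{\sigma}'$ share all plaquette fluxes, and the two generating-cycle contributions vanish by the non-contractible-flux hypothesis. Hence $\lambda$ is well-defined, and evaluating it along the single edge $(x,\mu)$ gives $\lambda(x+e_{\mu}) - \lambda(x) = b_{x,\mu}$, i.e.
\begin{equation*}
\sigma'_{x,\mu} = \tau_{x}\, \sigma_{x,\mu}\, \tau_{x+e_{\mu}}\;,\qquad \tau_{x} := (-1)^{\lambda(x)} \in \{\pm 1\}\;.
\end{equation*}

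Finally, I would define the single-particle unitary $u$ on $\ell^{2}(\Gamma_{L})$ by $(u f)(x) = \tau_{x} f(x)$ and take $U = \Gamma(u)$, its second quantization on $\mathcal{F}_{L}$. Since $\tau_{x}$ is real and equal to $\pm 1$, this operator acts by $U a^{\pm}_{x} U^{*} = \tau_{x} a^{\pm}_{x}$, so a one-line substitution in \eqref{eq:Hmatter} yields
\begin{equation*}
U H_{f}(\bm{\sigma}) U^{*} = -t \sum_{x \in \Gamma_{L}} \sum_{\mu = 1,2} a^{+}_{x}\, (\tau_{x}\, \sigma_{x,\mu}\, \tau_{x+e_{\mu}})\, a^{-}_{x+e_{\mu}} + \text{h.c.} = H_{f}(\bm{\sigma}')\;,
\end{equation*}
which is the claim. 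The only genuinely delicate point is the topological one above; once the gauge transformation $\tau$ is in hand, the second-quantized implementation and the verification are routine.
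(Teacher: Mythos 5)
Your proposal is correct and follows essentially the same route as the paper: you construct the vertex gauge function by discrete integration of the bond discrepancy field, using equality of plaquette fluxes and of the fluxes along the two non-contractible cycles to get path-independence, exactly as the paper does (there written multiplicatively via $w(x,\mu)=\sigma_{x,\mu}\sigma'_{x,\mu}$). Your $U=\Gamma(u)$ is the same operator as the paper's $U=(-1)^{\sum_{x}\phi_{x}n_{x}}$, so the implementations on Fock space coincide as well.
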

\begin{proof} We repeat the argument of \cite{LL}, taking into account fluxes around non-contractible cycles, which are possible due to the periodic boundary conditions. Let us define:
\begin{equation}
w(x,\mu) := \sigma_{x,\mu} \sigma'_{x,\mu}\;.
\end{equation}
The product of $w(x,\mu)$ along any curve $\mathcal{C}$ on $E(\Gamma_{L})$, joining two vertices $x$ and $x'$, only depends on the endpoints, and does not depend on the path joining them. To see this, let $\mathcal{C}'$ be another path joining $x$ to $x'$. Then, the composition $\mathcal{C} \circ \mathcal{C}' =: \widetilde{\mathcal{C}}$ is a cycle. The product:
\begin{equation}\label{eq:prodC}
\prod_{(x,\mu) \in \widetilde{\mathcal{C}}} w(x,\mu)
\end{equation}
can be decomposed as a product over contractible and non-contractible cycles (simply using that $w(x,\mu)^{2} = 1$). The product over all such cycles is $1$, by gauge equivalence;  thus, (\ref{eq:prodC}) is equal to one.

Consider the points $x$ and $x + e_{\mu}$, forming the edge $(x,\mu)$. Pick an arbitrary point $x_{0} \in \Gamma_{L}$. Consider the curve $\mathcal{C}$, connecting $x_{0}$ to $x$, and the curve $\mathcal{C}'$, connecting $x + e_{\mu}$ to $x_{0}$. Thus, $\mathcal{C} \circ (x,\mu) \circ \mathcal{C}'$ forms a loop, and hence, by the previous discussion:
\begin{equation}
\Big( \prod_{(x',\mu') \in \mathcal{C}} w(x',\mu') \Big) w(x,\mu) \Big( \prod_{(x'',\mu'') \in \mathcal{C}'} w(x'',\mu'') \Big) = 1\;.
\end{equation}
Thus, if we call:
\begin{equation}
(-1)^{\phi_{x}} := \Big( \prod_{(x',\mu') \in \mathcal{C}} w(x',\mu') \Big)\;,\qquad (-1)^{\phi_{x+e_{\mu}}} = \Big( \prod_{(x'',\mu'') \in \mathcal{C}'} w(x'',\mu'') \Big)\;, 
\end{equation}
we obtain:
\begin{equation}\label{eq:phix}
1 = (-1)^{\phi_{x}} \sigma_{x,\mu} \sigma'_{x,\mu} (-1)^{\phi_{x+e_{\mu}}} \Rightarrow \sigma'_{x,\mu}  =  (-1)^{\phi_{x}} \sigma_{x,\mu} (-1)^{\phi_{x+e_{\mu}}}\;.
\end{equation}
Consider now the unitary operator $U$ on the Fock space $\mathcal{F}_{L}$:
\begin{equation}
U = (-1)^{\sum_{x} \phi_{x} n_{x}}\;,
\end{equation}
with $\{\phi_{x}\}$ as in (\ref{eq:phix}) (recall that $(-1)^{\phi_{x}} \in \{\pm 1\}$). The operator $U$ is unitary, and it implements the phase transformation:
\begin{equation}\label{eq:gaugea}
U^{*} a^{\pm}_{x} U = (-1)^{\phi_{x}} a^{\pm}_{x}\;.
\end{equation}
Therefore, combining (\ref{eq:gaugea}) and (\ref{eq:phix}), we get:
\begin{equation}
U^{*} H_{f}(\bm{\sigma}) U = H_{f}(\bm{\sigma}')\;,
\end{equation}
which concludes the proof of the proposition.
\end{proof}
\begin{remark} Thus, the spectrum of $H_{f}(\bm{\sigma})$ depends on the configuration $\{\sigma_{x,\mu}\}$ only via the magnetic fluxes though the plaquettes and through the non-contractible cycles.
\end{remark}
From now on, we will consider the model in the absence of electric fields, $\varepsilon = 0$. In this situation, the gauge fields are effectively represented by classical spins $\pm 1$, attached to the bonds of the lattice model. We will set $H = H(\varepsilon)|_{\varepsilon = 0}$, and we will denote by $H({\bm \sigma})$ the Hamiltonian $H$ evaluated on a given gauge field configuration:
\begin{equation}\label{hami}
H({\bm \sigma}) = - \sum_{\Lambda \in P(\Gamma_{L})} \Big[\prod_{(x,\mu) \in \partial \Lambda} \sigma_{x,\mu}\Big] + H_{f}({\bm \sigma})\;.
\end{equation}
We will also denote $\langle \cdot \rangle_{\beta, \mu, L} := \langle \cdot \rangle_{\beta, \mu, L}^{\varepsilon}|_{\varepsilon = 0}$ the Gibbs state in the absence of electric fields. We will consider observables in the algebra generated by the vector potential operators and by the creation/annihilation operators, {\it i.e.} magnetic observables in the sense of Definition \ref{def:gi}; as it will be discussed later in Remark \ref{rem:nonma}, this is not a loss of generality. Given an observable $\mathcal{O}$ of this type, we shall denote $\mathcal{O}({\bm \sigma}) = \langle {\bm \sigma}, \mathcal{O} {\bm \sigma}\rangle$ its projection on the fermionic sector, where $|{\bm \sigma} \rangle$ is an eigenstate of the vector potential operators ${\bm Z} = \{Z_{x,\mu}\}$ with eigenvalues ${\bm \sigma} = \{\sigma_{x,\mu}\}$, with $(x,\mu) \in E(\Gamma_{L})$. 

For $\varepsilon = 0$, the Gauss' law constraint in the definition of the physical Hilbert space (\ref{eq:Htotproj}) can be resolved explicitly. This is the content of the next proposition.
\begin{proposition}[Resolution of the gauge constraint]\label{prp:reso} Let $\varepsilon = 0$, and let $\mathcal{O}$ be a magnetic gauge invariant observable. Then:
\begin{equation}
\langle \mathcal{O} \rangle_{\beta, \mu, L} = \frac{ \sum_{{\bm \sigma}} \Tr_{\mathcal{F}_{L}}  (1 + (-1)^{N}) \mathcal{O}(\bm{\sigma}) e^{-\beta (H(\bm{\sigma}) - \mu N)} }{ \sum_{{\bm \sigma}} \Tr_{\mathcal{F}_{L}}  (1 + (-1)^{N})  e^{-\beta (H(\bm{\sigma}) - \mu N)} }\;.
\end{equation}
In particular, the partition function of the system can be rewritten as:
\begin{equation}
Z_{\beta,\mu,L} = \sum_{{\bm \sigma}} \Tr_{\mathcal{F}_{L}} \frac{(1 + (-1)^{N})}{2^{|\Gamma_{L}|}} e^{-\beta (H(\bm{\sigma}) - \mu N)}\;.
\end{equation}
\end{proposition}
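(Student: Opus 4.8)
The plan is to resolve the Gauss-law projector by expanding it over subsets of vertices and exploiting that, at $\varepsilon = 0$, both $H$ and the magnetic observable $\mathcal{O}$ are diagonal in the eigenbasis of the vector potential operators. First I would fix the $Z$-eigenbasis $\{|\bm\sigma\rangle\}$ of $\mathcal{H}^{\text{g}}_{L}$, labelled by $\bm\sigma = \{\sigma_{x,\mu}\}$ with $\sigma_{x,\mu}\in\{\pm 1\}$, and write every trace over $\mathcal{H}_{L} = \mathcal{F}_{L}\otimes\mathcal{H}^{\text{g}}_{L}$ as $\Tr_{\mathcal{H}_{L}}[\,\cdot\,] = \sum_{\bm\sigma}\Tr_{\mathcal{F}_{L}}\langle\bm\sigma|\,\cdot\,|\bm\sigma\rangle$. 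Since $\mathcal{O}$ is magnetic and $\varepsilon = 0$, both $\mathcal{O}$ and $e^{-\beta(H-\mu N)}$ commute with all $Z_{x,\mu}$, so $\mathcal{O}|\bm\sigma\rangle = |\bm\sigma\rangle\otimes\mathcal{O}(\bm\sigma)$ and $e^{-\beta(H-\mu N)}|\bm\sigma\rangle = |\bm\sigma\rangle\otimes e^{-\beta(H(\bm\sigma)-\mu N)}$, with $\mathcal{O}(\bm\sigma)$ and $H(\bm\sigma)$ acting on $\mathcal{F}_{L}$ only.

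Next I would expand the projector. Using that the star operators $A_{x}$ commute among themselves and with all fermionic operators, together with $Q_{x} = A_{x}(-1)^{n_{x}}$, one obtains
\[
\prod_{x\in\Gamma_{L}}\frac{1+Q_{x}}{2} = \frac{1}{2^{|\Gamma_{L}|}}\sum_{S\subseteq\Gamma_{L}}\Big(\prod_{x\in S}A_{x}\Big)(-1)^{N_{S}}, \qquad N_{S} := \sum_{x\in S}n_{x}.
\]
The crucial input is the action of $\prod_{x\in S}A_{x}$ on the $Z$-eigenbasis: each $A_{x}$ is a product of four $X$-operators, which act as pure bit-flips on $Z$-eigenstates, so $\prod_{x\in S}A_{x}|\bm\sigma\rangle = |\bm\sigma^{S}\rangle$ with no phase, where $\bm\sigma^{S}$ flips the bond $(x,\mu)$ precisely when exactly one of its endpoints lies in $S$. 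Taking the diagonal matrix element and using that the remaining factors are diagonal in $\bm\sigma$, the gauge overlap collapses to $\langle\bm\sigma^{S}|\bm\sigma\rangle = \delta_{\bm\sigma^{S},\bm\sigma}$.

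Then I would carry out the combinatorial reduction. The condition $\bm\sigma^{S} = \bm\sigma$ holds iff $S$ flips no bond, i.e.\ every edge has either both endpoints in $S$ or neither; since $\Gamma_{L}$ is connected this forces $S = \emptyset$ or $S = \Gamma_{L}$. For $S=\emptyset$ the empty product is $\mathbbm{1}$ and $(-1)^{N_{\emptyset}} = 1$; for $S = \Gamma_{L}$ one uses property (3) of the star operators, $\prod_{x\in\Gamma_{L}}A_{x} = \mathbbm{1}$, and $(-1)^{N_{\Gamma_{L}}} = (-1)^{N}$. These two surviving terms combine into the factor $(1 + (-1)^{N})$, giving
\[
\Tr_{\mathcal{H}_{L}}\Big[\prod_{x}\tfrac{1+Q_{x}}{2}\,\mathcal{O}\,e^{-\beta(H-\mu N)}\Big] = \frac{1}{2^{|\Gamma_{L}|}}\sum_{\bm\sigma}\Tr_{\mathcal{F}_{L}}\big[(1+(-1)^{N})\,\mathcal{O}(\bm\sigma)\,e^{-\beta(H(\bm\sigma)-\mu N)}\big].
\]
The prefactor $2^{-|\Gamma_{L}|}$ cancels in the ratio defining $\langle\mathcal{O}\rangle_{\beta,\mu,L}$, yielding the first claim; taking $\mathcal{O} = \mathbbm{1}$ (so $\mathcal{O}(\bm\sigma) = 1$) gives the stated partition function. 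The main obstacle is this last reduction: one must correctly read off which bonds are flipped by $\prod_{x\in S}A_{x}$ and invoke connectivity to discard all $S\neq\emptyset,\Gamma_{L}$, while checking that the bit-flip action of the $A_{x}$ carries no spurious phase on the $Z$-eigenstates. The gauge invariance of $\mathcal{O}$ enters only to guarantee that $\mathcal{O}$ commutes with $(-1)^{N} = \prod_{x}Q_{x}$, so that the resulting expression is consistent with the projection onto the physical subspace.
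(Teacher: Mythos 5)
Your proposal is correct and follows essentially the same route as the paper's proof: expand the Gauss-law projector $\prod_{x}(1+Q_{x})/2$ over subsets of vertices, use that at $\varepsilon=0$ the operators $\mathcal{O}$ and $e^{-\beta(H-\mu N)}$ are diagonal in the $Z$-eigenbasis while each $A_{x}$ acts as a phase-free bit flip, and conclude that only $S=\emptyset$ and $S=\Gamma_{L}$ contribute, producing the factor $(1+(-1)^{N})/2^{|\Gamma_{L}|}$. The only difference is cosmetic: you spell out the connectivity argument showing that every bond must have zero or two endpoints in $S$, which the paper compresses into the remark that all $X_{x,\mu}$ operators must appear twice.
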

\begin{proof} Let us start from the partition function. We have, recalling (\ref{eq:Htotproj}), parametrizing the trace over the fermionic Fock space using the occupation number basis, whose elements we denote by $|{\bm n} \rangle = |\{n_{x}\}_{x\in \Gamma_{L}}\rangle$, and the trace over the gauge fields using the basis formed by the eigenstates of $\{Z_{x,\mu}\}$, whose elements are denoted by $|{\bm \sigma} \rangle = | \{ \sigma_{x,\mu} \}_{(x,\mu) \in E(\Gamma_{L})} \rangle$:
\begin{equation}
\begin{split}
Z_{\beta,\mu,L} &= \Tr_{\mathcal{H}_{L}} \left(\prod_{x \in V(\Gamma)} \left(\frac{1+Q_{x}}{2}\right) e^{-\beta (H - \mu N)}\right)\\
&= \sum_{ {\bm \sigma}, {\bm n}} \Big\langle {\bm n}\;; {\bm \sigma} \, \Big| \prod_{x\in \Gamma_{L}} \left( \frac{1 + Q_{x}}{2} \right) e^{-\beta (H - \mu N)} \Big| {\bm n} \;; {\bm \sigma}  \Big\rangle\;,
\end{split}
\end{equation}
where $\Big| {\bm n}\;; {\bm \sigma}  \Big\rangle= \Big| {\bm n} \Big\rangle \otimes \Big| {\bm \sigma}  \Big\rangle$. Thus, since all $\{ Q_{x} \}$ commute with all $a^{+}_{x}a^{-}_{x}$, and since $H, N$ commute with all $Z_{x,\mu}$, we have:
\begin{equation}\label{eq:zsum}
Z_{\beta,\mu,L} =  \sum_{{\bm \sigma}, {\bm n}} \Big\langle {\bm n} \Big| e^{-\beta (H(\bm{\sigma}) - \mu N)} \Big| {\bm n} \Big\rangle \Big\langle {\bm \sigma} \Big| \prod_{x\in \Gamma_{L}} \left( \frac{1 + A_{x} (-1)^{n_{x}}}{2} \right) \Big| {\bm \sigma} \Big\rangle\;.
\end{equation}
Similarly, we also have:
\begin{equation}\label{eq:Osum}
\begin{split}
\langle \mathcal{O} \rangle_{\beta,\mu,L} &= \frac{1}{Z_{\beta,\mu,L}} \sum_{{\bm \sigma}, {\bm n}} \Big\langle {\bm n} \Big| O(\bm{\sigma}) e^{-\beta (H(\bm{\sigma}) - \mu N)} \Big| {\bm n} \Big\rangle \\&\quad \cdot  \Big\langle {\bm \sigma} \Big| \prod_{x\in \Gamma_{L}} \left( \frac{1 + A_{x} (-1)^{n_{x}}}{2} \right) \Big| {\bm \sigma} \Big\rangle\;.
\end{split}
\end{equation}
Let us analyse the last term in the expressions (\ref{eq:zsum}), (\ref{eq:Osum}). Observe that the operator $A_{x}$ swaps eigenstates of $Z_{x,\mu}$. Expanding the products over $x\in \Gamma_{L}$, the only terms with a non-vanishing scalar product are $1$ and $\prod_{x\in \Gamma_{L}} A_{x}(-1)^{n_{x}} =(-1)^{N}$ (here we used that all $X_{x,\mu}$ operators appear twice, and that $X_{x,\mu}^{2} = 1$). Therefore,
\begin{equation}
\Big\langle {\bm \sigma} \Big| \prod_{x\in \Gamma_{L}} \left( \frac{1 + A_{x} (-1)^{n_{x}}}{2} \right) \Big| {\bm \sigma} \Big\rangle = \frac{1 + (-1)^{N}}{2^{|\Gamma_{L}|}}\;. 
\end{equation}
Inserting this expression in (\ref{eq:zsum}), (\ref{eq:Osum}), the final claim follows.
\end{proof}
\begin{remark}
\leavevmode
\begin{itemize}
\item[(i)] Observe that, by Proposition \ref{prp:gaugeequiv}, the sums over the gauge field configurations reduce to sums over inequivalent configurations, that can be parametrized as sums over choices of magnetic fluxes through plaquettes and non-contractible loops, up to multiplication by the volume of gauge orbits.
\item[(ii)] In particular, we can represent the partition function as:
\begin{equation}
Z_{\beta,\mu,L} = \sum_{\Phi} \sum_{a,b} Z_{\beta,\mu,L}(\Phi,a,b)\;,
\end{equation}
where: $\Phi$ is a configuration of fluxes through the plaquettes; $a$ and $b$ are the fluxes associated with two inequivalent, non-contractible loops of the torus, horizontal resp. vertical; and:
\begin{equation}
\begin{split}
Z_{\beta,\mu,L}(\Phi,a,b) &= \sum^{*}_{{\bm \sigma}} \Tr_{\mathcal{F}_{L}} \frac{(1 + (-1)^{N})}{2^{|\Gamma_{L}|}} e^{-\beta (H(\bm{\sigma}) - \mu N)} \\
&= \frac{1}{2} \Tr_{\mathcal{F}_{L}} (1 + (-1)^{N}) e^{-\beta (H(\Phi, a, b) - \mu N)}\;.
\end{split}
\end{equation}
The asterisk in the first line means that the gauge field configurations are compatible with the fluxes $\Phi, a, b$; $H(\Phi, a, b) \equiv H(\bm{\sigma}(\Phi, a, b))$ is the Hamiltonian evaluated over a gauge field configuration compatible with the flux configuration $\Phi, a, b$; and in the last step we used that the number of gauge field configurations compatible with a given flux configuration is:\footnote{At the exponent, the factor $-(|P(\Gamma_{L})| - 1)$ is due to the plaquettes' flux constraint, while the factor $-2$ is due to the choice of the fluxes on the non-contractible loops. }
\begin{equation}
2^{|E(\Gamma_{L})| - (|P(\Gamma_{L})| - 1) - 2} = 2^{|\Gamma_{L}| - 1}\;,
\end{equation}
recall (\ref{eq:Euler}).
\end{itemize}
\end{remark}
\begin{remark}[Notation for fluxes]\label{rem:not} In what follows, with a slight abuse of notation, we will identify the flux through a plaquette with the product of the spins around the same plaquette. That is, we shall think $\Phi = (\Phi_{\Lambda})_{\Lambda \in P(\Gamma_{L})}$ with $\Phi_{\Lambda} = \pm 1$. The value $1$ corresponds to $0$-flux, while the value $-1$ corresponds to the $\pi$-flux. Similarly, we shall denote by $a,b$ the product of the spins along two inequivalent non-contractible loops: $a,b = 1, -1$ correspond to flux $0$ or $\pi$ through the corresponding loops, respectively. 
\end{remark}
\begin{remark}[Non-magnetic observables.]\label{rem:nonma} In Proposition \ref{prp:reso}, we apparently considered only a subclass of physical observables in the sense of Definition \ref{def:gi}. It turns out however that, in the absence of electric field operators in the Hamiltonian ($\varepsilon = 0$), these exhaust all the physical observables with non-zero expectation values. In other words, we claim that the expectation value of a physical observable depending on the $X$ fields is either zero, or it can be represented as the expectation value of a magnetic observable. 

To see this, consider a monomial $\mathcal{O}$ in the $X$ operators, with coefficient taking values in the algebra generated by $Z, a^{\pm}$ (the general claim follows by linearity). Let us denote by $A \subset E(\Gamma_{L})$ the set of bonds occupied by the $X$ operators appearing in the monomial.  Consider:
\begin{equation}\label{eq:X}
\Big\langle {\bm \sigma} \Big| \Big[\prod_{(x,\mu) \in A} X_{x,\mu}\Big] \prod_{x\in \Gamma_{L}} \left( \frac{1 + A_{x} (-1)^{n_{x}}}{2} \right) \Big| {\bm \sigma} \Big\rangle\;.
\end{equation}
We claim that, in order for (\ref{eq:X}) to be non-zero, 
\begin{equation}\label{eq:ass}
\prod_{(x,\mu) \in A} X_{x,\mu} = \prod_{x\in B} A_{x}\;,
\end{equation}
for some $B\subset \Gamma_{L}$. To see this, we proceed as follows. The scalar product (\ref{eq:X}) can be expanded into a sum of terms proportional to, for arbitrary $C\subseteq \Gamma_{L}$:
\begin{equation}\label{eq:C}
\Big\langle {\bm \sigma}\Big| \Big[\prod_{(x,\mu) \in A} X_{x,\mu}\Big] \Big[ \prod_{x\in C} A_{x}\Big] \Big| {\bm \sigma} \rangle\;. 
\end{equation}
In order for this quantity to be non-zero, the argument of the scalar product in (\ref{eq:C}) has to be the identity, due to the fact that $X$ swaps eigenstates of $Z$: every bond in $E(\Gamma_{L})$ has to be occupied by either zero or by an even number of $X$ operators. Thus,
\begin{equation}
\Big[\prod_{(x,\mu) \in A} X_{x,\mu}\Big] \Big[ \prod_{x\in C} A_{x}\Big]  = \mathbbm{1}\Rightarrow \Big[\prod_{(x,\mu) \in A} X_{x,\mu}\Big] = \Big[ \prod_{x\in C} A_{x}\Big]\;,
\end{equation}
where we used that $A_{x}^{2} = \mathbbm{1}$. This proves (\ref{eq:ass}). Next, if (\ref{eq:ass}) holds true, we can use the  projection over states satisfying the Gauss' law to write:
\begin{equation}\label{eq:gaussX}
\Big[ \prod_{x\in B} A_{x}\Big] \Big[\prod_{x \in V(\Gamma)} \left(\frac{1+Q_{x}}{2}\right)\Big] = \prod_{x\in B} (-1)^{n_{x}} \Big[ \prod_{x \in V(\Gamma)} \left(\frac{1+Q_{x}}{2}\right)\Big] \;;
\end{equation}
thus, Eq. (\ref{eq:gaussX}) shows that the expectation value of $\mathcal{O}$ can be rewritten as the expectation value of a magnetic gauge invariant operator, after replacing all $A_{x}$ operators in the observable $\mathcal{O}$ by the corresponding fermionic operators $(-1)^{n_{x}}$.
\end{remark}
\begin{remark}\label{rem:odd}
Let:
\begin{equation}\label{eq:Zpm}
Z_{\beta, \mu, \Gamma}^+ ({\bm \sigma}) := \Tr_{\mathcal{F}_{L}} e^{-\beta (H(\bm{\sigma}) - \mu N)}\;,\qquad Z_{\beta, \mu, \Gamma}^- ({\bm \sigma}) := \Tr_{\mathcal{F}_{L}} (-1)^{N} e^{-\beta (H(\bm{\sigma}) - \mu N)}\;.
\end{equation}
We shall call these objects the even and the odd partition functions. Observe that the odd partition function can be viewed as a partition function for a modified Gibbs state, with complex chemical potential $\mu + i\pi/\beta$. The presence of this imaginary shift in the chemical potential introduces an odd version of the KMS identity. Let:
\begin{equation}
\gamma_{t}(\mathcal{O}) := e^{t(H - \mu N)} \mathcal{O} e^{-t(H-\mu N)}. 
\end{equation}
be the imaginary-time evolution of $\mathcal{O}$, for $t \in [0;\beta)$. Then:
\begin{equation}
\begin{split}
&\frac{\Tr_{\mathcal{F}_{L}} (-1)^{N} \gamma_{t}(a^{-}_{x}) a^{+}_{y} e^{-\beta (H(\bm{\sigma}) - \mu N)} }{Z_{\beta, \mu, \Gamma}^-} \\
&\qquad = \frac{\Tr_{\mathcal{F}_{L}} \gamma_{t}(a^{-}_{x}) (-1)^{N - 1}  a^{+}_{y} e^{-\beta (H(\bm{\sigma}) - \mu N)} }{Z_{\beta, \mu, \Gamma}^-} \\
&\qquad= -\frac{\Tr_{\mathcal{F}_{L}} (-1)^{N}  a^{+}_{y} \gamma_{t-\beta}(a^{-}_{x}) e^{-\beta (H(\bm{\sigma}) - \mu N)} }{Z_{\beta, \mu, \Gamma}^-}\;.
\end{split}
\end{equation}
This relation will be used, later on, to introduce periodic or antiperiodic time extension of the time-ordered correlation functions, for the modified Gibbs states.
\end{remark}
\subsection{Results}
\subsubsection{Stability of the $\pi$-flux phase}
The next theorem is our main result. From now on, we shall choose $\mu = 0$, corresponding to the half-filling condition. We shall also use the short-hand notations:
\begin{equation}
\langle \cdot \rangle_{\beta,L} \equiv \langle \cdot \rangle_{\beta,0,L}\;,\qquad Z_{\beta,L} \equiv Z_{\beta, 0, L}\;.
\end{equation}
We shall denote by $\mathcal{O}$ a general, magnetic gauge-invariant observable. We say that a configuration ${\bm \sigma}$ is associated with the $\pi$-flux phase if, for all $\Lambda\in P(\Gamma_{L})$, we have $\prod_{(x,\mu) \in \partial \Lambda} \sigma_{x,\mu} = -1$: that is, the magnetic flux piercing the plaquette is equal to $\pi$. Physically, we can think that a magnetic monopole sits in the middle of each lattice plaquelle. Also, we shall denote by $\mathcal{O}(\pi, a,b)$ the observable evaluated on a representative configuration realizing the $\pi$-flux phase, with fluxes $(a,b) \in \mathbb{Z}_{2}^{2}$ in correspondence with two inequivalent non-contractible loops of the torus; $a$ is the flux associated with the circle $x_{2} = 0$, while $b$ is the flux associated with the circle $x_{1} = 0$. The specific choice of the representative of the $\pi$-flux phase is not important: by Proposition \ref{prp:gaugeequiv}, changing the representative amounts to a unitary conjugation of the observable, and it will be irrelevant in what follows.
\begin{theorem}[Stability of the $\pi$-flux phase]\label{thm:main} Let $L = 4\ell$, $\ell \in \mathbb{N}$. There exists $t_{0}>0$ independent of $\beta$ and $L$ such that for $t>t_{0}$ the following is true. Let $f_{\beta,L}$ be the free energy of the gauge theory, and let $f_{\beta,L}(\pi)$ the free energy of the $\pi$-flux phase,
\begin{equation}
f_{\beta,L} := -\frac{1}{\beta L^{2}} \log Z_{\beta,L}\;,\qquad f_{\beta,L}(\pi) := -\frac{1}{\beta L^{2}} \log \Big( \sum_{a,b} Z_{\beta,L}(\pi,a,b) \Big)\;. 
\end{equation}
Then:
\begin{equation}\label{eq:fen}
0\leq \beta L^{2} (f_{\beta,L}(\pi) - f_{\beta,L}) \leq \log\Big( ( 1 + e^{-\beta (\Delta_{\beta,L} - 1)} )^{L^{2}} + (1 - e^{-\beta (\Delta_{\beta,L} - 1)})^{L^{2}}\Big) + \log 2\;,
\end{equation}
where, for $\beta, L\gg 1$, and for a suitable universal constant $\kappa>0$:
\begin{equation}
\Delta_{\beta,L} = \Delta_{\infty} + o(1)\;,\qquad \Delta_{\infty} = \kappa  t\;.
\end{equation}
Furthermore, for any magnetic gauge invariant observable $\mathcal{O}$, and for $\beta, L$ large enough: 
\begin{equation}\label{eq:main}
\Big| \langle \mathcal{O} \rangle_{\beta, L} - \langle \mathcal{O} \rangle^{\pi}_{\beta, L}\Big|\leq K_{\mathcal{O}} ((1 + e^{- \beta (\Delta_{\beta,L}-1)})^{L^{2}} - 1)
\end{equation}
where the constant $K_{\mathcal{O}}$ only depends on $\max_{\Phi, a, b} \| \mathcal{O}(\Phi, a, b) \|$ and:
\begin{equation}\label{eq:main2}
\langle \mathcal{O} \rangle^{\pi}_{\beta, L} := \frac{\sum_{(a,b) \in \mathbb{Z}_{2}^{2}} \Tr_{\mathcal{F}_{L}} e^{-\beta H_{f}(\pi, a,b)} \mathcal{O}(\pi, a,b) (1 + (-1)^{N})}{\sum_{(a,b) \in \mathbb{Z}_{2}^{2}}\Tr_{\mathcal{F}_{L}} e^{-\beta H_{f}(\pi, a,b)} (1 + (-1)^{N})}\;.
\end{equation}
Also, let:
\begin{equation}
E_{0;L}(\pi, a, b) := - \lim_{\beta \to \infty} \frac{1}{\beta} \log Z_{\beta,L}(\pi, a, b)
\end{equation}
be the ground state energy of $H_{f}(\pi, a,b)$. Then,
\begin{equation}
E_{0;L}(\pi, a, b) \geq E_{0;L}(\pi, -1, -1)\;.
\end{equation} 
Furthermore, the ground state of the $\pi$-flux phase is almost degenerate, in the sense that, for any $0<\alpha<1/3$:
\begin{equation}\label{eq:deg}
\Big| E_{0;L}(\pi, a, b) - E_{0;L}(\pi, -1, -1) \Big| \leq \frac{C_{\alpha}}{L^{1 - 3\alpha}}\;.
\end{equation}
\end{theorem}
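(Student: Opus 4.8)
The plan is to reduce everything, via Proposition~\ref{prp:reso}, to the classical sum over $\mathbb{Z}_2$-flux configurations and then to control the total weight of all non-$\pi$-flux configurations by reflection positivity. Writing $Z_{\beta,L}=\sum_{\Phi}\sum_{a,b}Z_{\beta,L}(\Phi,a,b)$ with all summands nonnegative, the left inequality in \eqref{eq:fen} is immediate, since $\sum_{a,b}Z_{\beta,L}(\pi,a,b)$ is a subsum of $Z_{\beta,L}$; the entire content is therefore the upper bound on the ratio $Z_{\beta,L}/\sum_{a,b}Z_{\beta,L}(\pi,a,b)$.

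For the upper bound I would first establish reflection positivity for the classical weight $\bm{\sigma}\mapsto\tfrac12\Tr_{\mathcal{F}_L}(1+(-1)^N)e^{-\beta H(\bm{\sigma})}$ under reflections of the torus through planes of bonds. The delicate input here is that the fermionic trace, together with the parity projector $(1+(-1)^N)/2$ inherited from the resolved Gauss law, must factorize compatibly across the reflection plane; I would handle this by working separately with the even and odd partition functions of Remark~\ref{rem:odd} and checking that each defines a reflection-positive functional. From this I would derive the chessboard estimate, which bounds
\[
Z_{\beta,L}(\Phi)\leq\prod_{\Lambda\in P(\Gamma_L)}\big[Z_{\beta,L}(\Phi^{(\Lambda)}_{\mathrm{per}})\big]^{1/|P(\Gamma_L)|},
\]
where $\Phi^{(\Lambda)}_{\mathrm{per}}$ is the configuration obtained by reflecting the local flux pattern at $\Lambda$ throughout $\Gamma_L$. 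For a configuration differing from the $\pi$-flux state only by an isolated flipped plaquette, this reference state is the \emph{staggered} one in which $0$- and $\pi$-flux alternate; the hypothesis $L=4\ell$ is precisely what makes this checkerboard pattern compatible with the periodic boundary conditions.

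I would then \emph{define} $\Delta_{\beta,L}$ as the per-plaquette fermionic free-energy cost of the staggered state relative to the pure $\pi$-flux state; Lieb's optimality of the $\pi$-flux phase (established earlier via reflection positivity) guarantees this is of order $t$, giving $\Delta_\infty=\kappa t$. Since the magnetic term $-\sum_\Lambda\Phi_\Lambda$ favors the staggered state by one unit per plaquette on average, the chessboard estimate assigns to each flipped plaquette the net weight ratio $e^{-\beta(\Delta_{\beta,L}-1)}$. Parametrizing a configuration by its set $S$ of flipped plaquettes, the global constraint $\prod_\Lambda\Phi_\Lambda=1$ from \eqref{stokesb}, together with $L^2$ even, forces $|S|=k$ to be even, so the flux sum is bounded by
\[
\sum_{k\ \mathrm{even}}\binom{L^2}{k}e^{-\beta(\Delta_{\beta,L}-1)k}=\tfrac12\big[(1+x)^{L^2}+(1-x)^{L^2}\big],\qquad x=e^{-\beta(\Delta_{\beta,L}-1)}.
\]
This yields the right-hand side of \eqref{eq:fen}, the even-$k$ projection and the additive $\log2$ reflecting the Stokes constraint and the sum over the four non-contractible sectors. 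The observable bound \eqref{eq:main} follows from the same input applied to numerator and denominator of $\langle\mathcal{O}\rangle_{\beta,L}$: the $k=0$ term reproduces $\langle\mathcal{O}\rangle^\pi_{\beta,L}$ and the remainder is bounded by $\max_{\Phi,a,b}\|\mathcal{O}(\Phi,a,b)\|\,((1+x)^{L^2}-1)$.

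For the last two statements I would abandon reflection positivity and diagonalize $H_f(\pi,a,b)$ directly: in the $\pi$-flux gauge the one-particle operator has a Dirac-like spectrum and the four choices $(a,b)\in\mathbb{Z}_2^2$ merely impose periodic/antiperiodic boundary conditions, shifting the admissible momenta. Since $E_{0;L}(\pi,a,b)=-\sum_k|\varepsilon_{a,b}(k)|$, the ordering $E_{0;L}(\pi,a,b)\geq E_{0;L}(\pi,-1,-1)$ reduces to checking that the fully antiperiodic shift places the quantized momenta farthest from the Dirac cones, maximizing $\sum_k|\varepsilon_{a,b}(k)|$. The almost-degeneracy \eqref{eq:deg} is the point I expect to be the genuine obstacle, alongside the gauge-adapted reflection positivity: because $|\varepsilon(k)|\sim v|k-K|$ is \emph{non-smooth} at the Dirac points, the boundary-condition dependence of the Riemann sum is controlled entirely by the modes in a shell of radius $\sim L^{-\alpha}$ around each cone. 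I would separate this shell from the smooth bulk, bound the bulk difference by Euler--Maclaurin (polynomially small), and control the singular shell by exploiting the approximate symmetry of the low-energy modes about the cone; optimizing the momentum cutoff in $\alpha$ is what produces the exponent $1-3\alpha$. The chessboard combinatorics are routine once reflection positivity is secured; the real difficulties are verifying that the parity projector does not spoil reflection positivity, and extracting the cancellation in the non-smooth Dirac sum for \eqref{eq:deg}.
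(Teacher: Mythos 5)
Your route to the free-energy bound \eqref{eq:fen} and the observable bound \eqref{eq:main} is essentially the paper's: resolve the Gauss law, prove reflection positivity separately for the even and odd partition functions (Lemma \ref{lem:lieb}), disseminate each flipped plaquette into the staggered configuration via the chessboard estimate (Lemma \ref{lem:chess}, Proposition \ref{prop:remove}), and run the binomial counting with the even-$k$ Stokes constraint. The decisive gap is your treatment of $\Delta_{\beta,L}$: you claim that Lieb's optimality of the $\pi$-flux phase ``guarantees this is of order $t$, giving $\Delta_\infty=\kappa t$''. Lieb's theorem only gives $\Delta_{\beta,L}\geq 0$ — it identifies the $\pi$-flux configuration as a maximizer of the partition function, but says nothing about a strictly positive, $L$-uniform gap between the staggered background and the $\pi$-flux background. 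Strict positivity is exactly what the theorem needs: the hypothesis $t>t_0$ and the usefulness of \eqref{eq:fen}, \eqref{eq:main} hinge on $\Delta_{\beta,L}-1>0$, and if $\kappa$ were zero the right-hand sides would diverge with $L$. The proportionality to $t$ is indeed free (at $\mu=0$ the fermionic Hamiltonian scales linearly in $t$), but $\kappa>0$ is obtained in the paper only through the explicit Bloch diagonalization of the staggered configuration — an $8\times 8$ Bloch Hamiltonian on a $4\times 2$ cell, Proposition \ref{prp:compD} — which expresses $\Delta_\infty$ as an explicit difference of band-energy integrals, numerically $\simeq 0.181\,t$. Your proposal omits this computation entirely, so it never establishes that the fermionic penalty for a $0$-flux plaquette beats the magnetic-energy gain.

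A second gap concerns the ordering $E_{0;L}(\pi,a,b)\geq E_{0;L}(\pi,-1,-1)$, which you propose to prove by direct diagonalization, arguing that the fully antiperiodic momenta lie ``farthest from the Dirac cones''. This heuristic is not a proof: the quantity to be ordered is, by \eqref{eq:deg}, of size $O(L^{-(1-3\alpha)})$, while the bulk parts of the momentum sums carry error terms (Taylor remainders of order $\log L/L$, sum-versus-integral errors of order $L^{3\alpha-1}$, exactly as in Appendix \ref{app:diff}) that are neither smaller than, nor of controlled sign relative to, the near-cone signal of order $1/L$; extracting the sign would require a strictly finer cancellation than the one needed for \eqref{eq:deg} itself. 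The paper gets the ordering for free from reflection positivity: the iterated cut-and-reflect argument (Theorem \ref{thm:piflux} together with Remark \ref{rem:bdry}) shows that for $L=4\ell$ the configuration maximizing $Z^{\pm}$ is precisely the $\pi$-flux phase with $(a,b)=(-1,-1)$, and the energy inequality follows by taking $-\beta^{-1}\log$ and $\beta\to\infty$. Your sketch of the almost-degeneracy \eqref{eq:deg} itself — a shell around each cone, crude bound inside, cancellation of the boundary-condition dependence in the smooth bulk, with the cutoff optimized in $\alpha$ — does match the paper's Appendix \ref{app:diff} in both structure and outcome.
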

The proof of Theorem \ref{thm:main} is based on reflection positivity methods. It relies on the key result of Lieb \cite{Lieb}, that proved that the optimum, energy minimizing magnetic flux for a half-filled band of electrons hopping on a planar, bipartite graph is $\pi$ per square plaquette. A few remarks are in order.
\begin{remark}\label{rem:main}
\leavevmode
\begin{itemize}
\item[(i)] The quantity $\Delta_{\beta,L}$ can be thought as a lower bound on the monopole's mass: it quantifies the energetic penalty due to the removal of one monopole from the system. The number $\kappa$ is given by the integral of a suitable trigonometric function, that will appear in the proof of the theorem. Its numerical evaluation with the software {\tt Mathematica} gives $\kappa \simeq 0.181$.
\item[(ii)] The error terms in (\ref{eq:main}) are exponentially small for $\beta$ large, however non uniformly in $L$. Thus, the result allows to compare the Gibbs states of the lattice gauge theory and of the $\pi$-flux phase for $L\gg 1$ and $\beta \gtrsim \log L$.
\item[(iii)] The result of Lieb \cite{Lieb} implies that the smallest energy of $H_{\text{f}}({\bm \sigma})$ is attained in correspondence of configurations realizing the $\pi$-flux phase. The result of \cite{Lieb} is actually much more general than this, since it does not require the background gauge fields to be $\mathbb{Z}_{2}$ valued, and it also applied to certain non-abelian gauge groups. See also \cite{NM} for a further generalization of \cite{Lieb}. Here we prove that the ground state energy of Hamiltonians $H_{\text{f}}({\bm \sigma})$ not associated with the $\pi$-flux phase are separated by the ground state energy of the $\pi$-flux phase by a positive gap. Furthermore, we provide a lower bound for this gap, that increases with the number of monopoles' removals, that is plaquettes with zero flux. The proof is based on reflection positivity, as \cite{Lieb}. More precisely, to prove the bound for the gap, we rely on the chessboard estimate, see \cite{Tasaki} for a pedagogical review. The chessboard estimate allows to bound the gap from below, in terms of the ground state energy of a ``chessboard configuration'' of plaquettes with $0$ and $\pi$ fluxes, which we explicitly compute.
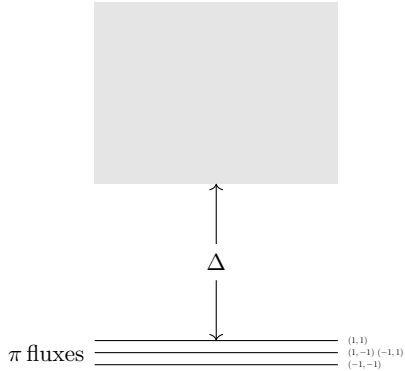
\begin{figure}[H]
\centering
\begin{tikzpicture}[scale=0.8]
\draw[black] (0,0) -- (4,0);
\draw[black] (0,0.2) -- (4,0.2);
\draw[black] (0,0.4) -- (4,0.4);
\fill [gray, opacity=0.2] (0,3) rectangle (4,6);
\draw[black, ->]  (2,1.4) -- (2,0.4);
\draw[black, ->]  (2,2) -- (2,3);
\node[left] (e) at (0,0.2) {\scalebox{0.8}{$\pi\,\text{fluxes}$}};
\node[] (d) at (2,1.7) {\scalebox{0.8}{$\Delta$}};
\node[right] (a) at (4,0.4) {\scalebox{0.3}{$(1,1)$}};
\node[right] (b) at (4,0.2) {\scalebox{0.3}{$(1,-1)\,\,(-1,1)$}};
\node[right] (c) at (4,0) {\scalebox{0.3}{$(-1,-1)$}};
%\draw[thick, ->-] (0,0) -- (1,1);
%\draw[dashed, thick] 
%\draw[densely dotted] (0.5, 0.5) -- (3.5,0.5);
%\begin{scope}[xshift=6cm]
%\draw (0,0) grid (5,5);
%\draw[thick, red, fill = red, fill opacity = 0.3] (0,0)--(0,4) -- (4,4) -- (4,0) -- (0,0);
%\end{scope}
\end{tikzpicture}
\caption{Schematical representation of the ground state energies of the fermionic Hamiltonians $H_{f}({\bm \sigma})$, for $L = 4\ell$. The grey box contains the ground state energies of $H_{f}({\bm \sigma})$ with at least one zero-flux plaquette. A numerical computation shows that $E_{0;L}(\pi, -1, -1) < E_{0;L}(\pi, 1, -1) < E_{0;L}(\pi, 1, 1)$ (and $E_{0;L}(\pi, 1, -1) = E_{0;L}(\pi, -1, 1)$). The difference of the $\pi$-flux ground state energies vanishes as $L\to \infty$.}
\end{figure}
\item[(iv)] The fermionic ground state energy is minimized by the $\pi$-flux phase also for $L = 2(2\ell + 1)$. In this case, however, the optimal choice of $a,b$ is $(a,b) = (1,1)$. The restriction to $L = 4\ell$ is technical, and it comes from the argument used to prove a lower bound for the monopole's mass.
\end{itemize}
\end{remark}
\begin{remark}[On the $\pi$-flux phase]\label{rem:pf}
There are only four inequivalent $\pi$ flux phases, labelled by the fluxes $(a,b)$ corresponding to non-contractible loops of the torus. As discussed in Section \ref{sec:piflux}, for the four $\pi$-flux phases it is possible to choose gauge field configurations ${\bm \sigma}$ that preserve translation invariance, paying the price of considering a fundamental cell formed by two lattice sites; we refer the reader to Figs. \ref{fig:pi1}, \ref{fig:pifluxes} for graphical representations. To be more precise, the $\pi$-flux phases can be described in terms of translation-invariant fermionic Hamiltonians, with periodic/antiperiodic boundary conditions, according to the value of $(a,b)$. The choice $a = 1$ corresponds to periodic boundary condition on the horizontal direction, while $a=-1$ correspond to antiperiodic boundary conditions; the same holds for $b$, for the vertical direction. 
\end{remark}
\begin{remark}[Decay of fermionic correlations] As the proof of Theorem \ref{thm:main} shows, all Euclidean (that is, imaginary time) ground state correlation functions of the gauge theory can be computed explicitly, via the fermionic Wick's rule. In particular, since the fermionic Hamiltonian with the $\pi$-flux background has no spectral gap, it turns out that Euclidean correlation functions decay algebraically, at a non-integrable rate in space-time. Let us denote by $x$ the location of the fundamental cell of the $\pi$-flux phase, with elements labelled by $i = A,B$; see Fig. \ref{fig:pi1}. Let us denote by $a^{\pm}_{x,i}$ the fermionic creation/annihilation operators, associated with the creation/annihilation of a particle in the fundamental cell labelled by $x$, and occupying the lattice site $i= A,B$ in the cell. Then, we have, for $|x-y| \neq 0$ and $t>0$:
\begin{equation}\label{eq:dd}
\lim_{L\to \infty}\lim_{\beta \to \infty}\langle e^{tH} a^{+}_{x,i}a^{-}_{x,i} e^{-tH}\;; a^{+}_{y,j}a^{-}_{y,j} \rangle_{\beta, L} = -\frak{g}_{ij}((x,t);(y,0)) \frak{g}_{ji}((y,0);(x,t))\;,
\end{equation}
where $\frak{g}(\cdot ; \cdot)$ is:
\begin{equation}
\frak{g}((x,t);(y,s)) = \int_{[0,2\pi] \times [0,\pi]} \frac{dk}{(2\pi)^{2}} \int_{\mathbb{R}} \frac{d\omega}{2\pi}\, \frac{e^{i \omega (t-s) + i k (x-y)}}{-i\omega + h(k)}\;,
\end{equation}
with Bloch Hamiltonian:
\begin{equation}
h(k) = -t \begin{pmatrix} - e^{ik_{1}} - e^{-ik_{1}} & 1 + e^{2ik_{2}}  \\ 1 + e^{-2ik_{2}} & e^{ik_{1}} + e^{-ik_{1}} \end{pmatrix}\;.
\end{equation}
The left-hand side of (\ref{eq:dd}) is the Euclidean and time-ordered density-density correlation function, whose decay properties are ultimately determined by the spectrum of $h(k)$. The eigenvalues of $h(k)$, also called energy bands, are given by:
\begin{equation}
e_{\pm}(k) = \pm 2t \sqrt{1 + \frac{1}{2}\cos(2k_{1}) + \frac{1}{2} \cos(2k_{2})}\;.
\end{equation}
\begin{figure}
\centering
\includegraphics[trim={0cm 3cm 15cm 3cm}, clip, width=.7\textwidth]{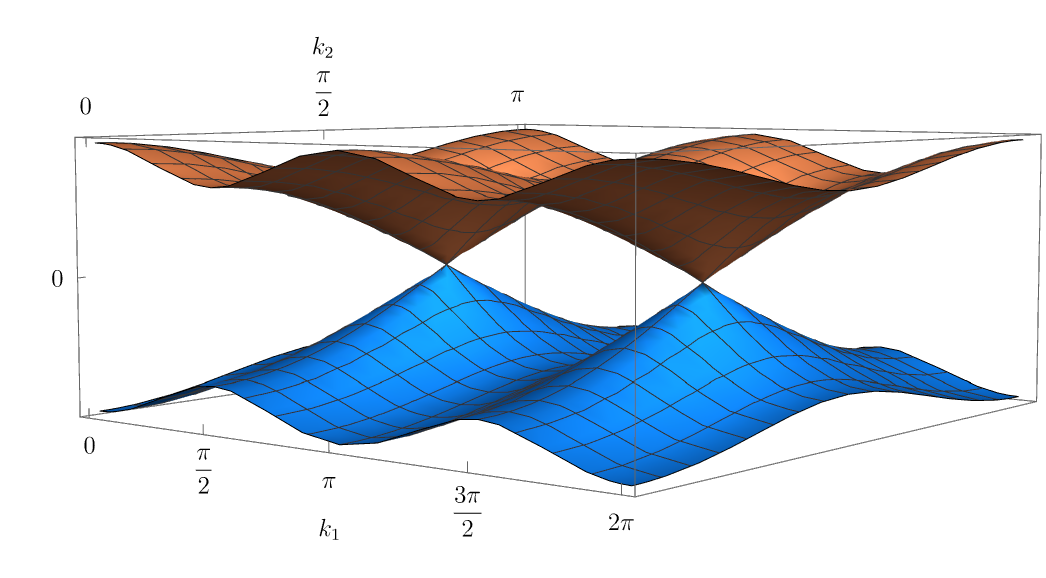}
\caption{Energy bands associated with the $\pi$-flux phase.}
    \label{fig:disp}
\end{figure}
In particular, $e_{\pm}(k) = 0$ if and only if $k  = (\pi/2, \pi/2) =: k^{-}_{F}$ or $k = (3\pi/2, \pi/2) =: k_{F}^{+}$. In correspondence with these points, the energy bands display conical intersections, see Fig. \ref{fig:disp}: for $\alpha = \pm$,
\begin{equation}
e_{\pm}(k^{\alpha}_{F} + q) = q\cdot \nabla_{q} e_{\pm}(k_{F}) + O(|q|^{2}) = \mp 2t |q| + O(|q|^{2})\;,\qquad \text{for $|q| \ll 1$.}
\end{equation}
This implies that the function $\frak{g}({\bm x};{\bm y})$, with ${\bm x} = (x,t)$ and ${\bm y} = (y,s)$, decays as $|{\bm x} - {\bm y}|^{-2}$, and hence that the ground-state density-density correlation function (\ref{eq:dd}) decays as $|{\bm x} - {\bm y}|^{-4}$, which is non-integrable in $2+1$ dimensions.
\end{remark}
\subsubsection{Susceptibility}
Theorem \ref{thm:main} opens the way to explicitly compute physically relevant quantities for the lattice gauge theory. In particular, it allows to compute the magnetic susceptibility. This quantity is used to probe the emergence of a semimetallic phase, and it has been investigated numerically in \cite{gazit}. Let us recall its definition.

We shall couple the system to an external, time-independent vector potential, which generates a static magnetic field. The vector potential only couples to the fermionic Hamiltonian, and it does so via the Peierls' substitution. Denoting by $H_{f}$ the Fock space Hamiltonian associated with a single-particle, nearest-neighbour Schroedinger operator $h$, we define:
\begin{equation}\label{eq:peierls}
H_f (A) := \sum_{x,y \in \Gamma_{L}} a^{+}_{x} h(x;y) e^{i q \int_{x\to y} d\ell \cdot A(\ell)} a^{-}_{y} + \text{h.c.}\;,
\end{equation}
with $q$ the electric charge, and where the integral is over the bond connecting $x$ to $y$. The external potential is defined in the continuum, over $\mathbb{R}^{2} / L \mathbb{R}^{2}$. 

By Theorem \ref{thm:main}, we know that the ground state of the lattice gauge theory is represented by a $\pi$-flux phase; this corresponds to a periodic arrangement of spins on the lattice bonds, as anticipated in Remark \ref{rem:main} and as discussed later in Section \ref{sec:piflux}. Thus, we find it convenient to rewrite the Hamiltonian $H_{f}(A)$ using the fundamental cell that accommodates for the periodicity of the $\pi$-flux configurations, see Fig. \ref{fig:pi1}. Let $\Gamma_{L}^{\text{red}}$ be the sublattice of $\Gamma_{L}$ generated by the basis vectors $e_{1}$ and $2 e_{2}$. We have:
\begin{equation}\label{hamipei}
    H_{f}(A) = \sum_{\substack{x, x' \in \Gamma_{L}^{\text{red}}}}\sum_{i,j\in I}  h_{ij}(x,x') e^{i q \int_0^1 A((x'+r_j)(1-s)+s(x+r_i)) \cdot \xi ds} a^+_{x,i} a^{-}_{x', j}
\end{equation}
where $I = (A, B)$ collects the labels in the fundamental cell, and $h_{ij}(x,x')$ are the matrix elements of the single-particle Hamiltonian, labelled by the points $x,x'$ of the reduced lattice $\Gamma_{L}^{\text{red}}$ and by the points in the fundamental cell. We denote by $r_{A} = e_{2}$ and $r_{B} = 0$ the relative coordinates in the fundamental cell. The displacement vector $\xi$ is directed along the bonds of the lattice, and it is $\xi = x + r_{i} - x' - r_{j}$ (and it can take values $\pm e_{1}, \pm e_{2}$). Next, we represent the vector potential in terms of its Fourier coefficients,
\begin{equation}
A_{\mu}(x+r_i) = \frac{1}{L^{2}} \sum_{p \in \frac{2\pi}{L} (\mathbb{Z}^{2} / L \mathbb{Z}^{2})} \hat{A}_{\mu}(p) e^{i p \cdot (x+r_i)}\;,\qquad \mu =1,2,\qquad x\in \Gamma_{L}^{\text{red}}\;,
\end{equation}
where here and below the sum runs over the independent momenta compatible with $L$-periodicity in space. The momentum-space current operator is then:
\begin{equation}
\hat J^{(A)}_{\mu}(p) := - L^{2}  \frac{\partial H_{f}(A)}{\partial \hat A_{\mu}(-p)}\;;
\end{equation}
by expading the Hamiltonian (\ref{hamipei}) in $A$, we get:
\begin{equation}
\begin{split}
    H_{f}(A) &= H_{f} - \frac{1}{L^{2}} \sum_{p} \sum_{\mu = 1,2} \hat J_{\mu}(p) \hat A_\mu(-p)\\&\quad  - \frac{1}{2}  \frac{1}{L^{4}} \sum_{p, q} \sum_{\mu,\nu = 1,2} \hat A_\mu(-p) \hat K_{\mu,\nu}(p, q) \hat A_\nu(-q) + O(A^3)\;,
    \end{split}
\end{equation}
and hence:
\begin{equation}
     \hat J_\mu^{(A)}(p) =   \hat J_\mu(p) + \frac{1}{L^{2}}\sum_{q} \sum_{\mu,\nu=1,2} \hat K_{\mu,\nu}(p, q) \hat A_\nu(-q) + O(A^{2})\;,
\end{equation}
where:
\begin{equation}\label{eq:JK}
\begin{split}
&\hat J_\mu(p) \\
&:=-\frac{i q}{2} \sum_{\substack{x, x' \in \Gamma_{L}^{\text{red}}\\i,j \in I}} e^{-i p \cdot (x'+ r_j) } \Big(h_{ij}(x,x') a^+_{x,i} a^{-}_{x', j} -h_{ji}(x',x) a^{+}_{x',j} a^{-}_{x, i} \Big)  \eta_{\xi}(p)  \xi_{\mu}\\
&\hat K_{\mu,\nu}(p,q) \\
&:= -\frac{q^2}{2}  \sum_{\substack{x, x' \in \Gamma_{L}^{\text{red}}\\i,j \in I}} e^{-i (p + q) \cdot (x'+ r_j) } \Big(h_{ij}(x,x') a^{+}_{x,i} a^{-}_{x', j} +h_{ji}(x',x) a^{+}_{x',j} a^{-}_{x, i} \Big)   \eta_{\xi}(p)  \eta_{\xi}(q) \xi_{\mu}\xi_{\nu}
\end{split}
\end{equation}
with:
\begin{equation}\label{eq:etadef}
\eta_{\xi}(p) = \left\{ \begin{array}{cc} \frac{e^{-i p \cdot \xi}-1}{-i p \cdot \xi} & p \cdot \xi \neq 0 \\ 1 & \text{otherwise.} \end{array} \right.
\end{equation}
We are now ready to compute the average of the current operator in presence of the external gauge field, at first order in the perturbation. Let $H(A)$ be the total Hamiltonian of the gauge theory, coupled to the external gauge field. We have, by Duhamel formula:
\begin{equation}\label{eq1}
\begin{split}
&\frac{\Tr(e^{-\beta H(A)} \hat J^{(A)}_1(p))}{\Tr(e^{-\beta H(A)})} - \frac{\Tr(e^{-\beta H} \hat J_1(p))}{\Tr(e^{-\beta H})} \\
&\quad = \frac{1}{L^{2}} \sum_{\nu =1,2}\sum_{q}\bigg[\int_0^{\beta} ds \left \langle \hat J_{\nu}(q,s); \hat J_1(p) \right \rangle_{\beta, L} + \left \langle \hat K_{1,\nu}(p, q) \right \rangle_{\beta, L} \bigg] \hat A_{\nu}(q) + O(A^{2})\;,
\end{split}
\end{equation}
where $\mathcal{O}(s) = e^{s H} \mathcal{O} e^{-s H}$ is the imaginary-time evolution of the operator $\mathcal{O}$. By Theorem \ref{thm:main}, as $\beta \to \infty$ the Gibbs state average reduces to the linear combination of averages on the four $\pi$-flux phases. 
%We shall denote:
%
%\begin{equation}
%\langle \mathcal{O} \rangle_{a,b}^{\pm} :=  \frac{ \Tr_{\mathcal{F}_{L}} e^{-\beta H_{f}(\pi; a,b)} \mathcal{O}(\pi; a,b) (\pm 1)^{N}}{\Tr_{\mathcal{F}_{L}} e^{-\beta H_{f}(\pi; a,b)} (\pm 1)^{N}}\;.
%\end{equation} 
%
Since all $\pi$-flux configurations are translation-invariant, we can use momentum conservation to write:
\begin{equation}\label{eq:susclim}
\begin{aligned}
&\lim_{\beta \to \infty} 
\frac{1}{L^{2}} \sum_{q}\bigg[\int_0^{\beta} ds \left \langle \hat J_{\nu}(q,s); \hat J_1(p) \right \rangle_{\beta, L} + \left \langle \hat K_{1,\nu}(p, q) \right \rangle_{\beta, L} \bigg] \hat A_{\nu}(q) \\
&\quad = \lim_{\beta \to \infty}  \frac{1}{L^{2}} \sum_{q} \bigg[ \int_0^{\beta} ds \left \langle  \hat J_{\nu}(-p,s); \hat J_1(p) \right \rangle_{\beta, L} + \left \langle \hat K_{1,\nu}(p, -p) \right \rangle_{\beta, L} \bigg] \delta_{q,-p} \hat A_{\nu}(q)\;, 
\end{aligned}
\end{equation}
with $\delta_{\cdot,\cdot}$ the periodic Kronecker symbol (of period $\pi$ in both momentum components). Let us consider a vector potential $\hat A(p) = (\hat A_{1}(p), 0)$, with $\hat A_{1}(p)$ nonzero for $p = (0, p_{2})$. By Maxwell's laws, the associated magnetic field is $\hat B(p) = -ip_{2} \hat A_{1}(p) + ip_{1} \hat A_{2}(p)$. We define the magnetic susceptibility from the variation of (\ref{eq:susclim}) at first order in the.vector potential. That is:
\begin{equation}\label{eq:chifin}
\chi(p_{2}) := \frac{1}{p_{2}^{2}}\lim_{L\to \infty} \lim_{\beta \to \infty} \frac{1}{L^{2}}\bigg[ \int_0^{\beta} ds \left \langle  \hat J_1(-p,s); \hat J_1(p) \right \rangle_{\beta, L} + \left \langle \hat K_{1,1}(p, -p) \right \rangle_{\beta, L} \bigg]\;. 
\end{equation}
The factor $1/p_{2}^{2}$ takes into account the fact that we are interested in the variation with respect of the magnetic field $\hat B((0,p_{2})) = -ip_{2} \hat A_{1}((0,p_{2}))$, and the fact that we are actually measuring the variation of the magnetization: for a (classical) conserved current $\hat J(p)$, we can write $\hat J_{1}(p) = ip_{2} \hat M(p)$ and $\hat J_{2}(p) = -ip_{1} \hat M(p)$, where $\hat M(p)$ is the Fourier transform of the local magnetic moment \cite{gazit, Ando}. The next result provides the explicit form of the magnetic susceptibility, at small external momenta.
\begin{proposition}[Ground state susceptibility]\label{prp:susc} Under the same assumption of Theorem \ref{thm:main}, the following is true:
\begin{equation}\label{eq:susc}
\chi(p_{2}) = -\frac{q^{2} v}{8 |p_{2}|} + o\bigg( \frac{1}{|p_{2}|} \bigg)\;, 
\end{equation}
where $v = 2t$ is the Fermi velocity of the fermionic sector in the $\pi$-flux phase.
\end{proposition}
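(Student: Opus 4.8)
The plan is to use Theorem \ref{thm:main} to collapse the Gibbs average in (\ref{eq:chifin}) onto a finite combination of free-fermion expectations, evaluate the resulting current--current bubble by the fermionic Wick rule, and extract its non-analytic small-momentum behavior from the two Dirac points of the $\pi$-flux dispersion. Concretely, I would first replace the $\beta\to\infty$ limit of $\langle\cdot\rangle_{\beta,L}$ by the average $\langle\cdot\rangle^{\pi}_{\beta,L}$ of (\ref{eq:main2}): a weighted sum over the four $\pi$-flux sectors $(a,b)\in\mathbb{Z}_2^2$, each carrying the even and odd partition functions of Remark \ref{rem:odd}. Since $Z^+$ corresponds to antiperiodic and $Z^-$ to periodic boundary conditions in imaginary time, this produces the eight space/imaginary-time sectors announced in the introduction. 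In each sector the state is quasi-free, so the bracket $\int_0^\beta ds\,\langle \hat J_1(-p,s);\hat J_1(p)\rangle_{\beta,L}+\langle \hat K_{1,1}(p,-p)\rangle_{\beta,L}$ is exactly computable: the current operators $\hat J_\mu$ and $\hat K_{\mu\nu}$ of (\ref{eq:JK}) are fermionic bilinears, so the connected paramagnetic term reduces to a single fermion loop, which in the $\beta\to\infty$, $L\to\infty$ limit I would write as a frequency--momentum integral of two Green's functions $\mathfrak{g}$ and two current vertices, while the diamagnetic term $\hat K_{1,1}$ contributes a tadpole.

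The next step is to exploit the lattice conservation law. The Ward identity relating the current vertex to $\mathfrak{g}$ guarantees that the full bracket is transverse and vanishes as $p\to 0$, which is what makes the division by $p_2^2$ in (\ref{eq:chifin}) legitimate and fixes the tensor structure of the response. I would then dispose of the boundary-condition sectors. Because $e_\pm(k)$ vanishes exactly at $k_F^\pm$, the unique sector whose discrete momentum grid contains a Fermi point and which also has a vanishing Matsubara frequency (periodic imaginary time) carries a genuine zero mode of the Euclidean operator $-i\omega+h(k)$; there the $Z^+$ and $Z^-$ pieces produce opposite-sign singular contributions that cancel exactly. The remaining sectors have no zero mode and, by the lattice reflection/translation symmetries already exploited in Section \ref{sec:RP} together with the symmetry of the dispersion and vertices, contribute equally, so it suffices to compute one sector and multiply by the appropriate multiplicity.

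Finally, I would extract the leading small-$p_2$ asymptotics. For a gapped system the bracket would be analytic, $O(p_2^2)$, yielding a finite susceptibility; the gaplessness at $k_F^\pm$ is exactly what produces a non-analytic $|p_2|$ term. Localizing the loop integral near $k_F^\pm$ and linearizing $h(k)$ there reduces the calculation to the vacuum polarization $\Pi_{11}(0,p_2)$ of $2+1$-dimensional massless Dirac fermions with velocity $v=2t$; since here $p=(0,p_2)$ the transverse projector equals unity, so $\Pi_{11}(0,p_2)\sim C\,|p_2|$ with $C$ the universal Dirac coefficient. Collecting the two Dirac-cone contributions with the sector multiplicities, dividing by $p_2^2$, and tracking the dependence on the charge $q$ and velocity $v$ yields $\chi(p_2)=-q^2 v/(8|p_2|)+o(1/|p_2|)$; the contribution of momenta away from $k_F^\pm$ and the genuine lattice corrections are analytic in $p_2$ and hence subleading after division by $p_2^2$.

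The main obstacle I expect is the joint control of the three limits together with the zero-mode sector: one must interchange $\beta\to\infty$, $L\to\infty$ and $p_2\to 0$ while rigorously isolating the exact cancellation in the singular sector, and then establish that only the Dirac-point neighborhoods feed the non-analytic part, with the lattice remainder bounded uniformly. This is precisely the step where the Green's-function and Ward-identity machinery of \cite{GMPcond} must be adapted from a homogeneous, time-dependent electric perturbation to the present static, space-varying magnetic one, which changes which component of the polarization tensor is probed and the order in which the limits are taken.
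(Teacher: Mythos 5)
Your overall architecture coincides with the paper's proof: reduce the Gibbs average to the eight $\pi$-flux sectors via Theorem \ref{thm:main}, compute the current--current bubble by the Wick rule in each quasi-free sector, use the lattice continuity equation to trade the diamagnetic term $\langle \hat K_{1,1}\rangle$ for a subtracted paramagnetic bubble (this is exactly the rewriting (\ref{eq:suscfin})), and then linearize at the two Fermi points to reduce the computation to the relativistic vacuum polarization. However, two of your steps are genuinely wrong as stated. The first concerns the zero-mode sector $(\varepsilon,a,b)=(-1,1,1)$: you claim that there the $Z^{+}$ and $Z^{-}$ pieces ``produce opposite-sign singular contributions that cancel exactly.'' This cannot be the mechanism. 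The $Z^{+}$ piece is computed with fermionic Matsubara frequencies, which are bounded away from zero at finite $\beta$, so it has no singular part whatsoever, and there is nothing for the alleged $Z^{-}$ singularity to cancel against. What actually happens (Lemma \ref{lem:zero} of the paper) is that the \emph{unnormalized} odd trace $\Tr\,(-1)^{N}e^{-\beta H}\mathcal{O}$ vanishes identically whenever the number of zero modes of $h$ (here four: two Fermi points, each with a twofold degenerate zero of the $2\times 2$ Bloch Hamiltonian) exceeds the fermionic order $r$ of $\mathcal{O}$ (here $r\leq 2$): the sum over the occupation numbers of the zero modes weighted by $(-1)^{\sum_{j}p_{j}}$ kills the trace. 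Equivalently, the fourth-order zero of $Z^{-}_{\beta,L}(\pi;1,1)$ beats the at most second-order divergence of the normalized correlation, so this sector contributes exactly zero; the cancellation is internal to the odd trace, not between the even and odd time boundary conditions.

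The second gap is your error control. You dismiss the lattice remainders on the grounds that they are ``analytic in $p_{2}$ and hence subleading after division by $p_{2}^{2}$.'' Analyticity alone only gives $R(p_{2})-R(0)=O(p_{2})$, and $O(p_{2})/p_{2}^{2}=O(1/|p_{2}|)$, which is the \emph{same} order as the claimed main term $-q^{2}v/(8|p_{2}|)$, not smaller; with this argument the coefficient of $1/|p_{2}|$ would be uncontrolled. The missing ingredient, which the paper supplies, is that the full bubble $A(p_{2})$ is \emph{even} in $p_{2}$ (proven by the change of variables $k\mapsto k+p$ together with cyclicity of the trace), so every remainder in the decomposition is even as well, and evenness plus differentiability at $0$ forces $(R(p_{2})-R(0))/p_{2}^{2}=o(1/|p_{2}|)$. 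The continuum transversality and rotation-invariance argument you invoke (``the transverse projector equals unity'') is not available on the lattice, where the polarization tensor need not have the rotation-covariant form $(\delta_{\mu\nu}p^{2}-p_{\mu}p_{\nu})\Pi(p^{2})$; the parity argument is what replaces it, and without it the proof does not close.
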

\begin{remark} The expression (\ref{eq:susc}) is the same obtained for non-interacting $2d$ massless Dirac fermions and for graphene \cite{Ando}. The result (\ref{eq:susc}) puts on rigorous grounds the numerical evidence of \cite{gazit}, for the $\mathbb{Z}_{2}$ lattice gauge theory. 
\end{remark}
The proof of Proposition \ref{prp:susc} follows from an explicit calculation of the transport coefficient on the $\pi$-flux phase, which correspond to a semimetallic system. The final expression for the susceptibility is completely determined by the Dirac-like excitation spectrum of the $\pi$-flux phase, and lattice conservation laws.

The rest of the paper is devoted to the proof of Theorem \ref{thm:main} and of Proposition \ref{prp:susc}.

\section{Reflection positivity and chessboard estimates}\label{sec:RP}
In this section we shall introduce the techniques needed in order to prove our main result, Theorem \ref{thm:main}. The key tool is the notion of reflection positivity for lattice fermionic systems. Specifically, we shall adapt the argument of Lieb \cite{Lieb} for the proof of the $\pi$-flux conjecture, to our case. Then, in order to prove the stability of the $\pi$-flux phase in presence of dynamical gauge fields, we will have to prove that the energetic contribution of $0$-fluxes in the square lattice is extensive in the number of such fluxes. To achieve this, we shall use the chessboard estimate (see \cite{Tasaki} for a review) to show that the energetic insertion of $0$-fluxes, or monopoles' removals, can be bounded below proportionally to the energy of a suitable periodic configuration of fluxes, which can be explicitly evaluated.

\subsection{Reflection positivity and the $\pi$-flux phase}\label{sec:RPpi}

Let us denote by $P$ a hyperplane cutting perpendicularly the torus $\Gamma_{L}$ in two halves. The cut can be represented on a plane as in Fig. \ref{fig:cut}.

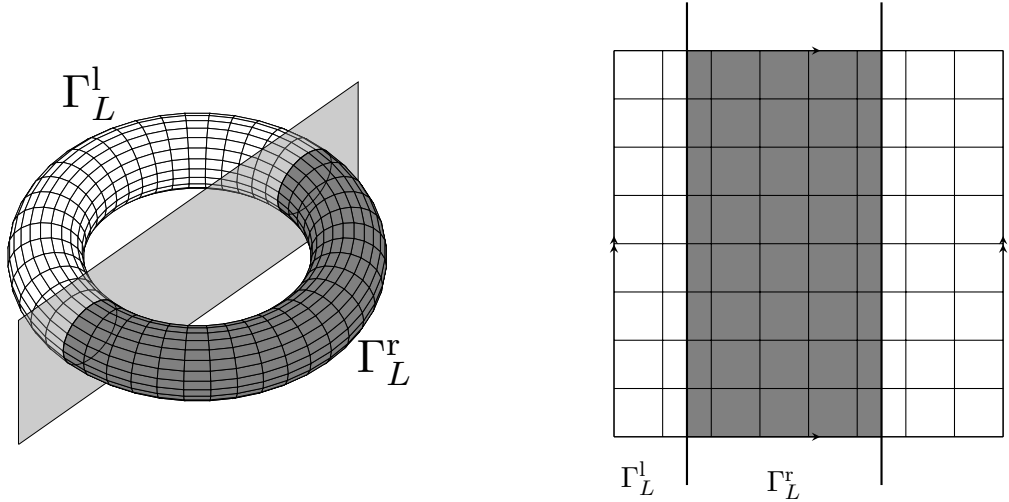
\begin{figure}
\begin{tikzpicture}[scale=1.6]
\begin{axis}[axis equal image,
        axis line style={draw=none},
        tick style={draw=none},
        xmax=18,ymax=22,zmax=7,xmin=-15,ymin=-20,zmin=-7,
        ticks=none,
        clip bounding box=upper bound,
        colormap/blackwhite,
        samples=23,
        view={45}{45}]
        \begin{pgfonlayer}{background layer}
            %%plot metà bianca del toro (per rimettere il mesh basta togliere l'opzione faceted color)
            \addplot3[domain=0:360,
                    y domain=90:270,
                    surf,
                    z buffer=sort,
                    color = white!30,
                    mesh/ordering=y varies,
                    shader=flat, 
                    draw=black,
                    line width=0.1pt
                ]
                ({(12 + 3 * cos(x)) * cos(y)} ,
                {(12 + 3 * cos(x)) * sin(y)},
                {3 * sin(x)});
        \end{pgfonlayer} 
        \begin{pgfonlayer}{main}
            %%frecce
            %\foreach \th in {-20,0,30,50} {
            %\addplot3 [samples=70,samples y=1,domain=262:269,-, blue,opacity=.2] 
            %    ({(12 + 3*cos(\th))*cos(x)},
            %    {(12 + 3*cos(\th))*sin(x)},
            %    {3*sin(\th)});

            %\addplot3 [samples=70,samples y=1,domain=91:100,-{Stealth[length=1mm]}, blue,opacity=.2] 
            %    ({(12+ 3*cos(\th+150))*cos(x)},
            %     {(12 + 3*cos(\th+150))*sin(x)},
            %    {3*sin(\th+150)}); };	
            %%piano
            \draw[fill=gray, fill opacity=0.4, draw=black, line width=0.1pt] (0,18,-7) -- (0,18,7) -- (0,-20,7)  -- (0,-20,-7) -- (0,18,-7);
           % \draw[fill=gray, fill opacity=0.4, draw=black, line width=0.1pt] (0,17,-7) -- (0,18,7) -- (0,-20,7)  -- (0,-20,-7) -- (0,18,-7);
        \end{pgfonlayer}

        \begin{pgfonlayer}{foreground layer}
            %%plot metà blu del toro
            \addplot3[domain=0:360,y 
                domain=-90:90,
                surf,
                z buffer=sort,
                color=gray, 
                mesh/ordering=y varies,
                shader=flat, 
                draw=black,
                line width=0.1pt]
                ({(12 + 3 * cos(x)) * cos(y)} ,
                {(12 + 3 * cos(x)) * sin(y)},
                {3 * sin(x)});
            %%cerchio blu sotto
            \addplot3[line width = 0.1pt,domain=0:360,surf, z buffer=sort, black,samples=71, samples y=1]
                (0,
                {(12 + 3 * cos(x)) * sin(-90)},
                {3 * sin(x)});
            %%cerchio blu sopra
            \addplot3[line width=0.1pt,domain=0:360,surf, z buffer=sort, black,samples=71, samples y=1]
                (0,
                {(12 + 3 * cos(x)) * sin(90)},
                {3 * sin(x)});
            %%frecce
            %\foreach \th in {-20,0,30,50} {
            %\addplot3 [samples=70,samples y=1,domain=269:280,-{Stealth[length=1mm]}, blue,opacity=.6] 
            %    ({(12 + 3*cos(\th))*cos(x)},
            %    {(12 + 3*cos(\th))*sin(x)},
            %    {3*sin(\th)});

            %\addplot3 [samples=70,samples y=1,domain=80:91,-, blue,opacity=.6] 
            %    ({(12+ 3*cos(\th+150))*cos(x)},
            %    {(12 + 3*cos(\th+150))*sin(x)},
            %    {3*sin(\th+150)}); };		
            %\node[blue] at (0,-22,5) {\scalebox{0.3}{$\partial_-$}};
            %\node[blue] at (2,10,4.5) {\scalebox{0.3}{$\partial_+$}};
            \node[] at (21,0,3) {\scalebox{1}{$\Gamma_L^\text{r}$}};
            \node[] at (-12,0,10) {\scalebox{1}{$\Gamma_L^\text{l}$}};
        \end{pgfonlayer}
    \end{axis}
\begin{scope}[xshift =6cm,yshift=1cm, scale =0.4]
\draw (0,0) grid (8,8);
\draw[->-] (0,8) -- (8,8);
\draw[->-] (0,0) -- (8,0);
\draw[->>-] (0,0) -- (0,8);
\draw[->>-] (8,0) -- (8,8);
\draw[thick] (1.5,-1) -- (1.5, 9);
\draw[thick] (5.5,-1) -- (5.5, 9);
 \draw[fill=black, fill opacity=0.5, draw=black] (1.5,0) -- (5.5,0) -- (5.5,8)  -- (1.5,8) -- (1.5,0);
\node[below] (a) at (3.5,-0.5) {\scalebox{1}{$\Gamma_L^{\text{r}}$}};
\node[below] (b) at (0.5,-0.3) {\scalebox{1}{$\Gamma_L^\text{l}$}};
\end{scope}
\end{tikzpicture}
 \caption{Graphical representation of the cut torus.}
    \label{fig:cut}
\end{figure}

Let us denote by $\Gamma^{\text{l}}_{L}$ and $\Gamma_{L}^{\text{r}}$ the left and right portion of the lattice $\Gamma_{L}$, with respect to the cut introduced by the hyperplane. Correspondingly, we rewrite the matter Hamiltonian as:
\begin{equation}\label{eq:matterLR}
H_{f}(\bm{\sigma}) = H^{\text{r}}_{f}(\bm{\sigma}) + H_{f}^{\text{l}}(\bm{\sigma}) + V_{\text{int}}(\bm{\sigma})\;,
\end{equation}
where $H_{f}^{\sharp}$ collects fermionic monomials that are parametrized by space coordinates fully contained in $\Gamma_{L}^{\sharp}$, with $\sharp = \text{l},{\text{r}}$, while $V_{\text{int}}$ takes into account the hopping terms that connect the two halves of the cut torus. 

Without loss of generality, we can suppose that the sign of the hoppings of $V_{\text{int}}(\bm{\sigma})$ is $-1$, that is $\sigma_{x,1} = 1$ for all bonds intersecting the cut. In fact, suppose that $\sigma_{x,1} = -1$. Consider the ``star'', formed by the lattice bonds centered at $x$. Changing signs to all gauge fields associated with bonds touching $x$ does not change the fluxes enclosed by the plaquettes and by the non-contractible cycles, and hence produces a gauge equivalent configuration $\bm{\sigma}'$. Thus, since gauge equivalent configurations correspond to unitarily equivalent matter Hamiltonians (recall Proposition \ref{prp:gaugeequiv}), from now on we will assume that:
\begin{equation}\label{eq:Vint}
V_{\text{int}}(\bm{\sigma}) = -t \sum_{\substack{x:\\
x\in \Gamma^{\text{l}}_{L},\\ x + e_{1} \in \Gamma_{L}^{\text{r}}}} (a^{+}_{x} a^{-}_{x + e_{1}} + a^{+}_{x+e_{1}} a^{-}_{x})\;
\end{equation}
Next, let us define the operator that implements the reflection across the hyperplane cutting the torus.  To this end, let $r(x)$ be the geometrical reflection of $x$ across the hyperplane $P$. Let $\mathcal{R}$ be the unitary operator implementing the geometric reflection on the fermionic algebra,
\begin{equation}
\mathcal{R}^{*} a^{\pm}_{x} \mathcal{R} = a^{\pm}_{r(x)}\;.
\end{equation}
Let $\tau$ be the unitary operator implementing the particle-hole transformation,
\begin{equation}
\tau^{*} a^{\pm}_{x} \tau = a^{\mp}_{x}\;.
\end{equation}
\begin{definition}[Reflection operator] 
The reflection operator $\Theta$ is the antilinear, unitary operator acting on the fermionic algebra as:
\begin{equation}
\Theta(\mathcal{O}) = \overline{\tau^{*} \mathcal{R}^{*} \mathcal{O} \mathcal{R} \tau}
\end{equation}
where the complex conjugation acts on the coefficients of the fermionic monomials.
\end{definition}
\begin{remark} This is the notion of reflection operator of \cite{Lieb}. In the case of $\mathbb{Z}_{2}$ gauge theory, the complex conjugation could actually be dropped (all terms in the Hamiltonian are real in the sense that the first quantized hamiltonian is real).
\end{remark}

We are now ready to define the notion of reflection symmetric Hamiltonian.
\begin{definition}[Reflection symmetric Hamiltonian] Let $H_{f}(\bm{\sigma})$ be as in (\ref{eq:matterLR}), (\ref{eq:Vint}). The Hamiltonian $H_{f}(\bm{\sigma})$ is called reflection symmetric if:
\begin{equation}\label{eq:symm}
H_{f}^{\textnormal{l}}(\bm{\sigma}) = \Theta(H_{f}^{\textnormal{r}}(\bm{\sigma}))\;.
\end{equation}
\end{definition}
\begin{remark} The identity (\ref{eq:symm}) is obviously not true for all possible gauge field configurations. It can be understood as a symmetry condition on the configuration $\bm{\sigma}$, with respect to the cut induced by $P$.
\end{remark}
The next result states key estimates for the partition functions (\ref{eq:Zpm}) at $\mu = 0$. This choice of chemical potential defines the half-filling condition. It is a direct adaptation of the argument of \cite{Lieb}; the adaptation allows to take into account the presence of the parity operator $(-1)^{N}$ in the definition of the odd partition function. Let $Z^{\pm}_{\beta, L}(H^{\text{l}}, H^{\text{r}})$ be the even and odd partition functions at $\mu = 0$ of the Hamiltonian $H = H^{\text{l}} + H^{\text{r}} + V_{\text{int}}$, with $H^{\text{l}}, H^{\text{r}}$ in the left, resp. right fermionic algebras, and $V_{\text{int}}$ as in (\ref{eq:Vint}).
\begin{lemma}[Lieb]\label{lem:lieb} For any $\beta \geq 0$ and for $L = 2\ell$, the following inequality holds true:
\begin{equation}\label{eq:RP}
Z^{\pm}_{\beta,L}(H^{\textnormal{l}}, H^{\textnormal{r}})^{2} \leq Z^{\pm}_{\beta,L}(H^{\textnormal{l}}, \Theta(H^{\textnormal{l}})) Z^{\pm}_{\beta,L}(H^{\textnormal{r}}, \Theta(H^{\textnormal{r}}))\;. 
\end{equation}
\end{lemma}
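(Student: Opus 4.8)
Inequality (\ref{eq:RP}) is a Cauchy--Schwarz (Gram-matrix) bound, and the plan is to realize $Z^{\pm}_{\beta,L}(H^{\text{l}},H^{\text{r}})$ as a Hermitian inner product of two vectors, one manufactured from $H^{\text{l}}$ and one from $H^{\text{r}}$, whose positivity is exactly the content of reflection positivity. Concretely, I would build an auxiliary Hilbert space $\mathcal{K}$ together with a map $A\mapsto\psi(A)\in\mathcal{K}$ defined on one-sided (left) Hamiltonians, such that $Z^{\pm}_{\beta,L}(H^{\text{l}},H^{\text{r}})=\langle\psi(\Theta(H^{\text{r}})),\psi(H^{\text{l}})\rangle_{\mathcal{K}}$; note that $\Theta(H^{\text{r}})$ lies in the left algebra, so $\psi(\Theta(H^{\text{r}}))$ makes sense. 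Using $\Theta^{2}=\mathrm{id}$, the substitution $H^{\text{r}}=\Theta(H^{\text{l}})$ gives $\langle\psi(H^{\text{l}}),\psi(H^{\text{l}})\rangle=Z^{\pm}_{\beta,L}(H^{\text{l}},\Theta(H^{\text{l}}))$, and the symmetric computation identifies $\|\psi(\Theta(H^{\text{r}}))\|^{2}$ with the reflection-symmetric doubling $Z^{\pm}_{\beta,L}(H^{\text{r}},\Theta(H^{\text{r}}))$; Cauchy--Schwarz then yields (\ref{eq:RP}) at once.

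To construct $\psi$ I would Trotterize, writing $e^{-\beta H}$ as the $n\to\infty$ limit of $\big(e^{-\frac{\beta}{n}H^{\text{l}}}e^{-\frac{\beta}{n}H^{\text{r}}}e^{-\frac{\beta}{n}V_{\text{int}}}\big)^{n}$, and expand each factor $e^{-\frac{\beta}{n}V_{\text{int}}}$ in powers of $V_{\text{int}}$. The key algebraic input is that, thanks to the gauge choice $\sigma_{x,1}=1$ on the crossing bonds (so that $V_{\text{int}}$ is given by (\ref{eq:Vint})), every crossing hop pairs a left operator with its reflection: from $\mathcal{R}^{*}a^{\pm}_{x}\mathcal{R}=a^{\pm}_{r(x)}$ and $\tau^{*}a^{\pm}_{x}\tau=a^{\mp}_{x}$ one reads off $a^{-}_{r(x)}=\Theta(a^{+}_{x})$ and $a^{+}_{r(x)}=\Theta(a^{-}_{x})$, whence $a^{+}_{x_{b}}a^{-}_{x_{b}+e_{1}}=a^{+}_{x_{b}}\Theta(a^{+}_{x_{b}})$ and $a^{+}_{x_{b}+e_{1}}a^{-}_{x_{b}}=\Theta(a^{-}_{x_{b}})a^{-}_{x_{b}}$ for each crossing bond $x_{b}\in\Gamma^{\text{l}}_{L}$ with $r(x_{b})=x_{b}+e_{1}\in\Gamma^{\text{r}}_{L}$. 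After this rewriting the left and right algebras are coupled only through factors of the form $C\,\Theta(C)$, so that, along the Fock-space factorization $\mathcal{F}_{L}\cong\mathcal{F}^{\text{l}}\otimes\mathcal{F}^{\text{r}}$, each term of the expanded trace splits into a purely-left trace times the $\Theta$-image of a purely-left trace. Indexing the terms by the time-ordered insertion data of the crossing operators, the left factors become the components of $\psi(H^{\text{l}})$ and the right factors (together with the antilinear conjugation built into $\Theta$) the conjugate components, which is precisely the desired inner-product representation.

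The main obstacle, and the genuinely fermionic feature absent from the spin case, is the control of anticommutation signs. The factorization $\mathcal{F}^{\text{l}}\otimes\mathcal{F}^{\text{r}}$ does not make left and right operators commute, so reordering them produces signs that must be absorbed coherently across all terms; this is exactly what the particle--hole twist $\tau$ and the complex conjugation in the definition of $\Theta$ are designed to do, converting the reflected right operators into the correct conjugate partners so that the Gram form is positive semidefinite and not merely Hermitian. Keeping track of these signs forces one to work in fixed fermion-parity sectors, which is why $Z^{+}$ and $Z^{-}$ are treated separately and why the insertion $(-1)^{N}$ in $Z^{-}$ must be split compatibly between the two halves; the hypothesis $L=2\ell$ guarantees that $r$ exchanges the crossing bonds in matched pairs, which is what makes $V_{\text{int}}$ reflection symmetric. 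The final point is to verify that positivity of the Gram matrix holds uniformly in the Trotter index $n$, so that it is preserved in the limit $n\to\infty$; granting this, Cauchy--Schwarz closes the argument.
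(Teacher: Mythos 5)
Your overall route is the same as the paper's (and Lieb's): Trotterize, expand in the crossing terms, factor each expansion term into a left piece times a right piece, and apply Cauchy--Schwarz to the sum over expansion labels, recognizing the two diagonal sums as the Trotter expansions of $Z^{\pm}_{\beta,L}(H^{\text{l}},\Theta(H^{\text{l}}))$ and $Z^{\pm}_{\beta,L}(H^{\text{r}},\Theta(H^{\text{r}}))$; your auxiliary-Hilbert-space/Gram formulation is just an abstract packaging of that Cauchy--Schwarz step. However, there is a genuine gap at exactly the point you yourself flag as ``the main obstacle'': the fermionic reordering signs. You assert that these signs are absorbed by the particle--hole twist $\tau$ and by the complex conjugation in $\Theta$. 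That is a misattribution: $\tau$ serves to make each crossing hop take the form $C\,\Theta(C)$ (your identities $a^{+}_{x_b}a^{-}_{x_b+e_1}=a^{+}_{x_b}\Theta(a^{+}_{x_b})$ are correct), and the antilinearity is what turns the diagonal Gram entries into $|\Tr(\cdot)|^{2}$; neither mechanism produces the $\pm 1$ that arises when all left fermionic operators are commuted past all right ones. The paper controls this sign by a counting argument that your proposal never states: since $H^{\text{l}}$, $H^{\text{r}}$ and the Trotter exponentials conserve particle number separately in each half, an expansion term $X_{\alpha}$ has nonvanishing trace only if the number of crossing factors of type $a^{+}_{\text{l}}a^{-}_{\text{r}}$ equals the number of type $a^{-}_{\text{l}}a^{+}_{\text{r}}$; with these numbers equal, the permutation sign from the rearrangement combines with the explicit minus signs carried by the $a^{-}_{\text{l}}a^{+}_{\text{r}}$ factors (from writing $-V_{\text{int}}$ of (\ref{eq:Vint}) in normal order with the left operator first) to give $+1$ for every term. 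Without this step the coefficients multiplying the factorized terms $X^{\text{l}}_{\alpha}X^{\text{r}}_{\alpha}$ are not known to be positive, and the Gram/Cauchy--Schwarz argument does not close.

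Two smaller corrections. First, the hypothesis $L=2\ell$ is not used to make $r$ exchange the crossing bonds in pairs; in the paper it enters only for the odd partition function, where one computes $\Theta(N_{\text{l}})=-N_{\text{r}}+|\Gamma^{\text{r}}_{L}|$ and needs $|\Gamma^{\text{r}}_{L}|=L^{2}/2$ to be even in order to identify the diagonal sum $\sum_{\alpha}|\Tr\,(-1)^{N_{\text{l}}}X^{\text{l}}_{\alpha}|^{2}$ with the Trotter expansion of $Z^{-}$ for the reflected Hamiltonian. Second, no restriction to fixed fermion-parity sectors is needed: the insertion $(-1)^{N}$ simply factorizes as $(-1)^{N_{\text{l}}}(-1)^{N_{\text{r}}}$, and the argument runs on the full traces.
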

\begin{proof} The case with $+$ has been proved by Lieb \cite{Lieb}. Let us repeat the argument for the case with $-$. As in \cite{Lieb}, the starting point is Trotter's formula:
\begin{equation}\label{eq:trotter}
e^{-\beta (H^{\text{l}} + H^{\text{r}}+ V_{\text{int}})} = \lim_{M\to \infty}\left[  \left( 1 - \frac{\beta V_{\text{int}}}{M} \right) e^{-\frac{\beta}{M} H^{\text{l}}} e^{-\frac{\beta}{M} H^{\text{r}} } \right]^{M}\;.
\end{equation}
Observe that, thanks to the gauge choice leading to (\ref{eq:Vint}), the operator $-V_{\text{int}}$ is given by the linear combination of hopping terms of the form $a^{+}_{x} a^{-}_{y}$, $-a^{-}_{x} a^{+}_{y}$ with $x\in \Gamma_{L}^{\text{l}}$ and $y\in \Gamma_{L}^{\text{r}}$ with {\it positive} coefficients. Next, for fixed $M$, we expand the product in (\ref{eq:trotter}); we get a linear combination of expressions of the form:
\begin{equation}
X_{\alpha} = b_{1;\alpha} e^{-\frac{\beta}{M} H^{\text{l}} } e^{-\frac{\beta}{M} H^{\text{r}}} b_{2;\alpha} e^{-\frac{\beta}{M} H^{\text{l}} } e^{-\frac{\beta}{M} H^{\text{r}} } b_{3;\alpha} \cdots b_{M;\alpha} e^{-\frac{\beta}{M} H^{\text{l}} } e^{-\frac{\beta}{M} H^{\text{r}} }\;.
\end{equation}
where $b_{i;\alpha}$ is either the identity, or $(\beta/M) a^{+}_{x} a^{-}_{x+e_{1}}$, or $-(\beta/M) a^{-}_{x} a^{+}_{x+e_{1}}$, with $x\in \Gamma_{L}^{\text{l}}$ and $x+e_{1}\in \Gamma_{L}^{\text{r}}$. Lieb's idea is to move all the left operators to the left of the right operators, without changing the relative orders of the operators in the left or right subalgebras. Observing that $\Tr X_{\alpha} \neq 0$ only if the number of $a^{+}_{l} a^{-}_{r}$ monomials equals the number of $a^{-}_{l} a^{+}_{r}$, by particle number conservation, proceeding as in \cite{Lieb} we see that the overall sign picked in the rearrangement of $X_{\alpha}$ is $+1$. 

Therefore, we succeeded in rewriting $X_{\alpha}$ as $X_{\alpha}^{\text{l}} X_{\alpha}^{\text{r}}$, for suitable monomials $X_{\alpha}^{\sharp}$ in the $\sharp$ subalgebras. In particular,
\begin{equation}
\Tr\, ( (-1)^{N} X_{\alpha} ) = \Tr\, ( (-1)^{N_{\text{l}}} X_{\alpha}^{\text{l}} (-1)^{N_{\text{r}}} X_{\alpha}^{\text{r}} )\;,
\end{equation} 
where we used that the right number operator commutes with all left operators. We can further manipulate the trace, to rewrite it as:
\begin{equation}\label{eq:lr0}
\Tr\, ( (-1)^{N_{\text{l}}} X_{\alpha}^{\text{l}} (-1)^{N_{\text{r}}} X_{\alpha}^{\text{r}} ) = \Big( \Tr_{\text{l}}\, (-1)^{N_{\text{l}}} X_{\alpha}^{\text{l}}\Big) \Big(\Tr_{{\text{r}}} (-1)^{N_{\text{r}}} X_{\alpha}^{\text{r}} \Big)\;,
\end{equation}
where $\Tr_{\sharp}$ denotes the trace over the Fock space associated with $\Gamma_{L}^{\sharp}$. Furthermore,
\begin{equation}\label{eq:lr}
\begin{split}
&\Big( \Tr_{\text{l}}\,  (-1)^{N_{\text{l}}} X_{\alpha}^{\text{l}}\Big) \Big(\Tr_{{\text{r}}} (-1)^{N_{\text{r}}} X_{\alpha}^{\text{r}} )\Big) \\
&\qquad = 2^{-|\Gamma_{L}^{\text{l}}| - |\Gamma_{L}^{\text{r}}|} \Big( \Tr\,  (-1)^{N_{\text{l}}} X_{\alpha}^l\Big) \Big(\Tr\, (-1)^{N_{\text{r}}} X_{\alpha}^{\text{r}} \Big) \\
&\qquad = 2^{-|\Gamma_{L}|} \Big( \Tr\, (-1)^{N_{\text{l}}} X_{\alpha}^{\text{l}}\Big) \Big(\Tr\, (-1)^{N_{\text{r}}} X_{\alpha}^{\text{r}} \Big)\;.
\end{split}
\end{equation}
Therefore, we have:
\begin{equation}\label{eq:bdZ}
\begin{split}
&\Tr\, (-1)^{N} \left[  \left( 1 - \frac{\beta V_{\text{int}}}{M} \right) e^{-\frac{\beta}{M} H^{\text{l}} } e^{-\frac{\beta}{M} H^{\text{r}} } \right]^{M} \\
&\qquad = 2^{-|\Gamma_{L}|} \sum_{\alpha} \Big( \Tr\, (-1)^{N_{\text{l}}} X_{\alpha}^{\text{l}}\Big) \Big(\Tr\, (-1)^{N_{\text{r}}} X_{\alpha}^{\text{r}} \Big) \\
&\qquad \leq 2^{-|\Gamma_{L}|} \Big(\sum_{\alpha} \Big| \Tr\, (-1)^{N_{\text{l}}} X_{\alpha}^{\text{l}} \Big|^{2}\Big)^{1/2} \Big(\sum_{\alpha} \Big| \Tr\, (-1)^{N_{\text{r}}} X_{\alpha}^{\text{r}} \Big|^{2}\Big)^{1/2}\;,
\end{split}
\end{equation}
where the last step follows by Cauchy-Schwarz inequality for sums. Next,
\begin{equation}
\begin{split}
\sum_{\alpha} \Big| \Tr\, (-1)^{N_{\text{l}}} X_{\alpha}^{\text{l}} \Big|^{2} &= \sum_{\alpha} \Big(\Tr\, (-1)^{N_{\text{l}}} X_{\alpha}^{\text{l}}\Big) \Big(\overline{\Tr\, (-1)^{N_{\text{l}}} X_{\alpha}^{\text{l}}} \Big) \\
&= \sum_{\alpha} \Big(\Tr\, (-1)^{N_{\text{l}}} X_{\alpha}^l\Big) \Big(\Tr\, \overline{(-1)^{N_{\text{l}}} X_{\alpha}^{\text{l}}}\Big) \\
&= \sum_{\alpha} \Big(\Tr\, (-1)^{N_{\text{l}}} X_{\alpha}^{\text{l}}\Big) \Big(\Tr\, \Theta((-1)^{N_{\text{l}}} X_{\alpha}^{\text{l}})\Big) \\
& = \sum_{\alpha} \Big(\Tr\, (-1)^{N_{\text{l}}} X_{\alpha}^{\text{l}}\Big) \Big(\Tr\, (-1)^{-N_{\text{r}} + |\Gamma_{L}^{\text{r}}|} \Theta(X_{\alpha}^{\text{l}})\Big)\;,
\end{split}
\end{equation}
where we used that $\Theta(N_{l}) = \sum_{x\in \Gamma_{L}^{\text{r}}} a^{-}_{x} a^{+}_{x} = - N_{\text{r}} + |\Gamma_{L}^{\text{r}}|$. Using that $L = 2\ell$ and hence that $|\Gamma_L^{\text{r}}|$ is even, we obtain, applying (\ref{eq:lr}), (\ref{eq:lr0}) backwards:
\begin{equation}\label{eq:CS}
\begin{split}
\sum_{\alpha} \Big| \Tr\, (-1)^{N} X_{\alpha}^{\text{l}} \Big|^{2} &= 2^{|\Gamma_{L}|} \Tr\, (-1)^{N} X_{\alpha}^{\text{l}} \Theta(X_{\alpha}^{\text{l}})\;.
\end{split}
\end{equation}
Of course, the same identity can be proved for operators in the right algebra. In conclusion, plugging (\ref{eq:CS}) in (\ref{eq:bdZ}) we get:
\begin{equation}
\begin{split}
&\Tr\, (-1)^{N} \left[  \left( 1 - \frac{\beta V_{\text{int}}}{M} \right) e^{-\frac{\beta}{M} H^{\text{l}} } e^{-\frac{\beta}{M} H^{\text{r}} } \right]^{M} \\
&\quad \leq \Big(\sum_{\alpha} \Tr\, (-1)^{N} X_{\alpha}^l \Theta(X_{\alpha}^{\text{l}})\Big)^{1/2} \Big(\sum_{\alpha} \Tr\, (-1)^{N} X_{\alpha}^{\text{r}} \Theta(X_{\alpha}^{\text{r}})\Big)^{1/2} \\
&\quad = \Big(\Tr\, (-1)^{N} \left[  \left( 1 - \frac{\beta V_{\text{int}}}{M} \right) e^{-\frac{\beta}{M} H^{\text{l}} } e^{-\frac{\beta}{M} \Theta(H^{\text{l}}) } \right]^{M}\Big)^{1/2} \\
&\qquad \cdot \Big(\Tr\, (-1)^{N} \left[  \left( 1 - \frac{\beta V_{\text{int}}}{M} \right) e^{-\frac{\beta}{M} H^{\text{r}} } e^{-\frac{\beta}{M} \Theta(H^{\text{r}}) } \right]^{M}\Big)^{1/2}\;. 
\end{split}
\end{equation}
Taking the $M\to \infty$ limit, the claim follows.
\end{proof}
Let us now discuss the key consequence of the bound (\ref{eq:RP}), discovered in \cite{Lieb}. Let:
\begin{equation}
Z^{\pm}(\bm{\sigma}) := Z^{\pm}_{\beta,L}(H_{f}(\bm{\sigma}))\;.
\end{equation}
As implied by Proposition \ref{prp:gaugeequiv}, the partition function only depends on the fluxes enclosed by the plaquettes, and by the non-contractible loops. In particular, without loss of generality we can assume that (\ref{eq:Vint}) holds. Let us denote by:
\begin{equation}\label{eq:maxZ}
Z^{\pm}_{*} := \max_{\text{fluxes}} Z^{\pm}(\bm{\sigma})\;,
\end{equation}
where the maximization is over the fluxes over all plaquettes and over the non-contractible cycles. The next result proves the optimality of the $\pi$-flux phase.
\begin{theorem}[Optimality of the $\pi$-flux phase]\label{thm:piflux} Under the same assumptions of Lemma \ref{lem:lieb}, the following is true:
\begin{equation}
Z^{\pm}_{*} = Z^{\pm}({\bm \sigma}_{\pi})\;,
\end{equation}
where ${\bm \sigma}_{\pi}$ is a gauge field configuration realizing the $\pi$-flux phase. 
\end{theorem}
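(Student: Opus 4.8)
The plan is to deduce Theorem \ref{thm:piflux} from the reflection-positivity bound of Lemma \ref{lem:lieb}, following Lieb. The argument is uniform in the sign $\pm$, since Lemma \ref{lem:lieb} holds for both partition functions and the flux computation below never sees the parity factor $(-1)^{N}$. Throughout I fix a vertical reflection plane $P$ bisecting the torus into two halves of equal width (possible because $L = 2\ell$), gauge-fixed so that (\ref{eq:Vint}) holds, and I write $\bm{\sigma}^{(\sharp)}$, $\sharp \in \{\text{l},\text{r}\}$, for the configuration whose $\sharp$ half coincides with that of $\bm{\sigma}$ and whose opposite half is the $\Theta$-image of the $\sharp$ half.

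First I would record the \emph{improvement step}. Writing $H_{f}(\bm{\sigma}) = H_{f}^{\text{l}} + H_{f}^{\text{r}} + V_{\text{int}}$ and applying (\ref{eq:RP}),
\begin{equation}
Z^{\pm}(\bm{\sigma})^{2} \leq Z^{\pm}(\bm{\sigma}^{(\text{l})})\, Z^{\pm}(\bm{\sigma}^{(\text{r})})\;,
\end{equation}
so that $\max\{ Z^{\pm}(\bm{\sigma}^{(\text{l})}), Z^{\pm}(\bm{\sigma}^{(\text{r})}) \} \geq Z^{\pm}(\bm{\sigma})$: symmetrizing across $P$ never decreases the partition function, and starting from a maximizer we may pass to a maximizer that is reflection symmetric across $P$.

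The heart of the matter is the \emph{flux-flip computation}, which is Lieb's key observation and explains why $\pi$ rather than $0$ is selected. Because $\Theta$ incorporates the particle--hole map $\tau$, reflecting a hopping term $-t\,\sigma_{x,\mu}\, a^{+}_{x} a^{-}_{x+e_{\mu}}$ produces, after the reordering $a^{-}_{r(x)} a^{+}_{r(x+e_{\mu})} = - a^{+}_{r(x+e_{\mu})} a^{-}_{r(x)}$, an \emph{extra minus sign}; hence the effective gauge field of $\bm{\sigma}^{(\sharp)}$ on every bond equals the negative of the geometric reflection of the original value. For a plaquette straddling $P$ the two crossing bonds carry $\sigma = 1$ by (\ref{eq:Vint}), while the two bonds parallel to $P$ are geometric reflections of one another up to exactly one such sign flip, so their product around the plaquette is $-1$. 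Thus \emph{every plaquette straddling $P$ carries flux $\pi$} in $\bm{\sigma}^{(\sharp)}$, whereas plaquettes lying entirely in one half accumulate an even number of sign flips and retain the flux of their mirror image. I would isolate this as a lemma, as it is the step that forces $\pi$.

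Finally I would \emph{propagate} $\pi$-flux to all plaquettes. Since each plaquette of $\Gamma_{L}$ straddles one of the vertical bond-bisecting axes, it would suffice to produce a maximizer simultaneously reflection symmetric across all of them: by the flux-flip lemma such a configuration has flux $\pi$ through every plaquette, and the non-contractible fluxes $(a,b)$ are then fixed by maximizing among the finitely many $\pi$-flux phases, giving $Z^{\pm}_{*} = Z^{\pm}(\bm{\sigma}_{\pi})$. The delicate point — the \textbf{main obstacle} — is that symmetrizing across one axis can destroy the $\pi$-flux already gained across a \emph{parallel} axis (symmetrization across a \emph{perpendicular} axis, by contrast, merely permutes straddling plaquettes among themselves and preserves their flux). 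I would resolve this by the standard iterated-reflection, or chessboard, scheme: applying (\ref{eq:RP}) successively across the whole hierarchy of axes bounds $Z^{\pm}(\bm{\sigma})$ by a geometric mean of partition functions of fully reflection-symmetric configurations, each of which — being invariant under every axis — has all plaquette fluxes equal to $\pi$ by the flux-flip lemma. Using $L = 2\ell$ to keep the dyadic splitting compatible with the lattice, this yields $Z^{\pm}(\bm{\sigma}) \leq Z^{\pm}(\bm{\sigma}_{\pi})$ for every $\bm{\sigma}$, which is the claim.
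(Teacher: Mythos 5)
Your strategy is the paper's own: the reflection bound of Lemma \ref{lem:lieb}, saturated at a maximizer; the observation that $\Theta$ contains the particle--hole map $\tau$, so the symmetrized configuration carries the \emph{opposite} gauge field on the reflected half and every plaquette straddling the cut acquires flux $\pi$ (this is exactly the paper's remark following Eq. (\ref{eq:uff})); and an iteration over cuts to spread the $\pi$ flux. Your improvement step and your flux-flip lemma are correct. The gap is in the propagation step: the claim that iterating (\ref{eq:RP}) over a ``hierarchy of axes'' bounds $Z^{\pm}(\bm{\sigma})$ by a geometric mean of partition functions of configurations \emph{each} of which is invariant under every axis, hence all-$\pi$, is false under the theorem's hypotheses. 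Apart from the fact that a dyadic hierarchy needs $\ell$ to be a power of $2$ (the theorem only assumes $L=2\ell$), the stronger problem is the one you yourself flagged: symmetrization overwrites one half of the torus and can erase $\pi$-columns gained earlier, and in the all-leaves version you cannot discard the bad outcomes. Concretely, take $L=6$, with axes through the column pairs $(1,4)$, $(2,5)$, $(3,6)$. After cutting the first axis every leaf has flux pattern $(\pi,a,b,\pi,b,a)$; cutting the second axis, the child that keeps the half $\{3,4\}$ has pattern $(b,\pi,b,\pi,\pi,\pi)$, so the $\pi$ gained in column $1$ is destroyed. One checks that from then on, for \emph{any} further sequence of cuts, one of the two children always retains a pair of non-$\pi$ columns sitting at mirror-conjugate positions across the newest axis; since for odd $\ell$ no antipodal pair of columns is mirror-symmetric about any axis, this pair can never be simultaneously straddled and killed, and the bad lineage survives forever. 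So the geometric mean always contains non-$\pi$ leaves.

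The repair is cheap, and both versions of it are already in your write-up. (i) The paper's route: apply (\ref{eq:RP}) at a maximizer, where by maximality the inequality is an equality, so \emph{both} symmetrized configurations are again maximizers; one may then \emph{choose}, at each step, the half that preserves the previously gained $\pi$-columns. A single good lineage suffices, and it exists for every even $L$ (e.g. for $L=6$: keep $\{6,1\}$ at the second cut, then $\{1,2\}$ at the third). (ii) Keep the geometric mean over all leaves, but note that only \emph{one} all-$\pi$ leaf of positive weight $w$ is needed: bounding every other leaf crudely by $Z^{\pm}_{*}$ and applying the bound to a maximizer gives
\begin{equation*}
\log Z^{\pm}_{*} \;\leq\; w \log Z^{\pm}(\bm{\sigma}_{\pi}) + (1-w)\log Z^{\pm}_{*}\;,
\end{equation*}
hence $Z^{\pm}_{*} \leq Z^{\pm}(\bm{\sigma}_{\pi})$, the reverse inequality being trivial; the residual freedom in the non-contractible fluxes $(a,b)$ is then fixed as in Remark \ref{rem:bdry}. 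With either repair your argument is a complete proof along the same lines as the paper's.
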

\begin{proof}
Let us denote by $\bm{\sigma}^{\pm}_{*}$ a configuration that attains the maximum (\ref{eq:maxZ}), and let:
\begin{equation}
H^{\pm}_{*} := H_{f}(\bm{\sigma}^{\pm}_{*})\;.
\end{equation} 
Clearly, we can rewrite:
\begin{equation}
H^{\pm}_{*} = (H^{\pm}_{*})^{\text{l}} + (H^{\pm}_{*})^{\text{r}} + V_{\text{int}}\;;
\end{equation}
thus, from the bound (\ref{eq:RP}) we have:
\begin{equation}\label{eq:RPstar}
(Z^{\pm}_{*})^{2} \leq Z^{\pm}_{\beta,L}\big((H^{\pm}_{*})^{\text{l}}; \Theta((H^{\pm}_{*})^{\text{l}})\big) Z^{\pm}_{\beta,L}\big((H^{\pm}_{*})^{\text{r}}; \Theta((H^{\pm}_{*})^{\text{r}})\big)\;. 
\end{equation}
Observe that $\Theta((H^{\pm}_{*})^{\text{l,r}})$ is nothing but $(H^{\pm}_{f}(\bm{\sigma}))^{\text{r,l}}$ evaluated over the {\it opposite} gauge field configuration\footnote{That is, the configuration with reversed signs.} entering in $(H^{\pm}_{*})^{\text{l,r}}$. By definition of $Z^{\pm}_{*}$, Eq. (\ref{eq:RPstar}) holds with equality, and in particular:
\begin{equation}\label{eq:uff}
Z^{\pm}_{*} = Z^{\pm}_{\beta,L}((H^{\pm}_{*})^{{\text{l}}}; \Theta((H^{\pm}_{*})^{{\text{l}}}))\;,\qquad Z^{\pm}_{*} = Z^{\pm}_{\beta,L}((H^{\pm}_{*})^{{\text{r}}}; \Theta((H^{\pm}_{*})^{{\text{r}}}))\;.
\end{equation}
Let us denote by $\tilde{\bm{\sigma}}^{\pm}_{*}$ the gauge field configuration obtained setting to $+1$ all spins intersecting the hyperplane $P$, and such that $\tilde{\sigma}^{\pm}_{*,b} = \sigma^{\pm}_{*,b}$ for bonds $b$ on the left of the cut, and $\tilde \sigma^{\pm}_{*,b} = -\sigma^{\pm}_{*,r(b)}$ for bonds $b$ on the right of the cut, with $r(\cdot)$ the geometric reflection across the cut. Thus, the first of Eq. (\ref{eq:uff}) reads:
\begin{equation}\label{eq:RP1}
Z^{\pm}_{*} = Z^{\pm}_{\beta,L}(H_{f}(\tilde{{\bm \sigma}}^{\pm}_{*}))\;.
\end{equation}
Observe that, by construction, the products of the four spins associated with the elementary plaquettes intersected by the cut are equal to $-1$: that is, these plaquettes are threaded by a $\pi$ flux.  We can now iterate the process, starting from Eq. (\ref{eq:RP1}). We can cut the torus in two halves choosing a hyperplane $P$' different from $P$, and use Lemma \ref{lem:lieb} to prove an upper bound for the right-hand side of (\ref{eq:RP1}). One finds out that $Z^{\pm}_{*}$ is equal to the partition function of a matter sector parametrized by a gauge field configuration with at least {\it three} columns of plaquettes with flux equal to $\pi$. The process can be iterated, and one has:
\begin{equation}\label{eq:finpi}
Z^{\pm}_{*} = Z^{\pm}_{\beta,L}(H_{f}({\bm \sigma}_{\pi}))\;,
\end{equation}
where ${\bm \sigma}_{\pi}$ is a configuration associated with the $\pi$ flux phase.
\end{proof}
\begin{remark}\label{rem:bdry} 
\leavevmode
\begin{enumerate}
\item The gauge field configurations produced by the above iteration has flux $0$ or $\pi$ through the non-contractible cycles, depending on whether $\ell$ is odd or even. In fact, the process of cutting and reflecting does not change the flux along non-contractible cycles parallel to the cut, and sets to $\prod_{i=1}^{\ell-1} (\sigma_i \cdot -\sigma_i) = (-1)^{\ell -1}$ the product of the spins along non-contractible cycles orthogonal to the cut. The claim follows from repeating the argument starting from (\ref{eq:finpi}), using now planes orthogonal to $P$.
\item In the limit $\beta\to 0$, the optimality of the $\pi$-flux phase for the odd partition function $Z_{\beta,L}^{-}({\bf \sigma})$ proved in Theorem \ref{thm:piflux} implies that $|\text{det}\,H({\sigma})| \leq |\text{det}\, H({\bm \sigma}_{\pi})|$, with ${\bm \sigma}_{\pi}$ the $\pi$-flux gauge field configuration produced in the proof of the theorem. This recovers a result of Lieb and Loss \cite{LL}, which was obtained without using reflection positivity.
\end{enumerate}
\end{remark}
To prove the stability of the $\pi$-flux phase against gauge field fluctuations, we have to quantify the energetic penalty due to the removal of magnetic monopoles, that is the insertion of plaquettes with magnetic flux equal to $0$. In fact, the energy of the gauge field associated with a spin configuration ${\bm \sigma}$ can be written as:
\begin{equation}\label{eq:hg}
\begin{split}
H_{\text{g}}({\bm \sigma}) &= - \sum_{\Lambda \in P(\Gamma_{L})} \Big[\prod_{(x,\mu) \in \partial \Lambda} \sigma_{x,\mu}\Big] \\
&= H_{\text{g}}({\bm \sigma}_{\pi}) - k n({\bm \sigma})\;,
\end{split}
\end{equation}
where $n({\bm \sigma})$ is the number of plaquettes with flux equal to $0$. Eq. (\ref{eq:hg}) is the gauge field energy of the $\pi$-flux phase after removing $k$ monopoles. We would like to prove that the coupling with the fermions introduces an energetic penalty for removing monopoles. This can also be done via reflection positivity; we conclude this section by giving a sketch of the argument. A stronger statement will be given in Section \ref{sec:chess}, using the chessboard estimate.

Let $L = 4\ell$. Suppose we start with a configuration of fluxes $\Phi$ different from the $\pi$-flux phase. Thus, there exist at least $1$ plaquette with flux $0$ (actually $2$ by (\ref{stokesb})). The idea is to apply a sequence of reflections to reduce the initial configuration to a periodic one, whose energy can be explicitly computed using Fourier analysis.
\begin{figure}
\centering
\includegraphics[trim={0cm 6cm 0cm 5cm}, clip, width=\textwidth]{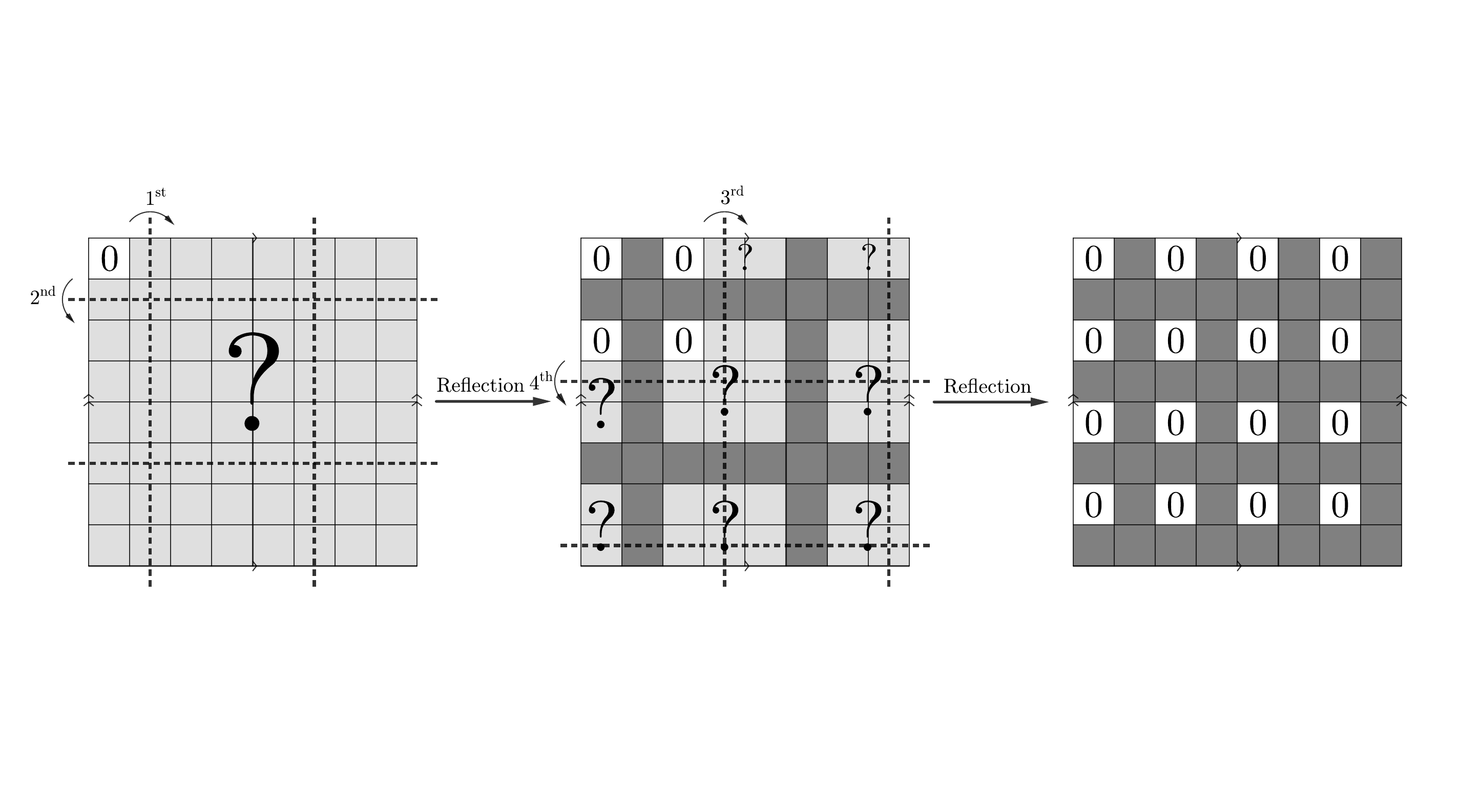}
\caption{Algorithm used to prove the bound for the gap. Dark plaquettes have flux $\pi$.}
\label{specgap}
\end{figure}
Without loss of generality, we can assume that the monopole free face is in the top left corner in the fundamental domain as in Fig. \ref{specgap}. We choose the first reflection hyperplane $P$ to cut the second column (and also the $(2+2\ell)$-th). Upon reflection of the configuration $\Phi$, we obtain a configuration similar to the one shown in Fig. \ref{specgap} and another one whose even/odd partition function can be estimated from above by partition function corresponding to the $\pi$-flux phase with $(a,b) = (-1,-1)$ (recall remark \ref{rem:pf}).

We then choose the horizontal hyperplane cutting the second row. The reflection produces a $3\times 3$ square with corner plaquette with $0$-flux configuration (which we simply call $a_2=3$) and another configuration which we once again estimate with the $\pi$-flux phase (see the second panel in Fig. \ref{specgap}). Thus, after two rounds of reflections, we get:
\begin{equation}
\log Z^{\pm}_{\beta, L}(\Phi)  \leq \bigg(\frac{1}{2} + \frac{1}{4}\bigg) \log Z^{\pm}_{\beta, L}(\pi) + \frac{1}{4} \log Z^{\pm}_{\beta, L}(a_2)\;;
\end{equation}
where we set $Z^{\pm}_{\beta, L}(\pi) = \max_{a,b} Z^{\pm}_{\beta, L}(\pi, a, b)$, and we do not keep track the fluxes on the non-contractible cycles for the first and the last partition function, since they play no role in the argument. We can iterate this sequence of vertical and horizontal reflections to produce squares of length $a_{2m}$ (where $m$ is the number of pairs of reflections) such that:
\begin{equation}
\begin{cases}
a_{2m} = 2 a_{2m -2} +1\\
a_2 = 3\;.
\end{cases}
\end{equation}
The solution of the recursion is $a_{2m} = 2^{m+1}-1$. Iterating the reflection procedure, we obtain:
\begin{equation}\begin{aligned}
\log Z^{\pm}_{\beta, L}(\Phi)  &\leq  \bigg(\sum_{k=1}^{2m} \frac{\log Z^{\pm}_{\beta, L}(\pi) }{2^k}\bigg) + \frac{\log Z^{\pm}_{\beta, L}(a_{2m})}{2^{2m}} \\
&= \bigg(\frac{1-\big(\frac{1}{2}\big)^{2m+1}}{1-\frac{1}{2}} -1 \bigg)  \log Z^{\pm}_{\beta, L}(\pi)   + \frac{\log Z^{\pm}_{\beta, L}(a_{2m})}{2^{2m}} \\
&= \bigg( 1- \frac{1}{2^{2m}}\bigg)  \log Z^{\pm}_{\beta, L}(\pi)  + \frac{\log Z^{\pm}_{\beta, L}(a_{2m})}{2^{2m}}\;. \\
\end{aligned}
\end{equation}
The iteration stops for $m$ equal to the smallest natural number such that:
\begin{equation}
a_{2m} + 1 \geq 4\ell \Rightarrow m = \ceil{\log_2 2\ell}\;.
\end{equation}
Thus:
\begin{equation}\label{eq:refl}
\begin{aligned}
\log Z^{\pm}_{\beta, L}(\Phi) &\leq \bigg( 1 - \frac{1}{2^{2m}} \bigg)  \log Z^{\pm}_{\beta, L}(\pi)  + \frac{1}{2^{2m}}\log Z^{\pm}_{\beta, L}(a_{2m}) \\
&\leq   \bigg( 1 - \frac{1}{2^{2\ceil{\log_2 2\ell}}} \bigg) \log Z^{\pm}_{\beta, L}(\pi)  + \frac{\log Z^{\pm}_{\beta, L}(\Phi^*)}{2^{2\ceil{\log_2 2\ell}}} 
\end{aligned}
\end{equation}
where $\log(Z^{\pm}_{\beta, L}(\Phi^*))$ is the even/odd partition in the background shown in the last panel in Fig. \ref{specgap}. Thus, using (\ref{eq:refl}), we get:
\begin{equation}
\begin{aligned}
\log Z^{\pm}_{\beta, L}(\Phi) - \log Z^{\pm}_{\beta, L}(\pi)& \leq  \frac{1}{2^{2\ceil{\log_2 2\ell}}} \bigg(\log Z^{\pm}_{\beta, L}(\Phi^*) -\log Z^{\pm}_{\beta, L}(\pi)\bigg)\\
&\leq  \frac{1}{16 \ell^2} \bigg(\log Z^{\pm}_{\beta, L}(\Phi^*) -\log Z^{\pm}_{\beta, L}(\pi)\bigg)\\
\end{aligned}
\end{equation}
that is:\footnote{We used that, in general, $\ceil{\log_2 2\ell} \leq \log_2 2\ell + 1$. In the special case in which $\ell$ is a power of $2$, that is if $\ceil{\log_2 2\ell} = \log_2 2\ell$, the right-hand side in (\ref{gap}) is multiplied by an extra factor $4$.}
\begin{equation}\label{gap}
-\frac{1}{\beta}(\log Z^{\pm}_{\beta, L}(\Phi) - \log Z^{\pm}_{\beta, L}(\pi))\geq \frac{1}{\beta L^{2}} \log\frac{Z^{\pm}_{\beta, L}(\pi)}{Z^{\pm}_{\beta, L}(\Phi^*)}\;.
\end{equation}
As $\beta\to \infty$, the left-hand side of (\ref{gap}) is equal to the difference of two fermionic ground state energies, associated with different flux configurations. Instead, the right-hand side of (\ref{gap}) can be explicitly computed, and we will prove that it is bounded below uniformly in $\beta$ and $L$; see Proposition \ref{prp:compD} below. 

The above argument allows to quantity the energetic penalty of removing one monopole. However, the weakness of the argument is that is provides the same lower bound regardless of how many monopoles have been removed. In order to prove the stability of the $\pi$-flux phase against fluctuations of the gauge field, that is to contrast the decrease in $k$ in (\ref{eq:hg}), we need to show that the energetic increase of the fermionic ground state energy due the removal of $k>0$ monopoles grows linearly with $k$. This will be done in Section \ref{sec:chess}, by means of another beautiful consequence of reflection positivity, the chessboard estimate. Intuitively speaking, the chessboard estimate can be understood as a refinement of the above argument which takes into account both reflected configurations at every step.

\subsection{Analysis of the $\pi$-flux phases}\label{sec:piflux}

The argument of the previous section shows that the even and odd partition functions are maximized by a gauge field configuration with flux $\pi$ in all plaquettes, and with flux $0$ or $\pi$ along the non-contractible cycles, depending on whether $L/2 = \ell$ is odd or even, respectively. These configurations are of course far from being unique, due to the fact that all equivalent gauge field configurations produce the same partition function, recall Proposition \ref{prp:gaugeequiv}. 

One can produce different $\pi$-flux phases, choosing different fluxes on the non-contractible cycles: we can think of the $\pi$-flux phases as equivalence classes of gauge field configurations, parametrized by four integers, labelling the product of the spins over two inequivalent, non-contractible loops of the torus. The argument of Lieb shows that the partition function is maximized in correspondence with one particular $\pi$-flux phase; it leaves open the possibility that the other $\pi$-flux phases might have the same energy. This is the question that we will address in the present section, via an explicit computation of the matter partition functions associated with the four inequivalent $\pi$-flux phases. 

Let us introduce the ground state energies associated with the even and odd partition function as:
\begin{equation}\label{eq:gstate}
E^{\pm}_{0;L}(\bm{\sigma}) := - \lim_{\beta \to \infty} \frac{1}{\beta} \log Z^{\pm}_{\beta,L}(\bm{\sigma})\;.
\end{equation}
The reader might be worried about taking the log of $Z^{-}_{\beta,L}$, due to the lack of positivity of the argument of the trace. However, it is not difficult to see that the odd partition function is non-negative at half-filling. In fact, particle-hole symmetry at the level of the first quantized Hamiltonian reads $h(\bm{\sigma}) = - h(\bm{\sigma})^{T}$. That is, there exists a unitary operator $U$ such that $h(\bm{\sigma}) = - U^{*} h(\bm{\sigma}) U$. This implies that the spectrum of $h(\bm{\sigma})$ is symmetric around zero, and that the multiplicities of the eigenvalues $\lambda, -\lambda$ are the same. Let $(\lambda_{i})_{i = 1,\ldots, L^{2}}$ be the eigenvalues of $h(\bm{\sigma})$. The odd partition function is:
\begin{equation}
Z^{\pm}_{\beta,L}(\bm{\sigma}) = \prod_{i=1}^{L^{2}} (1 - e^{-\beta \lambda_{i}})\;.
\end{equation}
The existence of zero modes implies that the odd partition function is zero, and in particular non-negative. Suppose now that there are no zero modes. Using the equality of the multiplicities of the eigenvalues of $h(\bm{\sigma})$, we can rewrite the odd partition function as:
\begin{equation}\label{eq:Z-2}
Z^{-}_{\beta,L}(\bm{\sigma}) = \prod_{i: \lambda_{i} < 0} (1 - e^{-\beta \lambda_{i}}) (1 - e^{\beta \lambda_{i}})\;.
\end{equation}
The function $(1 - e^{-\beta \lambda_{i}}) (1 - e^{\beta \lambda_{i}})$ is non-positive. However, the right-hand side of (\ref{eq:Z-2}) is non-negative, since it involves the product of $L^{2} / 2$ terms, which is even for $L$ even. This shows that $Z^{-}_{\beta,L}(\bm{\sigma}) > 0$, which implies that $E^{-}_{0;L}(\bm{\sigma})$ in (\ref{eq:gstate}) is real.

A corollary of Theorem \ref{thm:piflux} is that:
\begin{equation}
E^{\pm}_{0;L}(\bm{\sigma}) \geq E^{\pm}_{0;L}(\bm{\sigma}_{\pi})\;,
\end{equation}
where ${\bm \sigma}_{\pi}$ is a configuration with flux equal to $\pi$ in every plaquette. In this section we shall consider all such phases; these are parametrized by $(a,b) \in \mathbb{Z}_{2}^{2}$, which correspond to the fluxes along two non-contractible cycles of the torus (recall Remark \ref{rem:not}). We shall denote by $E^{\pm}_{0;L}(a,b)$ the corresponding energies. The next proposition proves the energetic closeness of these phases. It proves, in particular, the statement (\ref{eq:deg}) in Theorem \ref{thm:main}.
\begin{proposition}[Almost-degeneracy of the $\pi$-flux phases]\label{prp:deg} Let $L = 2\ell$. For $\ell = 2n$:
\begin{equation}
Z^{-}_{\beta,L}(\pi; 1,1) = 0\;,\qquad Z^{-}_{\beta,L}(\pi; a,b) > 0\quad \text{for $(a,b) \neq (1,1)$.}
\end{equation}
Furthermore, for $(\varepsilon, a, b) \neq (-1,1,1)$, and $0<\alpha < 1/3$:
\begin{equation}\label{eq:endiff1}
| E^{\varepsilon}_{0;L}(\pi;a,b) - E^{\varepsilon}_{0;L}(\pi;1,1) | \leq \frac{C_{\alpha}}{L^{1 - 3\alpha}}\;.
\end{equation}
For $\ell = 2n+1$:
\begin{equation}
Z^{-}_{\beta,L}(\pi; -1,-1) = 0\;,\qquad Z^{-}_{\beta,L}(\pi; a,b) > 0\quad \text{for $(a,b) \neq (-1,-1)$.}
\end{equation}
Furthermore, for $(\varepsilon, a, b) \neq (-1,-1,-1)$, and $0<\alpha < 1/3$:
\begin{equation}\label{eq:endiff2}
| E^{\varepsilon}_{0;L}(\pi;a,b) - E^{\varepsilon}_{0;L}(\pi;-1,-1) | \leq \frac{C_{\alpha}}{L^{1 - 3\alpha}}\;.
\end{equation}
\end{proposition}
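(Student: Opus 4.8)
The plan is to diagonalize each $\pi$-flux Hamiltonian $H_f(\pi;a,b)$ by Bloch--Fourier transform on the two-site fundamental cell of Fig.~\ref{fig:pi1}, so that the single-particle spectrum is $\{\pm e_+(k)\}_{k\in\mathcal G_{a,b}}$ with $e_+(k)=2t\,f(k)$, $f(k):=\sqrt{1+\tfrac12\cos 2k_1+\tfrac12\cos 2k_2}$, and $\mathcal G_{a,b}$ the momentum grid dictated by the boundary conditions. Here $a$ and $b$ enter only as half-integer shifts of the grid: writing $k_\mu=\tfrac{2\pi}{L}(n_\mu+s)$ with $s=0$ for periodic and $s=\tfrac12$ for antiperiodic boundary conditions, all four grids are translates of one another by half a mesh in one or both directions. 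First I would settle the statements on $Z^-$. By the representation $Z^-_{\beta,L}(\bm\sigma)=\prod_i(1-e^{-\beta\lambda_i})$ already recorded above, $Z^-$ vanishes exactly when $h(\pi;a,b)$ has a zero mode, i.e.\ when a Dirac point $k_F^\pm=(\pi/2,\pi/2),(3\pi/2,\pi/2)$ lies on $\mathcal G_{a,b}$. Imposing $k_1=\pi/2\in\mathcal G_{a,b}$ forces $n_1+s_a=\ell/2$, which is solvable iff $s_a=0$ and $\ell$ even, or $s_a=\tfrac12$ and $\ell$ odd; the same divisibility holds for $k_2$ and $b$. Hence for $\ell=2n$ the Dirac points sit on the grid only for $(a,b)=(1,1)$, and for $\ell=2n+1$ only for $(a,b)=(-1,-1)$, giving the claimed vanishing of $Z^-$ at those configurations. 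For every other $(a,b)$ there is no zero mode, and positivity of $Z^-$ follows from pairing $\lambda$ with $-\lambda$ (particle--hole symmetry) and observing that $Z^-=\prod_{\lambda_i>0}(1-e^{-\beta\lambda_i})(1-e^{\beta\lambda_i})$ is a product of $L^2/2=2\ell^2$ negative factors, hence positive.

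Next I would reduce the energy comparison to a finite-size analysis of a single Riemann sum. Taking $\beta\to\infty$ in $-\tfrac1\beta\log Z^\pm$ shows that, away from zero modes, the even and odd ground state energies coincide with the fermionic ground state energy $E_{0;L}(\pi;a,b)=-\sum_{k\in\mathcal G_{a,b}}e_+(k)=-2t\sum_{k\in\mathcal G_{a,b}}f(k)$; in particular the $\varepsilon$-energies depend on $(a,b)$ only through the grid $\mathcal G_{a,b}$. All four grids share the same cell area $h^2=(2\pi/L)^2$ and hence the same bulk (Brillouin-zone integral) term $\tfrac1{h^2}\int_{\mathrm{BZ}}f$; therefore the energy difference equals $-2t$ times the difference of the Riemann-sum corrections $E_{\mathcal G}:=\sum_{k\in\mathcal G}f(k)-\tfrac1{h^2}\int_{\mathrm{BZ}}f$. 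Since $f$ is real-analytic on the torus except at the two Dirac points, where $f(k_F^\pm+q)=v|q|+O(|q|^2)$ with $v$ proportional to $t$, the whole correction is localized at the cones: by Poisson summation $E_{\mathcal G}=\tfrac1{h^2}\sum_{m\neq0}\hat f(2\pi m/h)\,e^{2\pi i m\cdot\delta_{\mathcal G}/h}$, and only the slowly decaying tail of $\hat f$ produced by the conical singularities survives.

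Finally I would extract the quantitative rate. Splitting $f=f_{\mathrm{sing}}+f_{\mathrm{reg}}$ with $f_{\mathrm{sing}}$ a smoothly cut-off copy of $v|q|$ supported within radius $\rho$ of each Dirac point, the regular part $f_{\mathrm{reg}}$ is smooth but with derivatives growing like $\rho^{1-p}$, so its grid-dependent correction can be made small at the cost of powers of $\rho^{-1}$, while the conical part contributes through $\hat f_{\mathrm{sing}}(\xi)\sim|\xi|^{-3}$, for which the Poisson tail is of order $h$ up to the cutoff scale $1/\rho$. Balancing the two contributions by choosing $\rho\sim L^{-\alpha}$ yields the stated bound $|E^\varepsilon_{0;L}(\pi;a,b)-E^\varepsilon_{0;L}(\pi;a_0,b_0)|\le C_\alpha L^{-(1-3\alpha)}$ for every $0<\alpha<1/3$, where $(a_0,b_0)$ is the reference configuration $(1,1)$ for $\ell$ even and $(-1,-1)$ for $\ell$ odd (for $\varepsilon=-$ the reference is understood through the common value $E^+=E^-$, the zero-mode configuration itself being excluded). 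I expect the main obstacle to be precisely this last step: the naive per-cell Riemann error does \emph{not} decay, so genuine cancellation between the two half-shifted grids must be exploited near the cones, and the interplay between the cutoff radius $\rho$ and the $|\xi|^{-3}$ Fourier decay of the cone is what forces the loss from the sharp rate $L^{-1}$ down to $L^{-(1-3\alpha)}$.
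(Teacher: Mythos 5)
Your handling of the first half of the proposition is the same as the paper's: $Z^-$ vanishes exactly when a Dirac point $(\pm\pi/2,\pi/2)$ (mod $\pi$) lies on the momentum grid, the half-integer shift analysis gives the parity dichotomy in $\ell$, and positivity follows from particle--hole pairing of an even number ($L^2/2=2\ell^2$) of negative factors; likewise your reduction of the energy comparison to a difference of grid sums of $e_-$ is exactly the paper's starting point. Where you genuinely diverge is the quantitative estimate. The paper splits the Brillouin zone into a neighbourhood of radius $\sim n^{\alpha}$ (in lattice units) of the Dirac points, estimated crudely by $Cn^{3\alpha}/n$ -- this is the \emph{only} source of the exponent $1-3\alpha$ -- and its complement, where it Taylor-expands $e_-(k|a,b)$ around $e_-(k|1,1)$ in powers of $1/n$, kills the first-order term by the reflection $n_1\mapsto 2n-n_1$, and checks that the two second-order terms cancel each other exactly after replacing sums by integrals. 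Your Poisson-summation route avoids having to discover these cancellations by hand: the bulk term is manifestly offset-independent, and the summable $|\xi|^{-3}$ tail of the cone's Fourier transform does the rest. Carried out correctly it is more robust (it applies to any dispersion with finitely many conical points) and would in fact give the sharper rate $O(1/L)$, up to logarithms, of which the stated $C_\alpha L^{-(1-3\alpha)}$ is a weaker consequence.

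There is, however, one step that fails as written: the claim that $f_{\mathrm{reg}}=f-f_{\mathrm{sing}}$ is smooth with $\|D^p f_{\mathrm{reg}}\|_\infty\lesssim\rho^{1-p}$ for all $p$. Subtracting the exact cone does \emph{not} desingularize $f$. Near a Dirac point one has
\begin{equation*}
f(k_F+q)=\sqrt{|q|^2-\tfrac13\big(q_1^4+q_2^4\big)+O(|q|^6)}
=|q|-\frac{q_1^4+q_2^4}{6|q|}+O(|q|^5)\;,
\end{equation*}
so $f_{\mathrm{reg}}$ retains a homogeneous-degree-three remainder which is only $C^2$: its derivatives of order $p\ge4$ blow up like $|q|^{3-p}$ at the Dirac points, no matter how small $\rho$ is, and you cannot take $p$ arbitrarily large in the Poisson/Euler--Maclaurin bound for the regular part. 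The repair is cheap and does not change your conclusion: either stop at $p=4$, noting that the fourth derivatives are still locally integrable with $\|D^4 f_{\mathrm{reg}}\|_{L^1}\lesssim\rho^{-1}$, which gives a regular-part contribution $O(\rho^{-1}L^{-2})=O(L^{\alpha-2})$, or observe that a homogeneous singularity of degree three has Fourier decay $|\xi|^{-5}$ and hence a Poisson tail $O(L^{-3})$. With either fix, the cone part still dominates at order $O(1/L)$, and (\ref{eq:endiff1}), (\ref{eq:endiff2}) follow with room to spare.
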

\begin{remark}
\leavevmode
\begin{itemize}
\item[(i)] We stress that bounds (\ref{eq:endiff1}) and (\ref{eq:endiff2}) estimate the difference of estensive quantities. 
\item[(ii)] From reflection positivity, we know that:
\begin{enumerate}
\item  $E^{\varepsilon}_{0;L}(-1,-1) \leq E^{\varepsilon}_{0;L}(a,b)$ for $\ell$ even
\item  $E^{\varepsilon}_{0;L}(1,1) \leq E^{\varepsilon}_{0;L}(a,b)$ for $\ell$ odd
\end{enumerate}
 recall Remark \ref{rem:bdry}. The estimates (\ref{eq:endiff1}), (\ref{eq:endiff2}) do not rule out possible degeneracies; however, they show that the energy variation due to the change of fluxes on non-contractible cycles is bounded by a vanishing quantity as $\ell \to \infty$.
\end{itemize}
\end{remark}
\begin{proof} We shall only discuss the case $L/2 = 2n$, as the case $L/2 = 2n+1$ is completely analogous. The proof of Proposition \ref{prp:deg} is based on the Bloch diagonalization of the Hamiltonian associated with the four inequivalent $\pi$-fluxes. Let us start from the $\pi$-flux phase parametrized by $(a,b) = (1,1)$. A particularly simple gauge field configuration associated with this phase is represented in Fig. \ref{fig:pi1}. Clearly, the corresponding Hamiltonian is not translationally invariant with respect to all lattice translations; however, it becomes translation invariant after having introduced the appropriate fundamental cell. We introduce a fundamental cell formed by two lattice sites, labelled by $A, B$, as in Fig. \ref{fig:pi1}. The position of the cell is defined by the coordinate of the $B$-lattice site. We shall denote by $\Gamma_{L}^{\text{red}}$ the two-dimensional lattice formed by the positions of the fundamental cells. The number of lattice sites contained in $\Gamma_{L}^{\text{red}}$ is $L^{2} / 2$. Observe that the lattice spacing between nearest-neighbours fundamental cells in $\Gamma_{L}^{\text{red}}$ is $1$ in the horizontal direction, and $2$ in the vertical direction.

Using these coordinates, the Hamiltonian of the matter sector is:
\begin{equation}\label{eq:H11}
H_{f}(\pi; 1,1) = -t \sum_{x \in \Gamma_{L}^{\text{red}}} ( a^{+}_{A,x} a^{-}_{B,x+2e_2} - a^{+}_{A,x} a^{-}_{A,x+e_{1}}  + a^{+}_{B,x} a^{-}_{A,x} + a^{+}_{B,x} a^{-}_{B, x+e_{1}} ) + \text{h.c.}
\end{equation}

\begin{figure}
\centering
\begin{tikzpicture}[scale=0.6]
\begin{scope}[xshift=-1cm, yshift = 2cm]
\draw (0,0) grid (8,8);
\draw[->-] (0,8) -- (8,8);
\draw[->-] (0,0) -- (8,0);
\draw[->>-] (0,0) -- (0,8);
\draw[->>-] (8,0) -- (8,8);
\draw[line width = 0.05 cm] (0,0) -- (8,0);
\draw[line width = 0.05 cm] (0,2) -- (8,2);
\draw[line width = 0.05 cm] (0,4) -- (8,4);
\draw[line width = 0.05 cm] (0,6) -- (8,6);
\draw[line width = 0.05 cm] (0,8) -- (8,8);
\draw[dotted] (0.5,0) -- (0.5,8);
\draw[dotted] (1.5,0) -- (1.5,8);
\draw[dotted] (2.5,0) -- (2.5,8);
\draw[dotted] (3.5,0) -- (3.5,8);
\draw[dotted] (4.5,0) -- (4.5,8);
\draw[dotted] (5.5,0) -- (5.5,8);
\draw[dotted] (6.5,0) -- (6.5,8);
\draw[dotted] (7.5,0) -- (7.5,8);
\draw[dotted] (0,0.5) -- (8,0.5);
\draw[dotted] (0,2.5) -- (8,2.5);
\draw[dotted] (0,4.5) -- (8,4.5);
\draw[dotted] (0,6.5) -- (8,6.5);
\draw[dotted, fill = gray, fill opacity = 0.2] (3.5,2.5)--(4.5,2.5) -- (4.5,4.5) -- (3.5,4.5) -- (3.5,2.5);
%\draw[thick, ->-] (0,0) -- (1,1);
%\draw[dashed, thick] 
%\draw[densely dotted] (0.5, 0.5) -- (3.5,0.5);
%\begin{scope}[xshift=6cm]
%\draw (0,0) grid (5,5);
%\draw[thick, red, fill = red, fill opacity = 0.3] (0,0)--(0,4) -- (4,4) -- (4,0) -- (0,0);
\end{scope}
\begin{scope}[xshift =11.5 cm, yshift=4cm]
\draw[dotted,fill = gray, fill opacity = 0.2] (0,0) -- (0,4) -- (2,4)--(2,0)--(0,0);
\draw[] (1,0) -- (1,4);
\draw[] (0,1) -- (2,1);
\draw[thick, line width = 0.05 cm] (0,3) -- (2,3);
\node[below left] (a) at (1,1) {\scalebox{0.6}{$B$}};
\node[above left] (b) at (1,3) {\scalebox{0.6}{$A$}};
\draw[line width = 0.02 cm, ->] (0,0)--(0,4);
\draw[line width = 0.02 cm, ->] (0,0)--(2,0);
\end{scope}
\end{tikzpicture}
\caption{On the left, graphical representation of the spin configurations associated with the $(1,1)$ $\pi$ flux phase. The thin bonds correspond to spin $+1$, while the solid bonds correspond to spin $-1$. On the right, definition of the fundamental cell.}
\label{fig:pi1}
\end{figure}
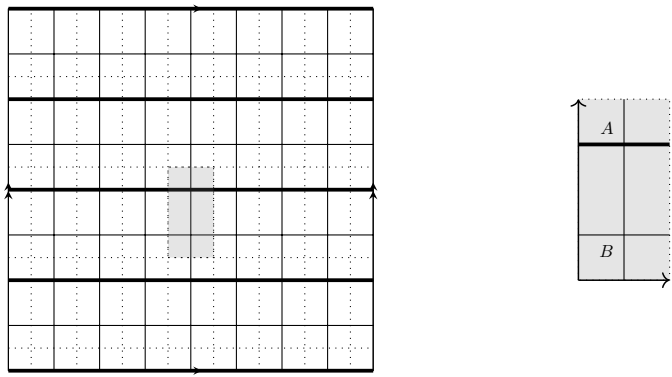

Let us introduce the Brillouin zone $B_{L}(1,1)$ as:
\begin{equation}\label{eq:BL11}
B_{L}(1,1) := \Big\{ k\in \frac{2\pi}{L}(n_{1}, n_{2}) \mid 0\leq n_{1} \leq L-1,\; 0\leq n_{2} \leq L/2 - 1 \Big\}\;,
\end{equation}
and let us define the momentum-space creation/annihilation operators as:
\begin{equation}\label{eq:afou}
\hat a^{\pm}_{\alpha, k} = \sum_{x\in \Gamma_{L}^{\text{red}}} e^{\pm ik\cdot x} a^{\pm}_{\alpha, x} \iff a^{\pm}_{\alpha, x} = \frac{1}{|\Gamma^{\text{red}}_{L}|} \sum_{k \in B_{L}(1,1)} e^{\mp ik\cdot x} \hat a^{\pm}_{\alpha, k}\;,
\end{equation}
for $\alpha \in \{A,B\}$. In terms of these operators, the Hamiltonian (\ref{eq:H11}) becomes:
\begin{equation}
H_{f}(\pi; 1,1) = \frac{1}{|\Gamma^{\text{red}}_{L}|} \sum_{k \in B_{L}(1,1)} (\hat a^{+}_{k}, h(k) \hat a^{-}_{k})\;,
\end{equation}
where $(f,g) = \sum_{\alpha} f_{\alpha} g_{\alpha}$, and the Bloch Hamiltonian $h(k)$ is:
\begin{equation}\label{eq:blochH}
h(k) = -t \begin{pmatrix} - e^{ik_{1}} - e^{-ik_{1}} & 1 + e^{2ik_{2}}  \\ 1 + e^{-2ik_{2}} & e^{ik_{1}} + e^{-ik_{1}} \end{pmatrix}\;.
\end{equation}
Its eigenvalues are:
\begin{equation}\label{eq:epi}
e_{\pm}(k) = \pm 2t \sqrt{1 + \frac{1}{2}\cos(2k_{1}) + \frac{1}{2} \cos(2k_{2})}\;.
\end{equation}
The eigenvalues are vanishing if and only if $k = (\frac{\pi}{2}, \frac{\pi}{2}), (\frac{3\pi}{2}, \frac{\pi}{2})$. The partition function is:
\begin{equation}
\begin{split}
Z^{\pm}_{\beta,L}(\pi; 1,1) &= \prod_{k\in B_{L}(1,1)} \prod_{z = \pm } (1 \pm e^{-\beta e_{z}(k)})  \\
&= \prod_{k\in B_{L}(1,1)} (2 \pm (e^{-\beta e_{-}(k)} + e^{\beta e_{-}(k)}))\;.
\end{split}
\end{equation}
From this expression, we see that $Z^{-}_{\beta,L}(\pi; 1,1) = 0$, due to the presence of zero modes at $k = (\frac{\pi}{2}, \frac{\pi}{2}), (\frac{3\pi}{2}, \frac{\pi}{2})$, which belong to $B_{L}(1,1)$ since $L/2$ is even. Instead, the ground state energy $E^{+}_{0;L}(\pi; 1,1)$, defined as in (\ref{eq:gstate}), is:
\begin{equation}
E^{+}_{0;L}(\pi; 1,1) = \sum_{k\in B_{L}(1,1)} e_{-}(k)\;.
\end{equation}
Let us now consider the ground state energy associated with the $\pi$-flux phases with $(a,b) \neq (1,1)$. The representative spin configurations associated with the remaining cases are illustrated in Fig. \ref{fig:pifluxes}.

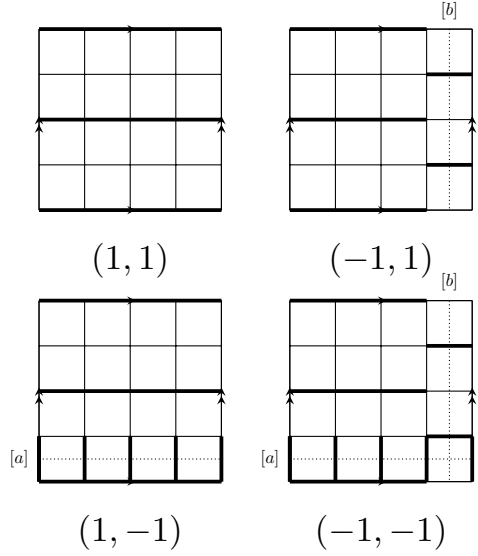
\begin{figure}
\centering
\begin{tikzpicture}[scale=0.6]
\draw (0,0) grid (4,4);
\draw[->-] (0,4) -- (4,4);
\draw[->-] (0,0) -- (4,0);
\draw[->>-] (0,0) -- (0,4);
\draw[->>-] (4,0) -- (4,4);
\draw[line width = 0.05 cm] (0,0) -- (4,0);
\draw[line width = 0.05 cm] (0,2) -- (4,2);
\draw[line width = 0.05 cm] (0,4) -- (4,4);
\node[below] (a) at (2,-0.5) {\scalebox{1.2}{$(1,1)$}};
%\draw[thick, ->-] (0,0) -- (1,1);
%\draw[densely dotted] (0.5, 0.5) -- (3.5,0.5);
\begin{scope}[xshift = 5.5cm]
\draw (0,0) grid (4,4);
\draw[->-] (0,4) -- (4,4);
\draw[->-] (0,0) -- (4,0);
\draw[->>-] (0,0) -- (0,4);
\draw[->>-] (4,0) -- (4,4);
\draw[line width = 0.05 cm] (0,0) -- (3,0);
\draw[line width = 0.05 cm] (0,2) -- (3,2);
\draw[line width = 0.05 cm] (0,4) -- (3,4);
\draw[line width = 0.05 cm] (3,3) -- (4,3);
\draw[line width = 0.05 cm] (3,1) -- (4,1);
\draw[densely dotted] (3.5,4)-- (3.5,0);
\node[above] (b) at (3.5,4) {\scalebox{0.6}{$[b]$}};
\node[below] (a) at (2,-0.5) {\scalebox{1.2}{$(-1,1)$}};
\end{scope}
\begin{scope}[xshift=5.5 cm, yshift=-6cm]
\draw (0,0) grid (4,4);
\draw[->-] (0,4) -- (4,4);
\draw[->-] (0,0) -- (4,0);
\draw[->>-] (0,0) -- (0,4);
\draw[->>-] (4,0) -- (4,4);
\draw[line width = 0.05 cm] (0,0) -- (3,0);
\draw[line width = 0.05 cm] (0,2) -- (3,2);
\draw[line width = 0.05 cm] (0,4) -- (3,4);
\draw[line width = 0.05 cm] (3,3) -- (4,3);
\draw[line width = 0.05 cm] (3,1) -- (4,1);
\draw[line width = 0.05 cm] (0,0) -- (0,1);
\draw[line width = 0.05 cm] (0,0) -- (0,1);
\draw[line width = 0.05 cm] (1,0) -- (1,1);
\draw[line width = 0.05 cm] (2,0) -- (2,1);
\draw[line width = 0.05 cm] (3,0) -- (3,1);
\draw[line width = 0.05 cm] (4,0) -- (4,1);
\draw[densely dotted] (0,0.5) -- (4,0.5);
\draw[densely dotted] (3.5,4) -- (3.5,0);
\node[above] (b) at (3.5,4) {\scalebox{0.6}{$[b]$}};
\node[left] (b) at (0,0.5) {\scalebox{0.6}{$[a]$}};
\node[below] (a) at (2,-0.5) {\scalebox{1.2}{$(-1,-1)$}};
\end{scope}
\begin{scope}[yshift=-6cm]
\draw (0,0) grid (4,4);
\draw[->-] (0,4) -- (4,4);
\draw[->-] (0,0) -- (4,0);
\draw[->>-] (0,0) -- (0,4);
\draw[->>-] (4,0) -- (4,4);
\draw[line width = 0.05 cm] (0,0) -- (4,0);
\draw[line width = 0.05 cm] (0,2) -- (4,2);
\draw[line width = 0.05 cm] (0,4) -- (4,4);
\draw[line width = 0.05 cm] (0,0) -- (0,1);
\draw[line width = 0.05 cm] (0,0) -- (0,1);
\draw[line width = 0.05 cm] (1,0) -- (1,1);
\draw[line width = 0.05 cm] (2,0) -- (2,1);
\draw[line width = 0.05 cm] (3,0) -- (3,1);
\draw[line width = 0.05 cm] (4,0) -- (4,1);
\draw[densely dotted] (0,0.5) -- (4,0.5);
\node[left] (b) at (0,0.5) {\scalebox{0.6}{$[a]$}};
\node[below] (a) at (2,-0.5) {\scalebox{1.2}{$(1,-1)$}};
\end{scope}
\end{tikzpicture}
\caption{All possible $\pi$-flux phases, for $L=4$.}
\label{fig:pifluxes}
\end{figure}
Consider the case $(a,b) = (-1,1)$; all the others will be discussed in a similar way. The only difference with respect to the case $(a,b) = (1,1)$ is that the hopping terms $a^{+}_{L-1, x_{2}} a^{-}_{L, x_{2}} + \text{h.c.} \equiv a^{+}_{L-1, x_{2}} a^{-}_{0, x_{2}} + \text{h.c.}$ in the $(1,1)$ case are replaced in the $(-1,1)$ case by $-a^{+}_{L-1, x_{2}} a^{-}_{L, x_{2}} + \text{h.c.} \equiv -a^{+}_{L-1, x_{2}} a^{-}_{0, x_{2}} + \text{h.c.}$. Thus, the Hamiltonian of the case $(-1,1)$ can be rewritten as the Hamiltonian of the case $(1,1)$ if one reabsorbs the minus sign imposing the new rule $a^{\pm}_{L, x_{2}} \equiv - a^{\pm}_{0, x_{2}} $, which is equivalent to enforcing anti-periodic boundary conditions in the $x_{1}$ direction. 

The Brillouin zone corresponding to anti-periodic boundary conditions in $x_{1}$ and periodic boundary conditions in $x_{2}$ is:
\begin{equation}
B_{L}(-1,1) := \Big\{ k\in \frac{2\pi}{L}\Big(n_{1} - \frac{1}{2}, n_{2}\Big) \mid 0\leq n_{1} \leq L-1,\; 0\leq n_{2} \leq L/2-1  \Big\}\;.
\end{equation}
Thanks to the $-1/2$ shift, the system does not admit any zero mode here, which means that the odd partition function is non-zero. The even and odd ground state energies are:
\begin{equation}
E^{\pm}_{0;L}(\pi; -1,1) = \sum_{k\in B_{L}(-1,1)} e_{-}(k)\;.
\end{equation}
More generally, defining, for $a,b = \pm 1$,
\begin{equation}\label{eq:bab}
B_{L}(a,b) := \Big\{ k\in \frac{2\pi}{L}\Big(n_{1} - \frac{a-1}{4}, n_{2} - \frac{b-1}{4}\Big) \mid 0\leq n_{1} \leq L-1,\; 0\leq n_{2} \leq L/2-1  \Big\}\;,
\end{equation}
the even/odd ground state energies of all $\pi$-flux phases are, for $(\varepsilon, a, b) \neq (-1,1,1)$:
\begin{equation}
\begin{split}
E^{\varepsilon}_{0;L}(\pi; a,b) &= \sum_{k\in B_{L}(a,b)} e_{-}(k) \\
&= \sum_{k\in B_{L}(1,1)} e_{-}\Big(k_1 - \frac{\pi}{2L}(a-1), k_2- \frac{\pi}{2L}(b-1) \Big)\;.
\end{split}
\end{equation}
Therefore, for these choices of labels:
\begin{equation}\label{eq:endif}
\Big| E^{\varepsilon}_{0;L}(\pi; a,b) - E^{\varepsilon}_{0;L}(\pi; 1,1)  \Big| = \Big|  \sum_{k\in B_{L}(1,1)} \Big[e_{-}\Big( k_1 - \frac{\pi}{2L}(a-1), k_2 - \frac{\pi}{2L}(b-1)  \Big) - e_{-}(k_1, k_2 ) \Big]\Big|\;;
\end{equation}
as shown in Appendix \ref{app:diff}, one has, for $0< \alpha < 1/3$:
\begin{equation}\label{eq:bden}
\Big|  \sum_{k\in B_{L}(1,1)} \Big[e_{-}\Big(k_1 - \frac{\pi}{2L}(a-1),k_2 - \frac{\pi}{2L}(b-1)  \Big) - e_{-}(k_1, k_2 ) \Big]\Big| \leq \frac{C_{\alpha}}{L^{1 - 3\alpha}}\;,
\end{equation}
which concludes the proof of (\ref{eq:endiff1}). The case $L/2 = 2n+1$ can be studied in the same way, we omit the details.
\end{proof}
It is interesting to observe that, as $L\to \infty$, the energy bands of $h(k)$ have conical intersections at energy zero, and hence describe a semimetal at half-filling. Let $k^{-}_{F}= (\frac{\pi}{2}, \frac{\pi}{2})$, $k_{F}^{+} = (\frac{3\pi}{2}, \frac{\pi}{2})$. In proximity of these points, we can linearize the dispersion relation as, for $\omega = \pm$:
\begin{equation}\label{eq:dirac}
e_{\pm}(k^{\omega}_{F} + q) = q\cdot \nabla_{q} e_{\pm}(k_{F}) + O(|q|^{2}) = \mp 2t |q| + O(|q|^{2})\;.
\end{equation}
Eq. (\ref{eq:dirac}) describes an isotropic cone, with slope $2t$. This means that the low-energy excitations of $h(k)$ are effectively described by massless Dirac fermions, with Fermi velocity $2t$. This fact will play an important role in the evaluation of the susceptibility, Proposition \ref{prp:susc}.

\subsection{Chessboard estimates}\label{sec:chess}
The goal of this section will be to prove a lower bound for the energy increase due to the removal of monopoles, that is the insertion of plaquettes with flux $0$ instead of $\pi$. We will prove a lower bound for the energy of the system that grows linearly in the number of monopoles' removal. This will be crucial to prove the stability of the $\pi$-flux phase in the lattice gauge theory.

The key technical tool used to prove this lower bound is the chessboard estimate. The next result is an adaptation of \cite[Lemma 4.5]{Tasaki}, to our setting.
\begin{lemma}[Chessboard estimate]\label{lem:chess} Let $L = 4\ell$, and let $F: \mathbb{Z}_{2}^{\Gamma_{L}} \to \mathbb{R}$ be a function, which we represent as $F(\underline{X}_{1}, \ldots, \underline{X}_{L})$ with $\underline{X}_{j} \in \mathbb{Z}_{2}^{L}$. Suppose that it satisfies the following properties:
\begin{equation}\label{eq:assce}
\begin{split}
F(\underline{X}_{1}, \ldots, \underline{X}_{L}) &= F(\underline{X}_{L}, \underline{X}_{1}, \ldots, \underline{X}_{L-1})\hspace{6cm} \text{(cyclicity)} \\
F(\underline{X}_{1}, \ldots, \underline{X}_{L}) &\geq \frac{1}{2} \Big( F(\underline{X}_{1}, \ldots, \underline{X}_{L/2-1}, \underline{-1}, \underline{X}_{L/2 - 1}, \ldots, \underline{X}_{1}, \underline{-1})\\&\quad + F(\underline{X}_{L}, \ldots, \underline{X}_{L/2 + 1}, \underline{-1}, \underline{X}_{L/2 + 1},\ldots, \underline{X}_{L}, \underline{-1})\Big)\qquad \text{(reflection bound)}
\end{split}
\end{equation}
where $\underline{-1}$ is a column vector with $L$ entries, all equal to $-1$. Then:
\begin{equation}\label{eq:CE}
F(\underline{X}_{1}, \ldots, \underline{X}_{L}) \geq \frac{1}{L/2} \sum_{j=1}^{L/2} F(\underline{X}_{2j-1}, \underline{-1}, \underline{X}_{2j-1}, \underline{-1}, \ldots, \underline{X}_{2j-1}, \underline{-1})\;.
\end{equation}
\end{lemma}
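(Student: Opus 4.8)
The plan is to prove the chessboard bound \eqref{eq:CE} by iterating the reflection bound in \eqref{eq:assce} and then symmetrizing the resulting weights by means of cyclicity, following in spirit the corresponding step of \cite{Tasaki}.

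First I would promote the single reflection bound in \eqref{eq:assce} to a reflection bound about \emph{every} antipodal pair of columns. Writing $S$ for the cyclic shift $F(\underline{X}_1, \ldots, \underline{X}_L)\mapsto F(\underline{X}_2,\ldots,\underline{X}_L,\underline{X}_1)$, the cyclicity hypothesis reads $F\circ S = F$, so applying \eqref{eq:assce} to $S^{k}\underline{X}$ and using $F(S^{k}\underline{X}) = F(\underline{X})$ yields, for each column pair $\{a,a+L/2\}$, a bound $F(\underline{X})\ge \tfrac12\big(F(\theta^+_a\underline{X})+F(\theta^-_a\underline{X})\big)$, where $\theta^\pm_a$ mirrors one of the two arcs cut out by the planes onto the other and sets the two plane-columns $a,a+L/2$ to $\underline{-1}$. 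The key structural remark is that each reflection $x\mapsto 2a-x$ preserves the parity of the column index; taking $a$ even (which is exactly the case of the planes $L/2,L$ in \eqref{eq:assce}, since $L=4\ell$ makes both $L/2$ and $L$ even) keeps all forced $\underline{-1}$'s on the even sublattice and leaves the odd columns as the ``data'' columns. This is precisely where $L=4\ell$ is used: for $L=2(2\ell+1)$ the natural planes are odd, producing a different chessboard pattern.

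Next I would iterate: applying the reflection bound about a suitable sequence of even plane-pairs, each step mirrors one half of the current configuration onto the other and forces two further even columns to $\underline{-1}$. The combinatorial heart of the argument is to show that after finitely many reflections every branch of the resulting binary tree is a genuine period-two (chessboard) configuration $G_m[\underline{X}]:=(\underline{X}_m,\underline{-1},\underline{X}_m,\underline{-1},\ldots)$ with $m$ odd. The mechanism is that a well-chosen sequence of reflections generates translations by $4$ together with a retained reflection symmetry, and period-four symmetry plus reflection symmetry plus $\underline{-1}$ on the whole even sublattice forces the odd columns to collapse to a single value, i.e.\ to a period-two configuration. This produces a bound $F(\underline{X})\ge \sum_{m\ \text{odd}} w_m\,F(G_m[\underline{X}])$ with fixed weights $w_m\ge 0$, $\sum_m w_m = 1$, which are in general \emph{not} uniform, since $L/2=2\ell$ need not be a power of two.

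Finally I would symmetrize the weights using cyclicity. Since even shifts preserve parity, for each $r$ one has $F(S^{2r}\underline{X}) = F(\underline{X})$, while the $m$-th column of $S^{2r}\underline{X}$ is $\underline{X}_{m+2r}$, so $G_m[S^{2r}\underline{X}] = G_{m+2r}[\underline{X}]$. Applying the previous bound to $S^{2r}\underline{X}$ for $r=0,1,\ldots,L/2-1$ and averaging over $r$ turns the coefficient of each $G_{m'}[\underline{X}]$ into $\tfrac{1}{L/2}\sum_m w_m = \tfrac{1}{L/2}$, uniformly over the $L/2$ odd columns $m'=2j-1$, which is exactly \eqref{eq:CE}. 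The main obstacle is the middle step: controlling the iterated reflections so that every branch terminates at a true period-two configuration rather than at a coarser palindromic or period-four configuration; this is the combinatorial content that the restriction $L=4\ell$ and the even/odd two-coloring are there to ensure, while the averaging trick of the last step is what makes the final weights uniform without requiring $L/2$ to be a power of two.
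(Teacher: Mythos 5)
Your opening and closing steps are fine: cyclicity does extend the reflection bound \eqref{eq:assce} to every antipodal pair of even planes, and the final averaging over even cyclic shifts would correctly uniformize the weights, provided the middle step delivered configuration-independent weights at all. The genuine gap is the middle step. The claim that a finite, well-chosen sequence of reflections resolves \emph{every} branch of the binary tree into a period-two chessboard configuration is false for general $L=4\ell$: it essentially requires $\ell$ to be a power of two. Concretely, let $L=12$ and reflect first about the plane pair $\{6,12\}$, then about $\{2,8\}$; one of the resulting branches is
\begin{equation*}
(\underline{X}_3,\, \underline{-1},\, \underline{X}_3,\, \underline{X}_4,\, \underline{X}_5,\, \underline{-1},\, \underline{X}_5,\, \underline{-1},\, \underline{X}_5,\, \underline{-1},\, \underline{X}_5,\, \underline{X}_4)\;.
\end{equation*}
Its even sublattice is \emph{not} all $\underline{-1}$: reflections preserve parity, so they also copy data columns (here $\underline{X}_4$) onto even positions. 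For this configuration, which has two even ``defect'' columns at distance four and two distinct odd values, you can check all three admissible reflections: the pairs $\{4,10\}$ and $\{6,12\}$ each produce one chessboard branch and one branch with exactly the same defect structure, while the pair $\{2,8\}$ produces two defective branches. Hence defective branches reproduce themselves forever, no finite tree has all its leaves of the chessboard form $G_m[\underline{X}]$, and the mechanism you invoke (``period four plus reflection symmetry collapses the odd columns'') never gets to act, since it presupposes that the whole even sublattice has already been forced to $\underline{-1}$. The hypothesis $L=4\ell$ buys the parity bookkeeping, not termination.

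The argument can be repaired, but it needs an idea not present in your proposal. One option is to iterate indefinitely with a value-independent schedule that gives every defective node at least one chessboard child: the defective nodes at depth $d$ then carry total weight $O(2^{-d})$, and since $F$ has finite range their contribution vanishes, leaving in the limit a convex combination $\sum_m w_m F(G_m[\underline{X}])$ with fixed weights given by geometric series; your symmetrization then applies. The cleaner route, which is the one the paper takes following \cite{Tasaki}, is variational: define
\begin{equation*}
G(\underline{X}_1,\ldots,\underline{X}_L) := F(\underline{X}_1,\ldots,\underline{X}_L) - \frac{1}{L/2}\sum_{j=1}^{L/2} F(\underline{X}_{2j-1}, \underline{-1}, \underline{X}_{2j-1}, \underline{-1},\ldots, \underline{X}_{2j-1}, \underline{-1})\;,
\end{equation*}
observe that the subtracted average satisfies the reflection relation \emph{with equality} (again because $L/2$ is even, so the odd columns of the two reflected configurations exactly repartition the odd columns of the original one), so that $G$ inherits the reflection bound and invariance under even shifts, and then consider a \emph{global minimizer} of $G$. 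At a minimizer the reflection bound forces \emph{both} reflected configurations to be minimizers, so it suffices to follow a \emph{single} branch; the single branch ``reflect about $\{L/2,L\}$, shift by two, repeat'' does reach a chessboard configuration in finitely many steps, where $G=0$. Hence $\min G = 0$, which is \eqref{eq:CE}. Minimality is exactly what eliminates the need to control all branches simultaneously---the need your construction cannot meet.
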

\begin{proof} Let us define the function:
\begin{equation}\label{eq:Gdef}
G(\underline{X}_{1}, \ldots, \underline{X}_{L}) := F(\underline{X}_{1}, \ldots, \underline{X}_{L}) - \frac{1}{L/2} \sum_{j=1}^{L/2} F(\underline{X}_{2j-1}, \underline{-1}, \underline{X}_{2j-1}, \underline{-1}, \ldots, \underline{X}_{2j-1}, \underline{-1})\;.
\end{equation}
We claim that this function satisfies both properties above. The cyclicity of $G$ follows from the cyclicity of $F$, and from the fact that the second term is invariant under permutations $\underline{X}_{2j-1} \to \underline{X}_{2\sigma(j)-1}$ for all $\sigma \in S_{L/2}$. Let us now check the reflection bound. Applying the reflection bound for the first term in $G$, we have:
\begin{equation}\label{eq:reflG}
\begin{split}
G(\underline{X}_{1}, \ldots, \underline{X}_{L}) &\geq \frac{1}{2} \Big( F(\underline{X}_{1}, \ldots, \underline{X}_{L/2-1}, \underline{-1}, \underline{X}_{L/2 - 1}, \ldots, \underline{X}_{1}, \underline{-1})\\&\quad + F(\underline{X}_{L}, \ldots, \underline{X}_{L/2 + 1}, \underline{-1}, \underline{X}_{L/2 + 1},\ldots, \underline{X}_{L}, \underline{-1})\Big) \\
&\quad - \frac{1}{L/2} \sum_{j=1}^{L/2} F(\underline{X}_{2j-1}, \underline{-1}, \underline{X}_{2j-1}, \underline{-1}, \ldots, \underline{X}_{2j-1}, \underline{-1})\;.
\end{split}
\end{equation}
Let:
\begin{equation}
\widetilde{F}(\underline{X}_{1}, \ldots, \underline{X}_{L}) := \frac{1}{L/2} \sum_{j=1}^{L/2} F(\underline{X}_{2j-1}, \underline{-1}, \underline{X}_{2j-1}, \underline{-1}, \ldots, \underline{X}_{2j-1}, \underline{-1})\;;
\end{equation}
from:
\begin{equation}
\begin{split}
\widetilde{F}(\underline{X}_{1}, \ldots, \underline{X}_{L}) &= \frac{1}{2} \Big( \widetilde{F}(\underline{X}_{1}, \ldots, \underline{X}_{L/2-1}, \underline{-1}, \underline{X}_{L/2 - 1}, \ldots, \underline{X}_{1}, \underline{-1})\\&\quad + \widetilde{F}(\underline{X}_{L}, \ldots, \underline{X}_{L/2 + 1}, \underline{-1}, \underline{X}_{L/2 + 1},\ldots, \underline{X}_{L}, \underline{-1})\Big)\;,
\end{split}
\end{equation}
and (\ref{eq:reflG}), we have:
\begin{equation}
\begin{split}
G(\underline{X}_{1}, \ldots, \underline{X}_{L}) &\geq \frac{1}{2} \Big( G(\underline{X}_{1}, \ldots, \underline{X}_{L/2-1}, \underline{-1}, \underline{X}_{L/2 - 1}, \ldots, \underline{X}_{1}, \underline{-1})\\&\quad + G(\underline{X}_{L}, \ldots, \underline{X}_{L/2 + 1}, \underline{-1}, \underline{X}_{L/2 + 1},\ldots, \underline{X}_{L}, \underline{-1})\Big)\;,
\end{split}
\end{equation}
which proves the reflection bound for $G$. Next, we shall prove (\ref{eq:CE}). Let us define:
\begin{equation}
G_{*} = \min_{\underline{X}_{1}, \ldots,\, \underline{X}_{L} \in \mathbb{Z}_{2}^{\Gamma_{L}}} G(\underline{X}_{1}, \ldots, \underline{X}_{L})\;.
\end{equation}
Let $\underline{X}_{1}^{*}, \ldots, \underline{X}_{L}^{*}$ be a minimizing configuration. Then, by the reflection bound:
\begin{equation}
\begin{split}
G(\underline{X}_{1}^{*}, \ldots, \underline{X}_{L}^{*}) &\geq \frac{1}{2} \Big( G(\underline{X}^{*}_{1}, \ldots, \underline{X}^{*}_{L/2-1}, \underline{-1}, \underline{X}^{*}_{L/2 - 1}, \ldots, \underline{X}^{*}_{1}, \underline{-1})\\&\quad + G(\underline{X}^{*}_{L}, \ldots, \underline{X}^{*}_{L/2 + 1}, \underline{-1}, \underline{X}^{*}_{L/2 + 1},\ldots, \underline{X}^{*}_{L}, \underline{-1})\Big)\;,
\end{split}
\end{equation}
and since $\{\underline{X}_{j}^{*}\}$ is a minimizing configuration:
\begin{equation}
G(\underline{X}_{1}^{*}, \ldots, \underline{X}_{L}^{*}) = G(\underline{X}^{*}_{1}, \ldots, \underline{X}^{*}_{L/2-1}, \underline{-1}, \underline{X}^{*}_{L/2 - 1}, \ldots, \underline{X}^{*}_{1}, \underline{-1})\;.
\end{equation}
Now, by cyclicity,
\begin{equation}
G(\underline{X}_{1}^{*}, \ldots, \underline{X}_{L}^{*}) = G(\underline{X}^{*}_{1}, \underline{-1}, \underline{X}^{*}_{1}, \ldots, \underline{X}^{*}_{L/2-1}, \underline{-1}, \underline{X}^{*}_{L/2 - 1}, \ldots, \underline{X}_{2}^{*})\;.
\end{equation}
Iterating the argument, we ultimately get:
\begin{equation}
G(\underline{X}_{1}^{*}, \ldots, \underline{X}_{L}^{*}) = G(\underline{X}_{1}^{*}, \underline{-1}, \underline{X}_{1}^{*}, \underline{-1}, \ldots, \underline{X}_{1}^{*}, \underline{-1})\;,
\end{equation}
and recalling the definition of $G$, Eq. (\ref{eq:Gdef}), we find:
\begin{equation}
G(\underline{X}_{1}^{*}, \ldots, \underline{X}_{L}^{*}) = 0\;.
\end{equation}
Therefore, since $G(\underline{X}_{1}, \ldots, \underline{X}_{L}) \geq G(\underline{X}_{1}^{*}, \ldots, \underline{X}_{L}^{*})$, we obtain:
\begin{equation}
F(\underline{X}_{1}, \ldots, \underline{X}_{L}) \geq \frac{1}{L/2} \sum_{j=1}^{L/2} F(\underline{X}_{2j-1}, \underline{-1}, \underline{X}_{2j-1}, \underline{-1}, \ldots, \underline{X}_{2j-1}, \underline{-1})\;,
\end{equation}
which concludes the proof of the lemma.
\end{proof}
We will apply this argument to the even/odd partition functions of the matter sector. To do this, let us denote by $\phi_{i,j}$ the flux associated with the plaquette $(i,j)$ with $1\leq i \leq L$; the labelling of the plaquettes coincides with the coordinates of the vertex of the plaquette sitting in the up-right corner. Recall that, see Remark \ref{rem:not}, we are identifying the flux through a plaquette with the product of the spins around the same plaquette (same for the non-contractible loops). Let us denote by $\underline{\Phi}_{j} \in \mathbb{Z}_{2}^{L}$ the collection of all magnetic fluxes with column label $j$. We define:
\begin{equation}\label{eq:Fdef}
F^{\pm}(\underline{\Phi}_{1}, \ldots, \underline{\Phi}_{L}) := - \max_{a,b}\log Z^{\pm}_{\beta,L}(\underline{\Phi}_{1}, \ldots, \underline{\Phi}_{L}; a,b)
\end{equation}
where $Z^{\pm}_{\beta,L}(\underline{\Phi}_{1}, \ldots, \underline{\Phi}_{L}; a,b)$ is the even/odd partition function associated with the given configuration of magnetic fluxes, and with fluxes $(a,b)$ on the non-contractible loops. In case zero modes are present for a given $(a,b)$, $\log Z^{-}$ is understood as $-\infty$.

Let us now check that the functions $F^{\pm}$ satisfies the hypothesis of Lemma \ref{lem:chess}, starting from cyclicity. Let $\mathcal{T}_{a}$ be the unitary operator implementing the lattice translation by $a\in \Gamma_{L}$:
\begin{equation}
\mathcal{T}_{a}^{*} a^{\pm}_{x} \mathcal{T}_{a} = a^{\pm}_{x + a}\;.
\end{equation} 
Calling $\bm{\sigma}$ a gauge field configuration associated with the configuration of fluxes $\underline{\Phi}_{L}, \underline{\Phi}_{1}, \ldots, \underline{\Phi}_{L-1}$, we have:
\begin{equation}
\begin{split}
F^{\pm}(\underline{\Phi}_{L}, \underline{\Phi}_{1}, \ldots, \underline{\Phi}_{L-2}, \underline{\Phi}_{L-1}) &= - \max_{a,b}\log Z^{\pm}_{\beta,L}(\underline{\Phi}_{L}, \ldots, \underline{\Phi}_{L-2}, \underline{\Phi}_{L-1}; a,b) \\
&= - \max_{a,b}\log \Tr\, (\pm 1)^{N} e^{-\beta H_{f}(\bm{\sigma})} \\
&= - \max_{a,b}\log \Tr \mathcal{T}_{-e_{1}}^{*} (\pm 1)^{N} e^{-\beta H_{f}(\bm{\sigma})} \mathcal{T}_{-e_{1}} \\
&= - \max_{a,b}\log \Tr\, (\pm 1)^{N} e^{-\beta H_{f}(\tilde {\bm \sigma})}\;.
\end{split}
\end{equation}
where $\tilde {\bm \sigma}$ is a gauge field configuration associated with the configuration of fluxes $\underline{\Phi}_{1}, \underline{\Phi}_{2}, \ldots, \underline{\Phi}_{L}$ (the periodicity of the torus has been used). This proves cyclicity in (\ref{eq:assce}). Concerning the reflection estimate, this simply follows from the reflection positivity bound (\ref{eq:RP}), after taking logs. We have, for suitable $(a_{i},b_{i})$, $i=1,2$:
\begin{equation}
\begin{split}
&- \log Z^{\pm}_{\beta,L}(\underline{\Phi}_{1}, \ldots, \underline{\Phi}_{L}; a,b) \\
&\qquad \geq \frac{1}{2} \Big( - \log Z^{\pm}_{\beta,L}(\underline{\Phi}_{1}, \ldots, \underline{\Phi}_{L/2-1}, \underline{-1},\underline{\Phi}_{L/2-1}, \ldots, \underline{\Phi}_{1}, \underline{-1}; a_{1},b_{1}) \\
&\quad \qquad - \log Z^{\pm}_{\beta,L}(\underline{\Phi}_{L}, \ldots, \underline{\Phi}_{L/2+1}, \underline{-1},\underline{\Phi}_{L/2+1}, \ldots, \underline{\Phi}_{L}, \underline{-1}; a_{2},b_{2})\Big) \\
&\qquad \geq \frac{1}{2} \Big( - \max_{a,b}\log Z^{\pm}_{\beta,L}(\underline{\Phi}_{1}, \ldots, \underline{\Phi}_{L/2-1}, \underline{-1},\underline{\Phi}_{L/2-1}, \ldots, \underline{\Phi}_{1}, \underline{-1}; a,b) \\
&\quad \qquad - \max_{a,b}\log Z^{\pm}_{\beta,L}(\underline{\Phi}_{L}, \ldots, \underline{\Phi}_{L/2+1}, \underline{-1},\underline{\Phi}_{L/2+1}, \ldots, \underline{\Phi}_{L}, \underline{-1}; a,b)\Big)\;.
\end{split}
\end{equation}
Minimizing the left-hand side over $a,b$, we see that the function (\ref{eq:Fdef}) satisfies the reflection bound (\ref{eq:assce}).

Thus, we are in the position to apply the bound (\ref{eq:CE}) to the even/odd free energies of the matter sector. This bound will allow us, in particular, to estimate the energetic contribution of monopoles' removal in the matter sector. To this end, let use denote by $(\phi^{*}_{i,j})$ the chessboard flux configuration,
\begin{equation}
\phi^{*}_{i,j} = \left\{ \begin{array}{cc} 1 & \text{if $i$ and $j$ are odd} \\ -1 & \text{otherwise.} \end{array} \right.
\end{equation} 

\begin{figure}
\centering
\begin{tikzpicture}[scale=0.8]
\draw (0,0) grid (8,8);
\draw[->-] (0,8) -- (8,8);
\draw[->-] (0,0) -- (8,0);
\draw[->>-] (0,0) -- (0,8);
\draw[->>-] (8,0) -- (8,8);
\node (a) at (0.5,7.5) {\scalebox{1.2}{$0$}};
\begin{scope}[xshift = 2cm]
\node (a) at (0.5,7.5) {\scalebox{1.2}{$0$}};
\end{scope}
\begin{scope}[xshift = 4cm]
\node (a) at (0.5,7.5) {\scalebox{1.2}{$0$}};
\end{scope}
\begin{scope}[xshift = 6cm]
\node (a) at (0.5,7.5) {\scalebox{1.2}{$0$}};
\end{scope}
\begin{scope}[yshift = -2cm]
\node (a) at (0.5,7.5) {\scalebox{1.2}{$0$}};
\end{scope}
\begin{scope}[xshift = 2cm, yshift = -2cm]
\node (a) at (0.5,7.5) {\scalebox{1.2}{$0$}};
\end{scope}
\begin{scope}[xshift = 4cm, yshift = -2cm]
\node (a) at (0.5,7.5) {\scalebox{1.2}{$0$}};
\end{scope}
\begin{scope}[xshift = 6cm, yshift = -2cm]
\node (a) at (0.5,7.5) {\scalebox{1.2}{$0$}};
\end{scope}
\begin{scope}[yshift = -4cm]
\node (a) at (0.5,7.5) {\scalebox{1.2}{$0$}};
\end{scope}
\begin{scope}[xshift = 2cm, yshift = -4cm]
\node (a) at (0.5,7.5) {\scalebox{1.2}{$0$}};
\end{scope}
\begin{scope}[xshift = 4cm, yshift = -4cm]
\node (a) at (0.5,7.5) {\scalebox{1.2}{$0$}};
\end{scope}
\begin{scope}[xshift = 6cm, yshift = -4cm]
\node (a) at (0.5,7.5) {\scalebox{1.2}{$0$}};
\end{scope}
\begin{scope}[yshift = -6cm]
\node (a) at (0.5,7.5) {\scalebox{1.2}{$0$}};
\end{scope}
\begin{scope}[xshift = 2cm, yshift = -6cm]
\node (a) at (0.5,7.5) {\scalebox{1.2}{$0$}};
\end{scope}
\begin{scope}[xshift = 4cm, yshift = -6cm]
\node (a) at (0.5,7.5) {\scalebox{1.2}{$0$}};
\end{scope}
\begin{scope}[xshift = 6cm, yshift = -6cm]
\node (a) at (0.5,7.5) {\scalebox{1.2}{$0$}};
\end{scope}
\draw[fill = black, fill opacity = 0.6] (1,8)--(2,8) -- (2,0) -- (1,0) -- (1,8);
\draw[fill = black, fill opacity = 0.6] (3,8)--(4,8) -- (4,0) -- (3,0) -- (3,8);
\draw[fill = black, fill opacity = 0.6] (7,8)--(8,8) -- (8,0) -- (7,0) -- (7,8);
\draw[fill = black, fill opacity = 0.6] (5,8)--(6,8) -- (6,0) -- (5,0) -- (5,8);
\draw[fill = black, fill opacity = 0.6] (0,7)--(0,6) -- (1,6) -- (1,7) -- (0,7);
\draw[fill = black, fill opacity = 0.6] (2,7)--(2,6) -- (3,6) -- (3,7) -- (2,7);
\begin{scope}[xshift = 4cm]
\draw[fill = black, fill opacity = 0.6] (0,7)--(0,6) -- (1,6) -- (1,7) -- (0,7);
\end{scope}
\begin{scope}[xshift = 4cm, yshift = -2cm]
\draw[fill = black, fill opacity = 0.6] (0,7)--(0,6) -- (1,6) -- (1,7) -- (0,7);
\end{scope}
\begin{scope}[xshift = 2cm, yshift = -2cm]
\draw[fill = black, fill opacity = 0.6] (0,7)--(0,6) -- (1,6) -- (1,7) -- (0,7);
\end{scope}
\begin{scope}[yshift = -2cm]
\draw[fill = black, fill opacity = 0.6] (0,7)--(0,6) -- (1,6) -- (1,7) -- (0,7);
\end{scope}
\begin{scope}[xshift = 6cm]
\draw[fill = black, fill opacity = 0.6] (0,7)--(0,6) -- (1,6) -- (1,7) -- (0,7);
\end{scope}
\begin{scope}[xshift = 6cm, yshift=-2cm]
\draw[fill = black, fill opacity = 0.6] (0,7)--(0,6) -- (1,6) -- (1,7) -- (0,7);
\end{scope}
\begin{scope}[xshift = 6cm, yshift=-4cm]
\draw[fill = black, fill opacity = 0.6] (0,7)--(0,6) -- (1,6) -- (1,7) -- (0,7);
\end{scope}
\begin{scope}[xshift = 2cm, yshift = -4cm]
\draw[fill = black, fill opacity = 0.6] (0,7)--(0,6) -- (1,6) -- (1,7) -- (0,7);
\end{scope}
\begin{scope}[xshift = 4cm, yshift = -4cm]
\draw[fill = black, fill opacity = 0.6] (0,7)--(0,6) -- (1,6) -- (1,7) -- (0,7);
\end{scope}
\begin{scope}[yshift = -4cm]
\draw[fill = black, fill opacity = 0.6] (0,7)--(0,6) -- (1,6) -- (1,7) -- (0,7);
\end{scope}
\begin{scope}[xshift = 6cm, yshift=-6cm]
\draw[fill = black, fill opacity = 0.6] (0,7)--(0,6) -- (1,6) -- (1,7) -- (0,7);
\end{scope}
\begin{scope}[xshift = 2cm, yshift = -6cm]
\draw[fill = black, fill opacity = 0.6] (0,7)--(0,6) -- (1,6) -- (1,7) -- (0,7);
\end{scope}
\begin{scope}[xshift = 4cm, yshift = -6cm]
\draw[fill = black, fill opacity = 0.6] (0,7)--(0,6) -- (1,6) -- (1,7) -- (0,7);
\end{scope}
\begin{scope}[yshift = -6cm]
\draw[fill = black, fill opacity = 0.6] (0,7)--(0,6) -- (1,6) -- (1,7) -- (0,7);
\end{scope}
%\draw[thick, ->-] (0,0) -- (1,1);
%\draw[dashed, thick] 
%\draw[densely dotted] (0.5, 0.5) -- (3.5,0.5);
%\begin{scope}[xshift=6cm]
%\draw (0,0) grid (5,5);
%\draw[thick, red, fill = red, fill opacity = 0.3] (0,0)--(0,4) -- (4,4) -- (4,0) -- (0,0);
\end{tikzpicture}
\caption{Graphical representation of the flux configuration corresponding to $\phi_{i,j}^{*}$. The white plaquettes correspond to $\phi^{*}_{i,j} = 1$ while the dark plaquettes correspond to $\phi^{*}_{i,j} = -1$.}
\label{fig:conf}
\end{figure}
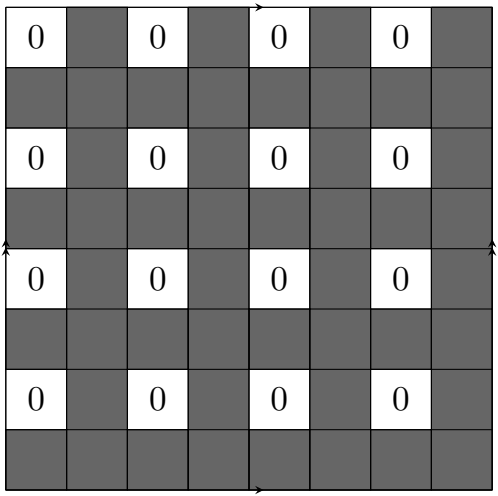

See Fig. \ref{fig:conf} for a graphical representation. The next proposition is a key consequence of the chessboard estimate.
\begin{proposition}[Removing monopoles]\label{prop:remove} Let $L = 4\ell$. Given any configuration $\Phi$ of magnetic fluxes, let $k \in \mathbb{N}$ be the number of plaquettes with magnetic flux equal to zero ({\it i.e.} $B_{\Lambda} = 1$). Then, the following bound holds true:
\begin{equation}\label{eq:mono}
- \frac{1}{\beta}\log Z^{\pm}_{\beta,L}(\Phi; a,b) \geq k \Delta^{\pm}_{\beta,L} - \frac{1}{\beta}\log Z^{\pm}_{*}
\end{equation}
with $Z^{\pm}_{*}$ the even/odd partition functions of the $\pi$-flux phase with boundary conditions $(a,b) = (-1,-1)$ and:
\begin{equation}\label{eq:delta}
\Delta^{\pm}_{\beta,L} = -\frac{1}{\beta L^{2}} \Big( \max_{a,b} \log Z^{\pm}_{\beta,L}(\Phi^{*}; a,b) - \log Z^{\pm}_{*} \Big)
\end{equation}
where $\Phi^{*} = \{\phi_{i,j}^{*}\}$ is the chessboard flux configuration (see Fig. \ref{fig:conf}).
\end{proposition}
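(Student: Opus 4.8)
The plan is to reduce everything to the chessboard estimate of Lemma \ref{lem:chess} applied to the free-energy functional $F^{\pm}$ of \eqref{eq:Fdef}, and then to turn the resulting bound into one that is \emph{linear} in the number $k$ of removed monopoles by exploiting the full translation invariance of $F^{\pm}$. First I would pass from partition functions to $F^{\pm}$. Since $F^{\pm}(\Phi)=-\max_{a,b}\log Z^{\pm}_{\beta,L}(\Phi;a,b)$, for every fixed $(a,b)$ one has $-\tfrac{1}{\beta}\log Z^{\pm}_{\beta,L}(\Phi;a,b)\geq \tfrac{1}{\beta}F^{\pm}(\Phi)$, so it suffices to bound $\tfrac1\beta F^{\pm}(\Phi)$ from below. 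Recalling \eqref{eq:delta} and that $Z^{\pm}_{*}=\max_{a,b}Z^{\pm}_{\beta,L}(\pi;a,b)$ (the optimum being $(a,b)=(-1,-1)$ for $L=4\ell$, by Proposition \ref{prp:deg} and Remark \ref{rem:bdry}), the target inequality \eqref{eq:mono} is equivalent to the clean estimate
\begin{equation}
F^{\pm}(\Phi)-F^{\pm}(\pi)\ \geq\ \frac{k}{L^{2}}\big(F^{\pm}(\Phi^{*})-F^{\pm}(\pi)\big),
\end{equation}
where $F^{\pm}(\pi)=-\log Z^{\pm}_{*}$; indeed the right-hand side equals $\beta k\,\Delta^{\pm}_{\beta,L}$ by the definition of $\Delta^{\pm}_{\beta,L}$.

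The heart of the argument is a \emph{double} application of the chessboard estimate. Having checked, as in the discussion preceding the proposition, that $F^{\pm}$ satisfies cyclicity and the reflection bound \eqref{eq:assce} as a function of its columns $\underline{\Phi}_{1},\dots,\underline{\Phi}_{L}$ — and, by the symmetry of the reflection-positivity setup (horizontal cuts work exactly as vertical ones), as a function of its rows — I would first apply \eqref{eq:CE} in the column direction. This forces all even columns to be all-$\pi$-flux ($\underline{-1}$) and tiles the odd columns by a single one. Applying \eqref{eq:CE} again in the row direction to each resulting term forces the even rows to all-$\pi$-flux as well. The net effect is that the surviving configurations are periodic with period two in both directions, carrying a single flux value at the odd–odd site, namely $\phi_{2m-1,2j-1}$ of the original $\Phi$. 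Hence each reduced configuration is either the chessboard $\Phi^{*}$ (if that plaquette carries $0$-flux) or the pure $\pi$-flux phase (if it carries $\pi$-flux), and the double estimate yields
\begin{equation}
F^{\pm}(\Phi)\ \geq\ \frac{1}{(L/2)^{2}}\Big(k_{\mathrm{oo}}(\Phi)\,F^{\pm}(\Phi^{*})+\big((L/2)^{2}-k_{\mathrm{oo}}(\Phi)\big)F^{\pm}(\pi)\Big),
\end{equation}
where $k_{\mathrm{oo}}(\Phi)$ counts the $0$-flux plaquettes sitting at odd–odd positions.

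The final step handles the fact that $k_{\mathrm{oo}}$ only sees one quarter of the plaquettes. Here I would use that $F^{\pm}$ is translation invariant — this is precisely the cyclicity already established, realized by the translation operators $\mathcal{T}_{a}$ — so the same bound holds verbatim with $\Phi$ replaced by any translate $\mathcal{T}_{a}\Phi$, $a\in\{(0,0),(1,0),(0,1),(1,1)\}$. Every plaquette of $\Phi$ occupies an odd–odd position for exactly one of these four translates, so $\sum_{a}k_{\mathrm{oo}}(\mathcal{T}_{a}\Phi)=k$. Summing the four bounds and using $F^{\pm}(\mathcal{T}_{a}\Phi)=F^{\pm}(\Phi)$ together with $4(L/2)^{2}=L^{2}$ gives
\begin{equation}
F^{\pm}(\Phi)\ \geq\ \frac{k}{L^{2}}F^{\pm}(\Phi^{*})+\Big(1-\frac{k}{L^{2}}\Big)F^{\pm}(\pi),
\end{equation}
which is the clean estimate and hence \eqref{eq:mono}.

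The step I expect to be the main obstacle is the geometric bookkeeping of the double reduction: one must verify carefully that the two successive chessboard reductions really collapse onto $\Phi^{*}$ (and not some other period-two pattern), that the value controlling each reduced term is exactly the original flux $\phi_{2m-1,2j-1}$, and that the four-translate sum counts every monopole precisely once. By contrast, the positivity $\Delta^{\pm}_{\beta,L}\geq 0$ — needed only to make the bound \emph{useful} for the stability argument, not to establish \eqref{eq:mono} itself — is an independent matter, settled later through the explicit computation of $\Delta^{\pm}_{\beta,L}$.
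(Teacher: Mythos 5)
Your proposal is correct and follows essentially the same route as the paper: a double application of the chessboard estimate (columns, then rows) collapsing each term to either $\Phi^{*}$ or the pure $\pi$-flux configuration, followed by an averaging over four cases so that every $0$-flux plaquette is counted exactly once, yielding the bound linear in $k$. The only cosmetic difference is that the paper realizes this averaging by writing $F=4\times\tfrac14 F$ and cutting along planes of all four parities (even/odd columns and rows), while you translate $\Phi$ by the four shifts and invoke translation invariance of $F^{\pm}$ — these are the same device, since cyclicity of $F^{\pm}$ is precisely that translation invariance.
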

\begin{remark} This bound is particularly useful, because the quantity $\Delta^{\pm}_{\beta,L}$ can be explicitly computed as $\beta \to \infty$, see Proposition \ref{prp:compD}. We will refer to $\Delta_{\infty} = \lim_{L\to \infty} \lim_{\beta \to \infty} \Delta^{\pm}_{\beta,L}$ as the monopole's mass (the limit is the same for even and odd partition functions).
\end{remark}
\begin{proof} The formula (\ref{eq:CE}) has been obtained via iterative reflections across vertical planes. We could have of course obtained a similar estimate, reflecting across horizontal planes. We do so for all entries of the sum in (\ref{eq:CE}). The final result is:
\begin{equation}\label{eq:CE2}
F(\underline{X}_{1}, \ldots, \underline{X}_{L}) \geq \frac{1}{(L/2)^{2}} \sum_{i,j = 1}^{L/2} F(\underline{Y}_{2j-1,2i-1}, \underline{-1}, \underline{Y}_{2j-1,2i-1}, \underline{-1}, \ldots, \underline{Y}_{2j-1,2i-1}, \underline{-1})\;,
\end{equation}
where we introduced the vectors:
\begin{equation}
\underline{Y}_{j,i}^{T} = ( (\underline{X}_{j})_{i}, -1, (\underline{X}_{j})_{i}, -1, \ldots, (\underline{X}_{j})_{i}, -1)\;.
\end{equation}
Graphically, the argument of $F$ in the right-hand side of (\ref{eq:CE2}) has a chessboard structure, as in Fig. \ref{fig:conf}, with $0$ replaced by $0,\pi$ depending on whether $(\underline{X}_{2j-1})_{2i-1} = 1,-1$, respectively. Observe that the right-hand side of (\ref{eq:CE2}) only depends on the values $(\underline{X}_{\ell})_{m}$  for odd $\ell$, $m$. This is due to the fact that, in the proof of Proposition \ref{lem:chess}, we decided to cut the system via planes that intersect plaquettes with even column label. Of course, there is nothing special about this choice; we could have decided to cut the system in correspondence with the odd columns. To take into account all possible choices, we write:
\begin{equation}
F(\underline{X}_{1}, \ldots, \underline{X}_{L}) = 4\times \frac{1}{4} F(\underline{X}_{1}, \ldots, \underline{X}_{L})
\end{equation}
and we apply the bound (\ref{eq:CE}) to the four functions $(1/4) F(\cdot)$, using cuts via orthogonal planes associated with even columns-even rows, even columns-odd rows, odd columns-even rows and odd columns-odd rows. The final result is, using cyclicity:
\begin{equation}\label{eq:chess}
F(\underline{X}_{1}, \ldots, \underline{X}_{L}) \geq \frac{1}{L^{2}} \sum_{i,j = 1}^{L} F(\underline{Y}_{i,j}, \underline{-1}, \underline{Y}_{i,j}, \underline{-1}, \ldots, \underline{Y}_{i,j}, \underline{-1})\;.
\end{equation}
Let us now use this estimate to prove (\ref{eq:mono}). Let us choose $F$ as in (\ref{eq:Fdef}). Denoting by $k$ the number of elements of $\Phi$ such that $\phi_{i,j} = 1$, Eq. (\ref{eq:chess}) implies:
\begin{equation}
\begin{split}
- \frac{1}{\beta} \max_{a,b} \log Z^{\pm}_{\beta,L}(\underline{\Phi}; a,b) &\geq -\frac{k}{\beta L^{2}} \max_{a,b} \log Z^{\pm}_{\beta,L}(\Phi^{*}; a,b)  - \frac{L^{2} - k}{\beta L^{2}} \log Z_{*}^{\pm} \\
&\geq -\frac{k}{\beta L^{2}} \Big(\max_{a,b} \log Z^{\pm}_{\beta,L}(\underline{\Phi}^{*}; a,b) - \log Z_{*}^{\pm}\Big)  - \frac{1}{\beta} \log Z_{*}^{\pm}\;,
\end{split}
\end{equation}
where $\Phi^{*}$ is the staggered magnetic flux configuration as in Fig. \ref{fig:conf}. This concludes the proof of the proposition.
\end{proof}
We conclude the section by computing the asymptotics of $\Delta_{\beta,L}^{\pm}$ as $L\to \infty, \beta \to \infty$.
\begin{proposition}[Asymptotics of the monopole's mass]\label{prp:compD} Let $L=4\ell$. For $\beta, \ell$ large enough,
\begin{equation}\label{eq:deltares}
\begin{split}
\Delta_{\beta,L}^{\pm} &= \frac{t}{4\pi^2} \int_0^{2\pi} \int_0^{2 \pi} dk_{1} dk_{2}\, \sqrt{1+\frac{1}{2} \cos(k_{1}) +\frac{1}{2}\cos(k_{2})} \\& \quad - \frac{t}{8\pi^2} \int_0^{2\pi} \int_0^{2 \pi} dk_{1} dk_{2}\, \sqrt{1+\frac{1}{2} \sqrt{1+\cos^2(k_{1}) +\cos^2(k_{2})}}\\
        &\quad - \frac{t}{8\pi^2} \int_0^{2\pi} \int_0^{2 \pi} dk_{1} dk_{2}\, \sqrt{1-\frac{1}{2} \sqrt{1+\cos^2(k_{1}) +\cos^2(k_{2})}} + o(1)
        \end{split}
\end{equation}
\end{proposition}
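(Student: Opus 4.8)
The plan is to evaluate $\Delta^{\pm}_{\beta,L}$ by first sending $\beta\to\infty$, which reduces it to a difference of fermionic ground state energies per unit volume, and then to compute those energies by Bloch diagonalization followed by the replacement of Riemann sums by integrals as $L\to\infty$.

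\textbf{Step 1 (the $\beta\to\infty$ reduction).} Writing $\{\lambda_i(\bm{\sigma})\}$ for the single-particle eigenvalues of $h(\bm{\sigma})$, one has $Z^{+}_{\beta,L}(\bm{\sigma}) = \prod_i(1+e^{-\beta\lambda_i})$ and $Z^{-}_{\beta,L}(\bm{\sigma}) = \prod_i(1-e^{-\beta\lambda_i})$, so that in both cases $-\tfrac1\beta\log Z^{\pm}_{\beta,L}(\bm{\sigma}) \to \sum_{\lambda_i<0}\lambda_i = E^{\pm}_{0;L}(\bm{\sigma})$, the sum of the negative single-particle energies; in particular the even and odd ground state energies coincide away from zero modes. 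Since $\max_{a,b}\log Z^{\pm}$ equals $-\beta\min_{a,b}E^{\pm}_{0;L}$ to leading order, the definition (\ref{eq:delta}) gives
\begin{equation}
\lim_{\beta\to\infty}\Delta^{\pm}_{\beta,L} = \frac{1}{L^2}\Big(\min_{a,b}E^{\pm}_{0;L}(\Phi^*;a,b) - E^{\pm}_{0;L}(\pi;-1,-1)\Big)\;.
\end{equation}
The choice of $(a,b)$ only affects the fluxes through the non-contractible cycles; by the almost-degeneracy estimate (\ref{eq:endiff1}) of Proposition \ref{prp:deg} the associated energy differences are $o(L^2)$, so the per-volume limits below are insensitive to the boundary conditions.

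\textbf{Step 2 (the $\pi$-flux term).} By the Bloch diagonalization of Proposition \ref{prp:deg}, $E^{\pm}_{0;L}(\pi;-1,-1) = \sum_{k\in B_L(-1,-1)} e_-(k)$ with $e_-(k) = -2t\sqrt{1+\tfrac12\cos(2k_1)+\tfrac12\cos(2k_2)}$. Since $B_L(1,1)$ carries $L^2/2$ momenta uniformly spaced in $[0,2\pi)\times[0,\pi)$, and the $\pm\tfrac{\pi}{2L}$ shifts of the other boundary conditions do not alter the limit, the Riemann sum converges; after the change of variables $q_\mu = 2k_\mu$ and using the $\pi$-periodicity of the integrand one obtains
\begin{equation}
\lim_{L\to\infty}\Big(-\frac{1}{L^2}E^{\pm}_{0;L}(\pi;-1,-1)\Big) = \frac{t}{4\pi^2}\int_0^{2\pi}\!\!\int_0^{2\pi}\!\sqrt{1+\tfrac12\cos k_1 + \tfrac12\cos k_2}\,dk_1\,dk_2\;,
\end{equation}
which is the first line of (\ref{eq:deltares}).

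\textbf{Step 3 (the chessboard term).} The configuration $\Phi^*$ has a $2\times 2$ plaquette period, hence a magnetic unit cell of four sites. One fixes a gauge realizing the fluxes $(0,\pi,\pi,\pi)$ inside the cell, Fourier transforms over the reduced lattice, and obtains a $4\times 4$ Bloch Hamiltonian $h^*(k)$. Particle--hole symmetry forces its spectrum to be $\{\pm E_+(k),\pm E_-(k)\}$; computing the characteristic polynomial of $h^*(k)$ (even in $\lambda$) and solving the resulting biquadratic gives $E_\pm(k) = 2t\sqrt{1 \pm \tfrac12\sqrt{1+\cos^2 k_1 + \cos^2 k_2}}$. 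The ground state fills the two negative bands, so $E_{0;L}(\Phi^*) = -\sum_{k\in\mathrm{MBZ}}(E_+(k)+E_-(k))$, and converting the sum over the reduced Brillouin zone into an integral (the momentum-doubling of the unit cell turning the $[0,\pi)^2$ integral into the stated $[0,2\pi)^2$ one) yields
\begin{equation}
\lim_{L\to\infty}\frac{1}{L^2}E_{0;L}(\Phi^*) = -\frac{t}{8\pi^2}\int_0^{2\pi}\!\!\int_0^{2\pi}\!\Big(\sqrt{1+\tfrac12\sqrt{1+\cos^2 k_1 + \cos^2 k_2}} + \sqrt{1-\tfrac12\sqrt{1+\cos^2 k_1 + \cos^2 k_2}}\Big)\,dk_1\,dk_2\;,
\end{equation}
i.e. the second and third lines of (\ref{eq:deltares}). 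Adding the two contributions proves the claim, with the $o(1)$ absorbing the $\beta\to\infty$ and the Riemann-sum errors.

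\textbf{Main obstacle.} The delicate part is Step 3: one must pick an explicit gauge reproducing the chessboard fluxes, correctly assemble the $4\times 4$ Peierls matrix on the doubled unit cell, and verify that its biquadratic characteristic polynomial factorizes into the stated nested radicals---a sign- and momentum-convention-heavy computation. A secondary technical point is the uniformity of the Riemann-sum convergence: the integrands are continuous and bounded but only Hölder (not Lipschitz) at the conical band-touching points where $E_\pm(k)=0$, so one must confirm that these isolated non-smooth points do not spoil the $o(1)$ rate; since only continuity and boundedness are needed, this is standard.
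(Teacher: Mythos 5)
Your Steps 1 and 2 are sound and, apart from ordering (you send $\beta\to\infty$ first and reduce to ground state energies, while the paper manipulates the finite-$\beta$ log partition functions directly and approximates the Matsubara/momentum sums by integrals at the end), they match the paper's argument. The genuine problem is Step 3, exactly at the point you flag as delicate: the chessboard configuration $\Phi^{*}$ does \emph{not} admit a gauge field that is periodic with respect to a $2\times 2$, four-site cell, so the $4\times 4$ Bloch Hamiltonian $h^{*}(k)$ you propose to diagonalize does not exist. Indeed, the flux through the $2\times 2$ block is $0+\pi+\pi+\pi\equiv\pi \pmod{2\pi}$, i.e.\ the product of the four enclosed plaquette variables is $-1$; on the other hand, for any $\mathbb{Z}_{2}$ gauge field invariant under translations by $2e_{1}$ and $2e_{2}$, the product of the spins around the boundary of that block is forced to be $+1$ (each boundary edge is paired with its translate on the opposite side, so every spin enters squared), and by the lattice Stokes identity this boundary product equals the product of the enclosed fluxes. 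Contradiction. This is the standard magnetic-unit-cell doubling: since the flux per geometric $2\times2$ cell is $\pi$, the magnetic cell must be enlarged. The paper accordingly uses a $4\times 2$, eight-site cell (Fig.~\ref{fig:conf2}), obtaining an $8\times 8$ Bloch Hamiltonian whose spectrum consists of the four values
\begin{equation*}
\pm 2t\sqrt{1\pm\tfrac{1}{2}\sqrt{1+\cos^{2}(2k_{1})+\cos^{2}(2k_{2})}}\;,
\end{equation*}
each \emph{doubly degenerate}, over a Brillouin zone containing $L^{2}/8$ quasi-momenta.

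Your final integrals are nevertheless correct: four fictitious nondegenerate bands over the unhalved zone carry the same integrated density of states as the true eight doubly degenerate bands over the halved zone, so your bookkeeping (and hence the second and third lines of (\ref{eq:deltares})) comes out right. But this is a consistency of the answer, not a derivation; as written, the gauge fixing and the "$4\times 4$ Peierls matrix'' step would fail. The repair is to double the cell, Fourier transform on the $4\times2$ lattice as in the paper, and then note that the factor-of-two band degeneracy and the halved Brillouin zone compensate each other in the energy per site. A secondary, minor imprecision: you invoke the almost-degeneracy estimate (\ref{eq:endiff1}) of Proposition \ref{prp:deg} to discard the $(a,b)$-dependence of $E_{0;L}^{\pm}(\Phi^{*};a,b)$, but that proposition concerns only the $\pi$-flux bands; for the chessboard bands one needs the analogous (easy) statement that an $O(1/L)$ shift of the Brillouin zone changes the energy per site by $o(1)$, which is how the paper disposes of the boundary conditions there.
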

\begin{remark}
Calling $\Delta_{\infty}$ the sum of the three integrals in (\ref{eq:deltares}), numerical evaluation shows that:
\begin{equation}
\Delta_{\infty} \simeq 0.181 t\;.
\end{equation}
\end{remark}
\begin{proof} Let us consider the Hamiltonian associated with a gauge field configuration corresponding to the periodic flux configuration $\Phi^{*}$ depicted in Fig. \ref{fig:conf}. Consider the case $(a,b) = (1,1)$; the final result will not depend on the choice of $a,b$. A configuration associated with this flux arrangement is depicted in Fig. \ref{fig:conf2}, together with the corresponding  fundamental cell of the fermionic Hamiltonian. The fundamental cell contains eight lattice points, labelled as in Fig. \ref{fig:conf2}. We denote by $\widetilde{\Gamma}^{\text{red}}_{L}$ the lattice of the centers of the fundamental cells, corresponding with the position of the $a$ sites on the original lattice. We have:

\begin{figure}
\centering
\begin{tikzpicture}[scale=0.8]
\draw (0,0) grid (8,8);
\draw[->-] (0,8) -- (8,8);
\draw[->-] (0,0) -- (8,0);
\draw[->>-] (0,0) -- (0,8);
\draw[->>-] (8,0) -- (8,8);
\draw[line width = 0.07 cm] (0,0) -- (8,0);
\draw[line width = 0.07 cm] (0,2) -- (8,2);
\draw[line width = 0.07 cm] (0,4) -- (8,4);
\draw[line width = 0.07 cm] (0,6) -- (8,6);
\draw[line width = 0.07 cm] (0,8) -- (8,8);
\draw[line width = 0.07 cm] (1,8) -- (1,7);
\draw[line width = 0.07 cm] (2,8) -- (2,7);
\draw[line width = 0.07 cm] (5,8) -- (5,7);
\draw[line width = 0.07 cm] (6,8) -- (6,7);
\draw[line width = 0.07 cm] (1,6) -- (1,5);
\draw[line width = 0.07 cm] (2,6) -- (2,5);
\draw[line width = 0.07 cm] (5,6) -- (5,5);
\draw[line width = 0.07 cm] (6,6) -- (6,5);
\draw[line width = 0.07 cm] (1,4) -- (1,3);
\draw[line width = 0.07 cm] (2,4) -- (2,3);
\draw[line width = 0.07 cm] (5,4) -- (5,3);
\draw[line width = 0.07 cm] (6,4) -- (6,3);
\draw[line width = 0.07 cm] (1,2) -- (1,1);
\draw[line width = 0.07 cm] (2,2) -- (2,1);
\draw[line width = 0.07 cm] (5,2) -- (5,1);
\draw[line width = 0.07 cm] (6,2) -- (6,1);
\draw[line width = 0.07 cm] (5,8) -- (5,7);
\draw[line width = 0.07 cm] (6,8) -- (6,7);
\draw[dotted, thick] (3.5,0) -- (3.5,8);
\draw[dotted, thick] (7.5,0) -- (7.5,8);
\draw[dotted, thick] (0,0.5) -- (8,0.5);
\draw[dotted, thick] (0,2.5) -- (8,2.5);
\draw[dotted, thick] (0,0.5) -- (8,0.5);
\draw[dotted, thick] (0,2.5) -- (8,2.5);
\draw[dotted, thick] (0,4.5) -- (8,4.5);
\draw[dotted, thick] (0,6.5) -- (8,6.5);
\draw[dotted, fill = gray, fill opacity = 0.1] (3.5,2.5)--(7.5,2.5) -- (7.5,4.5) -- (3.5,4.5) -- (3.5,2.5);
%\draw[thick, ->-] (0,0) -- (1,1);
%\draw[dashed, thick] 
%\draw[densely dotted] (0.5, 0.5) -- (3.5,0.5);
%\begin{scope}[xshift=6cm]
%\draw (0,0) grid (5,5);
%\draw[thick, red, fill = red, fill opacity = 0.3] (0,0)--(0,4) -- (4,4) -- (4,0) -- (0,0);
\begin{scope}[xshift =10 cm, yshift=2cm]
\draw[dotted,fill = gray, fill opacity = 0.1] (0,0) -- (8,0) -- (8,4)--(0,4)--(0,0);
\draw[thick] (1,0) -- (1,4);
\draw[thick] (3,0) -- (3,4);
\draw[thick] (5,0) -- (5,4);
\draw[thick] (7,0) -- (7,4);
\draw[thick] (0,1) -- (8,1);
\draw[thick, line width= 0.07 cm] (3,1) -- (3,3);
\draw[thick, line width= 0.07 cm] (5,1) -- (5,3);
\draw[thick, line width = 0.07 cm] (0,3) -- (8,3);
\node[below left] (a) at (1,1) {\scalebox{0.8}{$a$}};
\node[above left] (b) at (1,3) {\scalebox{0.8}{$A$}};
\node[below left] (a) at (3,1) {\scalebox{0.8}{$b$}};
\node[above left] (b) at (3,3) {\scalebox{0.8}{$B$}};
\node[below right] (a) at (5,1) {\scalebox{0.8}{$c$}};
\node[above right] (b) at (5,3) {\scalebox{0.8}{$C$}};
\node[below right] (a) at (7,1) {\scalebox{0.8}{$d$}};
\node[above right] (b) at (7,3) {\scalebox{0.8}{$D$}};
\draw[line width = 0.04 cm, ->] (0,0)--(0,4);
\draw[line width = 0.04 cm, ->] (0,0)--(8,0);
\end{scope}
\end{tikzpicture}
\caption{Left: gauge field configuration associated with the chessboard flux arrangement $\Phi^{*}$ in Fig. \ref{fig:conf}. Solid bonds correspond to spin $-1$, while light bonds correspond to spin $+1$. Right: fundamental cell associated with the periodic spin configuration.}
\label{fig:conf2}
\end{figure}
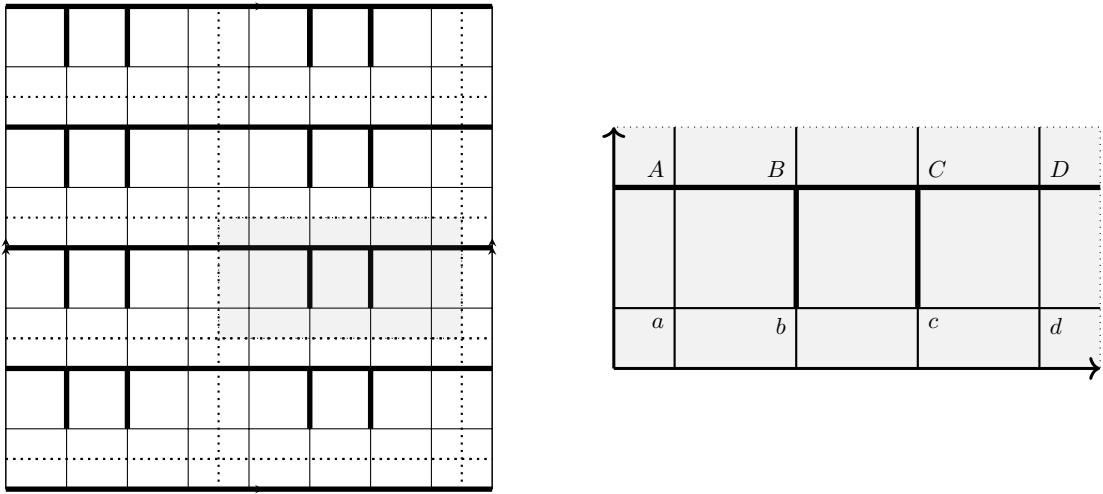
\begin{equation}
\begin{split}
&        H_{f}(\Phi^{*}; 1,1) \\
&\qquad = -t \sum_{x \in \widetilde{\Gamma}^{\text{red}}_{L}} (a^+_{a, x} a^-_{b, x} + a^+_{a, x} a^-_{A, x} + a^+_{b, x} a^-_{c, x} - a^+_{b, x} a^-_{B, x} + a^+_{c, x} a^-_{d, x} - a^+_{c, x} a^-_{C, x} \\&\qquad \quad + a^+_{d, x} a^-_{D, x}  + a^+_{d, x} a^-_{a, x + 4 e_{1} }-a^+_{A, x} a^-_{B, x} + a^+_{A, x} a^-_{a, x+2 e_{2}} - a^+_{B, x} a^-_{C, x} \\&\qquad \quad + a^+_{B, x} a^-_{b, x+2 e_{2}}  - a^+_{C, x} a^-_{D, x} + a^+_{C, x} a^-_{c, x+2 e_{2}} - a^+_{D, x} a^-_{A, x+4 e_{1}} + a^+_{D, x} a^-_{d, x+2 e_{2}}) + \text{h.c.}\;.
\end{split}        
\end{equation}
The associated Bloch Hamiltonian $h(k)$ is:
\begin{equation}\nonumber
\begin{split}
&h(k) = \\
&\scriptsize{      \begin{pmatrix}
           0 &1 &0 &e^{-4ik_1} &1+e^{-2ik_2} &0 &0 &0\\
           1 &0 &1 &0 &0 &-1+e^{-2ik_2} &0 &0\\
           0 &1 &0 &1 &0 &0 &-1+e^{-2ik_2} &0\\
           e^{4 i k_1} &0 &1 &0 &0 &0 &0 &1+e^{-2ik_2} \\
           1+e^{2ik_2} &0 &0 &0 &0 &-1 &0 &-e^{-4ik_1}\\
           0 &-1+e^{2ik_2} &0 &0 &-1 &0 &-1 &0\\
           0 &0 &-1 + e^{2ik_2} &0 &0 &-1 &0 &-1\\
           0 &0 &0 &1+e^{2ik_2} &-e^{4ik_1} &0 &-1 &0
       \end{pmatrix} }\;.
       \end{split}
\end{equation}
Its eigenvalues are doubly degenerate, and are given by:
\begin{equation}\label{eq:deltabands}
\begin{split}
       e_{1;\pm}(k) &= \pm 2t \sqrt{1+\frac{1}{2} \sqrt{1 + \cos(2k_1)^2 +\cos(2k_2)^2}}\\
        e_{2;\pm}(k) &= \pm 2t \sqrt{1-\frac{1}{2}\sqrt{1 + \cos(2k_1)^2 +\cos(2k_2)^2}}\;.
        \end{split}
\end{equation}
The quasi-momentum $k = (k_{1}, k_{2})$ is a point on the Brillouin zone,
\begin{equation}
\widetilde{B}_{L}(1,1) := \Big\{ k\in \frac{2\pi}{L}(n_{1}, n_{2})\, \mid\, 0\leq n_{1} \leq L/4-1,\; 0\leq n_{2} \leq L/2-1 \Big\}\;.
\end{equation}
More generally, for any $(a,b) \in \mathbb{Z}_{2}^{2}$, the energy bands are given by (\ref{eq:deltabands}), evaluated over the Brillouin zone:
\begin{equation}\label{eq:bstag}
\widetilde{B}_{L}(a,b) = \widetilde{B}_{L}(1,1) - \frac{\pi}{2L}((a-1), (b-1))\;.
\end{equation}
We are now ready to compute $\Delta^{\pm}_{\beta,L}$, as given by (\ref{eq:delta}). Let us denote by $e^{\pi}_{\pm}(k)$ the eigenvalues of the Bloch Hamiltonian associated with the $\pi$-flux phase, as given by (\ref{eq:epi}). In order to perform the computation, it is convenient to adapt the energy bands (\ref{eq:deltabands}) to the Brillouin zone of the $\pi$-flux phase, Eq. (\ref{eq:bab}), using that the functions in (\ref{eq:deltabands}) are periodic with period $\pi/2$ in the variable $k_{1}$. We get:
\begin{equation}
\begin{split}
&-\frac{1}{\beta L^{2}} \log \frac{Z^{\pm}_{\beta,L}(\Phi^{*}; a,b)}{Z^{\pm}_{*}} \\
&\qquad = \frac{1}{\beta L^{2}} \Big( \sum_{k\in B_{L}(-1,-1)} \log \prod_{z=\pm} (1 \pm e^{-\beta e^{\pi}_{z}(k)}) \\&\color{black}{\qquad\quad - \frac{1}{4}\sum_{k\in B_{L}(a,b)}} \log \prod_{x = 1,2} \prod_{y = \pm}(1 \pm e^{-\beta e_{x;y}(k)})^{2}\Big)\;.
\end{split}
\end{equation}
where the factor $1/4$ in the right-hand side is introduced in order to avoid overcounting, due to the fact that every quasi-momentum $k$ in (\ref{eq:bstag}) is now counted $4$ times. As $\beta, L \to \infty$ the expression in the right-hand side can be approximated by an integral, independent of the choice of the parity of the partition function, as follows:
\begin{equation}
\begin{split}
&-\frac{1}{\beta L^{2}} \log \frac{Z^{\pm}_{\beta,L}(\Phi^{*}; a,b)}{Z^{\pm}_{*}} \\
&\qquad = \frac{t}{4\pi^2} \int_0^{2\pi} \int_0^{2 \pi} dk_{1} dk_{2}\, \sqrt{1+\frac{1}{2} \cos(k_{1}) +\frac{1}{2}\cos(k_{2})} \\&\qquad \quad - \frac{t}{8\pi^2} \int_0^{2\pi} \int_0^{2 \pi} dk_{1} dk_{2}\, \sqrt{1+\frac{1}{2} \sqrt{1+\cos^2(k_{1}) +\cos^2(k_{2})}}\\
        &\qquad\quad - \frac{t}{8\pi^2} \int_0^{2\pi} \int_0^{2 \pi} dk_{1} dk_{2}\, \sqrt{1-\frac{1}{2} \sqrt{1+\cos^2(k_{1}) +\cos^2(k_{2})}} + o(1)\;.
\end{split}
\end{equation}
This concludes the proof of the proposition.
\end{proof}
\section{Proof of Theorem \ref{thm:main}}\label{sec:proofmain}
Let us start by proving the statement about the free energy, Eq. (\ref{eq:fen}). The lower bound simply follows from:
\begin{equation}
Z_{\beta,L} = \sum_{\Phi} \sum_{a,b} Z_{\beta,L}(\Phi, a, b) \geq \sum_{a,b} Z_{\beta,L}(\pi, a, b)\;.
\end{equation}
To prove the upper bound, we proceed as follows. We have:
\begin{equation}\label{eq:upper}
\begin{split}
&\beta L^{2}(f_{\beta, L}(\pi) - f_{\beta, L} ) \\
&\quad = \log\bigg(\sum_{k=0}^{L^2/2}  \frac{\sum_{a,b \in \mathbb{Z}_{2}^{2}}\sum_{n(\Phi) = 2k} e^{-\beta H_{g} (\Phi)} (Z^+_{\beta, L}(\Phi, a, b) + Z^-_{\beta, L}(\Phi, a, b))}{\sum_{a,b \in \mathbb{Z}_{2}^{2}} e^{-\beta H_{g} (\bm \pi)} (Z^+_{\beta, L}(\pi,a,b)+Z^-_{\beta, L}(\pi, a,b))}\bigg)
\end{split}
\end{equation}
where $n(\Phi)$ is the number of $0$-fluxes in the flux configuration $\Phi$; observe that this number is even, due to the constraint $\prod_{\Lambda \in P(\Gamma_{L})} B_{\Lambda} = 1$, and to the fact that the number of elementary plaquettes is even for $L$ even. By Proposition \ref{prop:remove}, we have:
\begin{equation}\label{eq:upbd}
\begin{split}
Z^+_{\beta, L}(\Phi, a, b) + Z^-_{\beta, L}(\Phi, a, b) &\leq e^{-\beta 2k \Delta^{+}_{\beta,L}} Z^{+}_{*} +  e^{-\beta 2k \Delta^{-}_{\beta,L}} Z^{-}_{*} \\
&\leq e^{-\beta 2k \Delta_{\beta,L}} (Z^{+}_{*} +  Z^{-}_{*})\;.
\end{split}
\end{equation}
with $\Delta_{\beta,L}$ the minimum between $\Delta^{+}_{\beta,L}$ and $\Delta^{-}_{\beta,L}$, and we used that $n(\Phi) = 2k$. Furthermore (recall the definition (\ref{eq:Hgaugeg})), using that:
\begin{equation}
H_{g}(\Phi) = H_{g}({\bm \pi})  - 2k = |P(\Gamma_{L})| - 2k\;,
\end{equation}
and the positivity of the odd partition function, we have:
\begin{equation}
\begin{split}
(\beta L^{2})(f_{\beta, L} - f_{\beta, L}(\pi)) &\leq \log\bigg(4\sum_{k=0}^{L^2/2} {L^{2} \choose 2k} e^{-\beta 2k (\Delta_{\beta,L} - 1)}\bigg) \\
&= \log\Big( ( 1 + e^{-\beta (\Delta_{\beta,L} - 1)} )^{L^{2}} + (1 - e^{-\beta (\Delta_{\beta,L} - 1)})^{L^{2}}\Big) + \log 2\;.
\end{split}
\end{equation}
where the inequality follows by dropping at the denominator all but the largest partition functions, and the factor $4$ arises from the sum over boundary conditions at the numerator. This proves the claim (\ref{eq:fen}). Let us now prove the statement about expectation values, Eq. (\ref{eq:main}). We rewrite the expectation value of the gauge-invariant observable $\mathcal{O}(\bm{\sigma})$ as:
\begin{equation}\label{eq:expO}
\begin{split}
\langle \mathcal{O} \rangle_{\beta, L} &= \frac{1}{Z_{\beta,L}} \sum_{\bm{\sigma}} \Tr_{\mathcal{F}_{L}}\, \Big(\frac{1 + (-1)^{N}}{2^{|\Gamma|}}\Big) \mathcal{O}(\bm{\sigma}) e^{-\beta H(\bm{\sigma})} \\
&= \frac{1}{\widetilde Z_{\beta,L}} \sum_{a,b} \sum_{\Phi}  \Tr_{\mathcal{F}_{L}}\, (1 + (-1)^{N}) \mathcal{O}(\Phi;a,b) e^{-\beta H(\Phi;a,b)}
\end{split}
\end{equation}
where: $\Phi = (\phi_{i,j})$ is the collection of fluxes in the lattice plaquettes; $\mathcal{O}(\Phi; a,b)$ is the observable evaluated over a given gauge field representative of the flux configuration $\Phi, a, b$; and:
\begin{equation}\label{eq:tildeZ}
\widetilde Z_{\beta,L} = \sum_{(a,b)} \sum_{\Phi}  \Tr_{\mathcal{F}_{L}}\, (1 + (-1)^{N}) e^{-\beta H(\Phi;a,b)}\;.
\end{equation}
Let us consider the partition function (\ref{eq:tildeZ}). From (\ref{eq:upbd}), we obtain the estimate:
\begin{equation}
\frac{\sum_{(a,b)} \sum^{*}_{\Phi}  \Tr_{\mathcal{F}_{L}}\, (1 + (-1)^{N}) e^{-\beta H(\Phi; a,b)}}{ \sum_{(a,b)} \Tr_{\mathcal{F}_{L}}\, (1 + (-1)^{N}) e^{-\beta H(\pi; a,b)} } \leq 4((1 + e^{- \beta (\Delta_{\beta,L}-1)})^{L^{2}} - 1)\;,
\end{equation}
where the asterisk denotes the constraint $n(\Phi) \geq 2$. This estimate allows to prove that:
\begin{equation}\label{eq:partpi}
\widetilde Z_{\beta,L} = \Big(\sum_{(a,b)} \Tr_{\mathcal{F}_{L}}\, (1 + (-1)^{N}) e^{-\beta H(\pi; a,b)}\Big) (1 + \mathcal{E}_{\beta,L})\;,\quad 0\leq \mathcal{E}_{\beta,L} \leq 4((1 + e^{- \beta (\Delta_{\beta,L}-1)})^{L^{2}} - 1)\;.
\end{equation}
Consider now the numerator in (\ref{eq:expO}). We rewrite it as:
\begin{equation}\label{eq:num}
\begin{split}
&\sum_{a,b} \sum_{\Phi}  \Tr_{\mathcal{F}_{L}}\, (1 + (-1)^{N}) \mathcal{O}(\Phi;a,b) e^{-\beta H(\Phi;a,b)} \\
&\qquad = \sum_{a,b} \Tr_{\mathcal{F}_{L}}\, (1 + (-1)^{N}) \mathcal{O}(\pi;a,b) e^{-\beta H(\pi;a,b)} \\
&\quad \qquad + \sum_{a,b} \sum_{\Phi}^{*}  \Tr_{\mathcal{F}_{L}}\, (1 + (-1)^{N}) \mathcal{O}(\Phi;a,b) e^{-\beta H(\Phi;a,b)} \\
&\qquad \equiv \sum_{a,b} \Tr_{\mathcal{F}_{L}}\, (1 + (-1)^{N}) \mathcal{O}(\pi;a,b) e^{-\beta H(\pi;a,b)} + \mathcal{E}_{\beta,L}^{\mathcal{O}}\;.
\end{split}
\end{equation}
Proceeding as for the partition function, we estimate the last term in (\ref{eq:num}) as:
\begin{equation}\label{eq:estE}
\begin{split}
\Big| \mathcal{E}_{\beta,L}^{\mathcal{O}} \Big| &\leq \sum_{a,b} \sum_{\Phi}^{*} \| \mathcal{O}(\Phi; a,b) \|  \Tr_{\mathcal{F}_{L}}\, (1 + (-1)^{N}) e^{-\beta H(\Phi;a,b)} \Big| \\
%&\leq \sum_{a,b}\sum_{k\geq 1} \sum_{\Phi: n(\Phi) = 2k} \| \mathcal{O}(\Phi; a,b) \| e^{-\beta H_{g}(\pi)}\\&\quad \qquad \cdot \Big( e^{-\beta 2k (\Delta^{+}_{\beta,L} - 1)}Z_{\beta,L}^{+}(\pi;-1,-1) + e^{-\beta 2k (\Delta^{-}_{\beta,L} - 1)} Z_{\beta,L}^{-}(\pi;-1,-1) \Big) \Big| \\
&\leq C_{\mathcal{O}} ((1 + e^{- \beta (\Delta_{\beta,L}-1)})^{L^{2}} - 1) e^{-\beta H_{g}(\pi)} (Z_{*}^{+} + Z_{*}^{-})\;.
\end{split}
\end{equation}
Therefore, coming back to (\ref{eq:expO}):
\begin{equation}\label{eq:oss}
\begin{split}
\langle \mathcal{O} \rangle_{\beta, L} &= \frac{1}{1 + \mathcal{E}_{\beta,L}} \cdot \frac{1}{\sum_{a,b} \Tr_{\mathcal{F}_{L}}\, (1 + (-1)^{N}) e^{-\beta H(\pi, a,b)}} \\
&\quad \cdot \Big(\sum_{a,b} \Tr_{\mathcal{F}_{L}}\, (1 + (-1)^{N}) \mathcal{O}(\pi;a,b) e^{-\beta H(\pi;a,b)} + \mathcal{E}_{\beta,L}^{\mathcal{O}}\Big)\;;
\end{split}
\end{equation} 
using the estimate (\ref{eq:estE}), we get:
\begin{equation}\label{eq:errmain}
\begin{split}
&\frac{\mathcal{E}_{\beta,L}^{\mathcal{O}}}{\sum_{a,b} \Tr_{\mathcal{F}_{L}}\, (1 + (-1)^{N}) e^{-\beta H(\pi; a,b)}}  \\
&\leq C_{\mathcal{O}} ((1 + e^{- \beta (\Delta_{\beta,L}-1)})^{L^{2}} - 1) \frac{Z_{*}^{+} + Z_{*}^{-}}{ \sum_{a,b} (Z^{+}_{\beta,L}(\pi;a,b) + Z^{-}_{\beta,L}(\pi;a,b))} \\
&\leq C_{\mathcal{O}} ((1 + e^{- \beta (\Delta_{\beta,L}-1)})^{L^{2}} - 1)\;.
\end{split}
\end{equation}
Combining the estimates (\ref{eq:errmain}), (\ref{eq:partpi}) with (\ref{eq:oss}), we have:
\begin{equation}
\langle \mathcal{O} \rangle_{\beta, L} = \langle \mathcal{O} \rangle^{\pi}_{\beta, L} + \mathcal{E}^{\text{tot}}_{\mathcal{O}; \beta,L} 
\end{equation}
with:
\begin{equation}
\begin{split}
|\mathcal{E}^{\text{tot}}_{\mathcal{O}; \beta,L}|  &\leq \frac{\mathcal{E}_{\beta,L}}{1 + \mathcal{E}_{\beta,L}} \max_{a,b} \| \mathcal{O}(\pi, a,b) \| + \frac{C_{\mathcal{O}} ((1 + e^{- \beta (\Delta_{\beta,L}-1)})^{L^{2}} - 1)}{1 + \mathcal{E}_{\beta,L}} \\
&\leq K_{\mathcal{O}} ((1 + e^{- \beta (\Delta_{\beta,L}-1)})^{L^{2}} - 1)\;.
\end{split}
\end{equation}
The energetic closeness of the $\pi$-flux phases, Eq. (\ref{eq:deg}), has been proved in Proposition \ref{prp:deg}. This concludes the proof of Theorem \ref{thm:main}. \qed
\section{Proof of Proposition \ref{prp:susc}}\label{sec:proofsusc}
\subsection{Rewriting of the susceptibility}
In order to compute the main term in the right-hand side of (\ref{eq1}) in the $L,\beta \to \infty$ limit, we shall rely on lattice conservation laws and Ward identities. Let us introduce the density operator in Fourier space:
\begin{equation}
\hat n(p) = \sum_{x\in \Gamma_{L}^{\text{red}}} \sum_{i\in I}  e^{-i p \cdot (x+r_i)} n_{x,i}\;,\qquad n_{x,i} = a^{+}_{x,i} a^{-}_{x,i}\;.
\end{equation}
Let us derive the lattice continuity equation for $\hat n(p)$, in imaginary time. We have\footnote{Recall the identity:
\begin{equation*}
    [a^{+}_{x,i}a_{x',j}, a^{+}_{y,k} a^{-}_{y,k}] = a^{+}_{x,i} a^{-}_{y,k} \delta_{x',y}\delta_{j,k} - a^{+}_{y,k} a^{-}_{x',k} \delta_{x,y}\delta_{i,k}\;.
\end{equation*}
}:
\begin{equation}\label{eq:dern}
\begin{split}
&-i q \partial_s \hat n(p,s) = -iq [H, \hat n_{p}](s) \\
& = -iq \sum_{x,x',y \in \Gamma_{L}^{\text{red}}} \sum_{i,j,k \in I} e^{-i p \cdot (y+r_k)} h_{ij}(x,x') [a^{+}_{x,i} a^{-}_{x', j}, n_{y,k}](s)\\
& = -iq \sum_{x,x' \in \Gamma_{L}^{\text{red}}} \sum_{i,j \in I} (e^{-i p \cdot (x'+r_i)}-e^{-i p \cdot (x+r_j)}) h_{ij}(x,x') (a^{+}_{x,i} a^{-}_{x', j})(s)\\ 
& = \frac{i q}{2} \sum_{x,x' \in \Gamma_{L}^{\text{red}}} \sum_{i,j \in I} (h_{ij}(x,x') a^{+}_{x,i} a^{-}_{x', j} - h_{ji}(x',x) a^{+}_{x',j} a^{-}_{x, i} )(s)  (e^{-i p \cdot (x+ r_i)}-e^{-i p \cdot (x'+ r_j)})\\       
& =-\frac{i q}{2} \sum_{x,x' \in \Gamma_{L}^{\text{red}}} \sum_{i,j \in I} \sum_{\mu = 1,2}e^{-i p \cdot (x'+ r_j)} (h_{ij}(x,x') a^{+}_{x,i} a^{-}_{x', j} - h_{ji}(x',x) a^{+}_{x',j} a^{-}_{x, i} )(s)  \eta_{\xi}(p) (-i p_{\mu} \xi_{\mu})
\end{split}
\end{equation}%

recall the definition of the function $\eta_{\xi}(p)$ in (\ref{eq:etadef}). Eq. (\ref{eq:dern}) can be recasted as, recall (\ref{eq:JK}):
\begin{equation}\label{eq:WI}
iq \partial_s \hat n(p,s)+\sum_{\mu = 1,2} i p_\mu \hat J_{\mu}(p,s) = 0\;.     
\end{equation}
Therefore,
\begin{equation}
    i q \partial_s \left \langle \hat n(-p,s); \hat J_1(p) \right \rangle_{\beta, L} = i \sum_{\mu=1,2} p_{\mu} 
    \left \langle \hat J_{\mu}(-p,s); \hat J_1(p) \right \rangle_{\beta, L}\;. % + i q  \left \langle [n_{-p},J_x(p)] \right \rangle_{\beta, L, c}
\end{equation}
Let us use the identity (\ref{eq:WI}) to rewrite in a more convenient way the right-hand side of (\ref{eq:susclim}). Choose $p = (p_{1}, 0)$. We have:
\begin{equation}
\begin{split}
\int_0^{\beta} ds \left \langle  \hat J_1(-p,s); \hat J_1(p) \right \rangle_{\beta, L} &= \frac{q}{p_{1}}\int_0^{\beta} ds\,  \partial_s \left \langle \hat n(-p,s); \hat J_1(p) \right \rangle_{\beta, L} \\
&= -\frac{q}{p_{1}} \left \langle [ \hat n(-p), \hat J_1(p)] \right \rangle_{\beta, L}\;,
\end{split}
\end{equation}
where the last step follows from the KMS identity\footnote{It is easy to check that the KMS identity holds true for the Gibbs state (\ref{eq:gibbs}), for physical observables.}. The commutator is (for general values of $p$):
\begin{equation}
\begin{split}
&iq  [\hat n(-p), \hat J_1(p)] \\
& =\frac{q^2}{2} \sum_{x,x',y\in \Gamma^{\text{red}}_{L}} \sum_{i,j,k\in I}  e^{i p \cdot (y+r_k-(x'+r_j))} \Big(h_{ij}(x,x') [n_{y,k},a^{+}_{x,i} a^{-}_{x', j}]\\&\qquad -h_{ji}(x',x) [n_{y,k},a^{+}_{x',j} a^{-}_{x, i}] \Big) \eta_{\xi}(p) \xi_1\\
& =\frac{q^2}{2} \sum_{x,x'} \sum_{i,j}  (e^{i p \cdot (x+r_k-(x'+r_j))} -1) \Big(h_{ij}(x,x') a^{+}_{x,i} a^{-}_{x', j} +h_{ji}(x',x) a^{+}_{x',j} a^{-}_{x, i} \Big) \eta_\xi(p) \xi_1\\
& =  \sum_{\mu=1,2} i p_{\mu} \hat K_{1, \mu}(p,-p)\;,   
\end{split}
\end{equation}
recall Eq. (\ref{eq:JK}). Thus, for $p = (p_{1}, 0)$:
\begin{equation}\label{eq:K11}
\left\langle \hat K_{1,1}(p, -p) \right\rangle_{\beta,L} = \frac{q}{p_{1}} \left\langle [\hat n_{-p}, \hat J_1(p)] \right\rangle_{\beta,L} = -\int_0^{\beta} ds \left \langle  \hat J_1(-p,s); \hat J_1(p) \right \rangle_{\beta, L}\;.
\end{equation}
We are interested in the choice of external momenta $p = (0, p_{2})$. To begin, observe that, for $p = (0,p_{2})$, from Eq. (\ref{eq:JK}):
\begin{equation}\label{eq:KK}
\left \langle \hat K_{1,1}(p, -p) \right \rangle_{\beta, L} = \left \langle \hat K_{1,1}(0, 0) \right \rangle_{\beta, L}\;.
\end{equation}
As we will see in the next section, the $\beta, L\to \infty$ limit of the left-hand side of (\ref{eq:K11}) is continuous in $p$; thus, we have:
\begin{equation}\label{eq:lims}
\begin{split}
&
\lim_{L\to \infty} \lim_{\beta \to \infty} \left\langle \hat K_{1,1}((0, p_{2}), (0, -p_{2})) \right\rangle_{\beta,L} \\
&\quad = \lim_{L\to \infty} \lim_{\beta \to \infty} \left\langle \hat K_{1,1}((0, 0), (0, 0)) \right\rangle_{\beta,L} \\
&\quad = -\lim_{p_{1}\to 0} \lim_{L\to \infty} \lim_{\beta \to \infty}  \int_0^{\beta} ds \left \langle  \hat J_1((-p_{1},0),s); \hat J_1((p_{1},0)) \right \rangle_{\beta, L}\;.
\end{split}
\end{equation}
Therefore, we can express the magnetic susceptibility (\ref{eq:chifin}) as:
\begin{equation}\label{eq:suscfin}
\begin{split}
&\chi(p_{2}) \\
& = \frac{1}{p_{2}^{2}} \lim_{L\to \infty} \lim_{\beta \to \infty} \frac{1}{L^{2}}\int_0^{\beta} ds\, \Big( \left \langle  \hat J_1((0,-p_{2}),s); \hat J_1((0,p_{2})) \right \rangle_{\beta,L} - \left \langle  \hat J_1((0,0),s); \hat J_1((0,0)) \right \rangle_{\beta,L}\Big)\;.
\end{split}
\end{equation}
To obtain Eq. (\ref{eq:suscfin}), we implicitly assumed that in the last step of (\ref{eq:lims}) we can first set $p_{1} = 0$ in the current-current correlation, and then take $\beta, L\to \infty$, which will be shown later. The rewriting (\ref{eq:suscfin})  will be a convenient starting point for the explicit evaluation of the susceptibility. 
\subsection{Evaluation of $\chi(p_{2})$}
In order to evaluate the susceptibility, as given by (\ref{eq:suscfin}), we shall use Theorem \ref{thm:main}. We have:
\begin{equation}
\left \langle  \hat J_1(-p,s); \hat J_1(p) \right \rangle_{\beta,L} = \left \langle  \hat J_1(-p,s); \hat J_1(p) \right \rangle^{\pi}_{\beta,L} + \frak{e}_{\beta,L}
\end{equation}
where: $\frak{e}_{\beta,L}$ is an error term that vanishes as $L,\beta \to \infty$; the average over the $\pi$-flux phases can be written as:
\begin{equation}\label{eq:sum}
\left \langle  \hat J_1(-p,s); \hat J_1(p) \right \rangle^{\pi}_{\beta,L} = \sum_{\varepsilon = \pm} \sum_{(a,b) \in \mathbb{Z}_{2}^{2}} \left \langle  \hat J_1(-p,s); \hat J_1(p) \right \rangle_{\varepsilon,a,b} z_{\varepsilon,a,b}
\end{equation}
with, for $(\varepsilon, a, b) \in \mathbb{Z}_{2}^{3}$:
\begin{equation}
\begin{split}
\left \langle  \mathcal{O} \right \rangle_{\varepsilon,a,b} &:= \frac{\Tr_{\mathcal{F}_{L}} e^{-\beta H(\pi; a,b)} \mathcal{O}(\pi; a,b) (\varepsilon)^{N}}{ \Tr_{\mathcal{F}_{L}} e^{-\beta H(\pi; a,b)} (\varepsilon)^{N} } \\
z_{\varepsilon,a,b} &:= \frac{\Tr_{\mathcal{F}_{L}} e^{-\beta H(\pi; a,b)} (\varepsilon)^{N}}{\sum_{(a,b) \in \mathbb{Z}_{2}^{2}}\Tr_{\mathcal{F}_{L}} e^{-\beta H(\pi; a,b)} (1 + (-1)^{N})}\;.
\end{split}
\end{equation}
As we will see, taking the limit $\beta \to \infty$ and then the limit $L\to \infty$, all $\left \langle  \mathcal{O} \right \rangle_{+,a,b}$ converge to a value independent of $a,b$. In the same order of limits, also $\left \langle  \mathcal{O} \right \rangle_{-,a,b}$ converge to this value, provided no zero mode is present in the spectrum of the single-particle Hamiltonian. Instead, for choices of $a,b$ for which a zero mode is present, $\left \langle  \mathcal{O} \right \rangle_{-,a,b} z_{-,a,b} = 0$, provided the order in the fermionic creation/annihilation operators appearing in $\mathcal{O}$ is not too high. This is a consequence of the next lemma.
\begin{lemma}\label{lem:zero}
Let $\mathcal{O}$ be an operator in the fermionic algebra, and let $H$ be the second quantization of a Hamiltonian $h$. Suppose that $\mathcal{O}$ has the following form:
\begin{equation}\label{eq:lemO}
\mathcal{O} = \sum_{\substack{ i_{1}, \ldots, i_{r} \\ j_{1}, \ldots, j_{r}}} O\big(\underline{i}; \underline{j}\big) a^{+}_{i_{1}} \cdots a^{+}_{i_{r}} a^{-}_{j_{r}} \cdots a^{-}_{j_{1}}
\end{equation}
where $a_{i}$ and $a^{+}_{i}$, $i=1,\ldots, L^{2}$, are the fermionic annihilation and creation operators expressed in the basis of $h$. Suppose that $h$ has $k> 0$ zero modes, and suppose that $k>r$. Then,
\begin{equation}
\Tr\, (-1)^{N} e^{-\beta H} \mathcal{O} = 0\;.
\end{equation}
\end{lemma}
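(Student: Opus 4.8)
The plan is to diagonalize everything simultaneously. Since $\mathcal{O}$ is already expressed in the eigenbasis of $h$, I would write $H=\sum_{i}\lambda_i a^{+}_i a^{-}_i$ with $\lambda_i$ the eigenvalues of $h$, and observe that both $e^{-\beta H}$ and $(-1)^{N}$ are diagonal in the occupation-number basis $\{|{\bf m}\rangle\}_{{\bf m}\in\{0,1\}^{L^2}}$ attached to this basis. Expanding the trace over these states gives
\[
\Tr\,(-1)^{N} e^{-\beta H}\mathcal{O}=\sum_{{\bf m}}(-1)^{|{\bf m}|}\,e^{-\beta\sum_i \lambda_i m_i}\,\langle {\bf m}|\mathcal{O}|{\bf m}\rangle ,
\]
with $|{\bf m}|=\sum_i m_i$, so the claim reduces to a statement about the diagonal matrix elements of the monomials in (\ref{eq:lemO}).

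The second step is a structural observation about those diagonal elements. A monomial $a^{+}_{i_1}\cdots a^{+}_{i_r} a^{-}_{j_r}\cdots a^{-}_{j_1}$ vanishes unless the $i_1,\dots,i_r$ are distinct and the $j_1,\dots,j_r$ are distinct; moreover, acting on $|{\bf m}\rangle$ it destroys particles on $\{j_1,\dots,j_r\}$ and creates particles on $\{i_1,\dots,i_r\}$, so its diagonal element is nonzero only when the two index sets coincide, $\{i_1,\dots,i_r\}=\{j_1,\dots,j_r\}=:S$ with $|S|=r$. For such a monomial the canonical anticommutation relations give the rearrangement $a^{+}_{i_1}\cdots a^{+}_{i_r} a^{-}_{j_r}\cdots a^{-}_{j_1}=\epsilon\prod_{i\in S}a^{+}_i a^{-}_i$ for a fixed sign $\epsilon\in\{\pm 1\}$, whence
\[
\langle{\bf m}|\,a^{+}_{i_1}\cdots a^{+}_{i_r} a^{-}_{j_r}\cdots a^{-}_{j_1}\,|{\bf m}\rangle=\epsilon\prod_{i\in S} m_i ,
\]
which depends on the occupations $m_i$ only for $i\in S$.

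Finally, I would exploit the zero modes. Let $Z$ be the set of the $k$ zero modes, so $\lambda_i=0$ for $i\in Z$. Fix a contributing monomial with index set $S$, $|S|=r<k$; then $S$ cannot contain all of $Z$, so there is a zero mode $i_*\in Z\setminus S$. In the corresponding summand, the Boltzmann factor is independent of $m_{i_*}$ because $\lambda_{i_*}=0$, and the diagonal matrix element is independent of $m_{i_*}$ because $i_*\notin S$; the sole dependence on $m_{i_*}$ sits in the parity sign $(-1)^{m_{i_*}}$. Summing over $m_{i_*}\in\{0,1\}$ then yields the factor $1+(-1)=0$, so every monomial contributes zero and the trace vanishes. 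The only point requiring care — the ``main obstacle'', albeit a mild one — is the bookkeeping in the structural step: checking that a number-conserving monomial with matching creation/annihilation index sets reduces to $\pm\prod_{i\in S}a^{+}_i a^{-}_i$ with a sign independent of ${\bf m}$, so that the diagonal element genuinely carries no dependence on the occupations outside $S$. Once this is secured, the zero-mode cancellation is immediate, and the argument uses no hypothesis beyond $k>r$.
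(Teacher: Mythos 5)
Your proof is correct and follows essentially the same route as the paper's: expand the trace in the occupation-number basis of the eigenbasis of $h$, observe that only monomials whose creation and annihilation index sets coincide contribute diagonally (reducing them to $\pm\prod_{i\in S}n_i$), and then cancel via the parity sign summed over a zero mode lying outside $S$, which exists precisely because $k>r$. If anything, your explicit treatment of the matching-index-set reduction makes rigorous the step the paper dispatches with ``the general claim follows by linearity.''
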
 
\begin{proof}
Let us rewrite the trace as:
\begin{equation}
\sum_{\{q_{i}\}} \sum_{\{p_{j}\}} (-1)^{\sum_{i} q_{i} + \sum_{j} p_{j}} e^{-\beta \sum_{i} \varepsilon_{i} q_{i}} \langle \psi(\{q_{i}\}, \{p_{j}\}),  \mathcal{O} \psi(\{q_{i}\}, \{p_{j}\})\rangle\;,
\end{equation}
where: $\{\varepsilon_{i}\}$ are the eigenvalues of $h$; $\{q_{i}\}$ are the occupation numbers of the non-zero modes of $h$, and $\{p_{j}\}$ are the occupation numbers of the zero modes of $h$; and $\langle \psi(\{q_{i}\}, \{p_{j}\})$ is the corresponding eigenstate of $H$ (a Slater determinant). We claim that, if the number of zero modes $k$ is larger that $r$:
\begin{equation}
\sum_{\{p_{j}\}} (-1)^{\sum_{j} p_{j}} \langle \psi(\{q_{i}\}, \{p_{j}\}),  \mathcal{O} \psi(\{q_{i}\}, \{p_{j}\})\rangle = 0\;.
\end{equation}
To see this, it is enough to prove the statement for $\mathcal{O} = n_{b_{1}} \cdots n_{b_{r}}$, with $n_{b} = a^{+}_{b} a^{-}_{b}$; the general claim follows by linearity. Let $B = \{b_{1}, \ldots, b_{r}\}$. We have:
\begin{equation}
\begin{split}
&\sum_{\{p_{j}\}} (-1)^{\sum_{j} p_{j}} \langle \psi(\{q_{i}\}, \{p_{j}\}),  \mathcal{O} \psi(\{q_{i}\}, \{p_{j}\})\rangle \\
&\quad = \sum_{\{p_{j}\}} (-1)^{\sum_{j} p_{j}} \Big[\prod_{\ell\in B} q_{\ell}\Big] \Big[ \prod_{r\in B} p_{r} \Big] \\
&\quad = \Big(\sum_{\{p_{j}\}: j \notin B} (-1)^{\sum_{j \notin B} p_{j}}\Big)  \Big(\sum_{\{p_{j}\}: j \in B}\Big[\prod_{\ell\in B} q_{\ell}\Big] \Big[ \prod_{r\in B} (-1) p_{r} \Big]\Big)
\end{split}
\end{equation}
which is zero if the set of zero modes is strictly larger than $B$, true if $k>r$. This concludes the proof of the lemma.
\end{proof}
Let us now consider the special case in which the Hamiltonian $h$ is the lattice Laplacian associated with the $\pi$-flux phase, with fluxes $a,b$ on the non-contractible cycles. As discussed in Section \ref{sec:piflux}, zero modes are present if and only if $(a,b) = (1,1)$. The zero modes take place at $k = (\pi/2, \pi/2), (3\pi/2, \pi/2)$, and each has degeneracy $2$ (the $2\times 2$ Bloch Hamiltonian (\ref{eq:blochH}) vanishes for this choice of momentum). Thus, Lemma \ref{lem:zero} applies, with $r< 4$. In particular, the operators $ \hat J_1(-p,s)$, $\hat J_1(-p,s) \hat J_1(p)$ are of the form (\ref{eq:lemO}) with $r=1$ and $r=2$, respectively.

The plan will be to show that, for these operators,
\begin{equation}\label{eq:Oapp0}
\left \langle  \mathcal{O} \right \rangle_{\varepsilon,a,b} = \left \langle  \mathcal{O} \right \rangle_{\infty} + \frak{e}^{\beta,L}_{\varepsilon, a,b}
\end{equation}
with $\left \langle  \mathcal{O} \right \rangle_{\infty}$ independent of $\varepsilon, a, b$ and $\frak{e}^{\beta,L}_{\varepsilon, a,b} = o_{L}(1)$ as $\beta \to \infty$. Therefore, from (\ref{eq:sum}):
\begin{equation}\label{eq:Oapp}
\langle \mathcal{O} \rangle^{\pi}_{\beta, L} = \left \langle  \mathcal{O} \right \rangle_{\infty} + \frak{e}^{\beta,L}\;,
\end{equation}
with
\begin{equation}
\begin{split}
\Big| \frak{e}^{\beta,L}\Big| &= \Big| \sum_{(\varepsilon, a, b) \neq (-1,1,1)} \frak{e}^{\beta,L}_{\varepsilon, a,b} \frac{\Tr_{\mathcal{F}_{L}} e^{-\beta H(\pi; a,b)} (\varepsilon)^{N}}{\sum_{(a,b) \in \mathbb{Z}_{2}^{2}}\Tr_{\mathcal{F}_{L}} e^{-\beta H(\pi; a,b)} (1 + (-1)^{N})}\Big| \\
 &\leq \sum_{(\varepsilon, a, b) \neq (-1,1,1)} |\frak{e}^{\beta,L}_{\varepsilon, a,b}| = o_{L}(1)\;,
 \end{split}
\end{equation}
where we used the positivity of the odd partition functions and the vanishing of the partition function associated with $(\varepsilon, a, b) = (-1,1,1)$.

Let us now show (\ref{eq:Oapp}). To this end, we start by considering:
\begin{equation}\label{eq:intchi}
\int_0^{\beta} ds\, \left \langle  \hat J_1((0,-p_{2}),s); \hat J_1((0,p_{2})) \right \rangle_{-,a,b} \equiv \int_0^{\beta} ds\, \left \langle  \hat J_1((0,-p_{2}),s) \hat J_1((0,p_{2})) \right \rangle_{-,a,b}
\end{equation}
where the identity follows from the fact that the average of $J_1((0,-p_{2}),s)$ is zero for $p_{2} \neq 0$. In order to evaluate this quantity, it is convenient to extend periodically the argument of the integral to all $s\in \mathbb{R}$. Setting $a^{\sharp}_{x}(s) = e^{s H} a^{\sharp}_{x} e^{-s H}$, let us introduce the fermionic time ordering as, for $s_{i} \in[0,\beta)$:
\begin{equation}
{\bf T} a^{\sharp_{1}}_{x_{1}}(s_{1}) \cdots a^{\sharp_{r}}_{x_{r}}(s_{r}) = \text{sgn}(\pi) a^{\sharp_{\pi(1)}}_{x_{\pi(1)}}(s_{\pi(1)}) \cdots a^{\sharp_{\pi(r)}}_{x_{\pi(r)}}(s_{\pi(r)})\;, 
\end{equation}
where $\pi$ is the permutation such that $s_{\pi(1)} \geq s_{\pi(2)} \geq \cdots \geq s_{\pi(n)}$. Equal times ambiguities are solved via normal ordering (they are not important, since they involve a zero measure set of times, and the fermionic operators are bounded). The action of ${\bf T}$ is extended to all the fermionic algebra, by linearity. Using this definition, we have:
\begin{equation}\label{eq:wick}
(\ref{eq:intchi}) = \int_{-\beta/2}^{\beta/2} ds\, \left \langle {\bf T}  \hat J_1((0,-p_{2}),s); \hat J_1((0,p_{2})) \right \rangle_{-,a,b}
\end{equation}
where the argument of the integral is a periodic function over $\mathbb{R}$, with period $\beta$; in $[0,\beta]$, it agrees with the argument of the original integral (\ref{eq:intchi}). 

The factor $(-1)^{N}$ in the definition of $\langle \cdot \rangle_{-,a,b}$ can be taken into account introducing an imaginary chemical potential $\mu = i\pi / \beta$ in the definition of the Gibbs state. The state is quasi-free, and all correlations can be computed using the fermionic Wick's rule. To do this, let us represent the current operator in momentum space, using Eqs. (\ref{eq:afou}), with $k\in B_{L}(a,b)$. Observe that, in general, the current operator depends on the gauge field configuration; here, we are considering it evaluated on the periodic configuration that represents the $\pi$-flux phase with fluxes $(a,b)$ on the non-contractible cycles. From (\ref{eq:JK}), we have:
\begin{equation}
\hat J_{\mu}(p) = \frac{1}{|\Gamma_{L}^{\text{red}}|}\sum_{k \in B_{L}(a,b)} \hat a^{+}_{k-p,i} j_{\mu;ij}(k-p,k)\hat a^{-}_{k, j} 
\end{equation}
Let $\mu = 1$. Recalling that $(r_{i})_{1} = 0$ for $i=A,B$, we have, for $p = (0, p_{2})$:
\begin{equation}
j_{1}(k-p,k) = itq \eta_{1}(p) \begin{pmatrix} e^{-ip_2}(-e^{-i(k_{1}-p_1)} + e^{ik_{1}})  & 0 \\ 0 & e^{-i(k_{1}-p_1)} - e^{ik_{1}} \end{pmatrix}\;.
\end{equation}
The expression (\ref{eq:wick}) can be computed using the expression for the Euclidean two-point function, for $t\neq s$:
\begin{equation}\label{eq:2pt}
\langle {\bf T} \gamma_{t}(\hat a^{-}_{k,i}) \gamma_{s}(\hat a^{+}_{q,j}) \rangle_{-,a,b} = \delta_{k,q} |\Gamma^{\text{red}}_{L}| \frac{1}{\beta} \lim_{M\to \infty} \sum_{\substack{\omega \in \mathbb{M}_{\beta}^{\text{F}} \\ |\omega| \leq M}} e^{-i\omega(t-s)} \Big(\frac{1}{-i\omega + h(k) - i \pi/\beta}\Big)_{ij}
\end{equation}
where $\mathbb{M}_{\beta}^{\text{F}}$ is the set of fermionic Matsubara frequencies, $\mathbb{M}_{\beta}^{\text{F}} = \frac{2\pi}{\beta} (\mathbb{Z} + 1/2)$. Thus, applying the fermionic Wick's rule, and using that $\mathbb{M}_{\beta}^{\text{F}} + \pi/\beta = \mathbb{M}_{\beta}^{\text{B}}$, with $\mathbb{M}_{\beta}^{\text{B}}$ the set of bosonic Matsubara frequencies, we get:
\begin{equation}\label{eq:matsu}
\begin{split}
&(\ref{eq:wick}) \\
&= -\frac{1}{\beta}  \sum_{k\in B_{L}(a,b)} \sum_{\omega \in \mathbbm{M}^{\text{B}}_{\beta}} \Tr\Big(\frac{1}{-i \omega + h(k)} j_{1}(k,k-p) \frac{1}{-i \omega + h(k-p)} j_{1}(k-p,k) \Big) \\
&= -  2 |\Gamma^{\text{red}}_{L}|\int_{\mathbb{B}} \frac{d k}{(2\pi)^{2}} \int_{\mathbb{R}} \frac{d\omega}{(2\pi)}  \Tr\Big(\frac{1}{-i \omega + h(k)} j_{1}(k,k-p) \frac{1}{-i \omega + h(k-p)} j_{1}(k-p,k) \Big) + o(1)\;,
\end{split}
\end{equation}
where $\mathbb{B} = [0,2\pi] \times [0, \pi]$, and the $o(1)$ error terms are due to the approximation of the sums with integrals, which are vanishing as $\beta\to \infty$ and $L\to \infty$, irrespectively of the order.

The main term in (\ref{eq:matsu}) is independent of $a,b$; the choice of $a,b$ only affects the subleading error terms. The same computation can be reproduced for all $\langle \cdot \rangle_{\varepsilon, a, b}$, such that no zero modes are present; for $\varepsilon = +$, the two-point function is given by (\ref{eq:2pt}), omitting the term $-i\pi/\beta$ at the denominator. All these choices give rise to the same $L,\beta \to \infty$ limit. This proves (\ref{eq:Oapp0}) and hence (\ref{eq:Oapp}). 

Observe that, for $(\varepsilon, a, b) \neq (-1,1,1)$, the momentum-space two-point function is never singular at finite $\beta, L$. Also, the integral in the argument of the right-hand side of (\ref{eq:matsu}) is continuous in $p$, thanks to the almost-everywhere continuity and the absolute integrability of the integrand. Similarly, for these choices of $(\varepsilon, a, b)$, the explicit expression of the two-point function shows that the limits $p\to (0,0)$ and $L,\beta \to \infty$ of (\ref{eq:wick}) commute. This allows to check all assumptions we made to obtain the rewriting (\ref{eq:suscfin}). Thus, for $p = (0,p_{2})$, we get, using that $ |\Gamma^{\text{red}}_{L}| = L^{2}/2$, recall (\ref{eq:suscfin}):
\begin{equation}
\begin{split}
\chi(p_{2}) &= -\frac{1}{p^{2}} \int_{\mathbb{B}} \frac{d k}{(2\pi)^{2}} \int_{\mathbb{R}} \frac{d\omega}{(2\pi)} \Big[ \Tr\Big(\frac{1}{-i \omega + h(k)} j_{1}(k,k-p) \frac{1}{-i \omega + h(k-p)} j_{1}(k-p,k) \Big)\\
&\qquad - \Tr\Big(\frac{1}{-i \omega + h(k)} j_{1}(k,k) \frac{1}{-i \omega + h(k)} j_{1}(k,k) \Big)\Big] \\
&\equiv -\frac{1}{p_{2}^{2}} (A(p_{2}) - A(0))\;.
\end{split}
\end{equation}
In order to evaluate this quantity, we proceed similarly to what has been done for the conductivity of graphene, see {\it e.g.} \cite{GMPcond}. We preliminarily observe that the function:
\begin{equation}
p \mapsto \int_{\mathbb{B}} \frac{dk}{(2 \pi)^2} \Tr\Big(\frac{1}{-i \omega + h(k)} j_{1}(k,k-p) \frac{1}{-i \omega + h(k-p)} j_{1}(k-p,k) \Big)
\end{equation}
is even (use the change of variables $k+p\to k$, and the cyclicity of the trace). Therefore, if the function $A(p_{2})$ was differentiable, the limit $\lim_{p_{2}\to 0} p_{2} \chi(p_{2})$ would be zero. As for graphene, differentiability at zero momentum does not hold; nevertheless, the parity of $A(p_{2})$ can be used to simply the computation of the susceptibility. For $\delta > |p_{2}|$, we write:
\begin{equation}
\begin{split}
&A(p_{2}) \\
& =  \int_{\mathbb{B}} \int_{\mathbb{R}} \frac{d k}{(2\pi)^{2}} \frac{d\omega}{(2\pi)}  \Big[ \Tr\Big(\frac{1}{-i \omega + h(k)} j_{1}(k,k-p) \frac{1}{-i \omega + h(k-p)} j_{1}(k-p,k) \Big) \chi_{\delta}(k) +R_{1}(p) \\
& =: A_{\delta}(p_{2}) + R_{1}(p_{2})\;,
\end{split}
\end{equation}
where $\chi_{\delta}(k) = \mathbbm{1}(|k_{1} - \pi/2| \leq \delta) \mathbbm{1}(|k_{2} - \pi/2| \leq \delta) +  \mathbbm{1}(|k_{1} - 3\pi/2| \leq \delta) \mathbbm{1}(|k_{2} - \pi/2| \leq \delta)$ and $R_{1}(p_{2})$ is an error term, which takes into account the integration region associated with $k$ away from the singularity points $k_{F}^{-} = (\pi/2,\pi/2), k_{F}^{+} = (3\pi/2, \pi/2)$. It is easy to see that the function $R_{1}(p_{2})$ is differentiable in $p_{2}$ at $p_{2}=0$. Next, in $A_{\delta}(p_{2})$, we compare the integrand with its relativistic approximation. To do this, we use that, for $k = k_{F}^{\alpha} + k'$, $\alpha = \pm$ and $|k'| \leq \delta$:
\begin{equation}\label{eq:jh}
\begin{split}
j_{1}(k-p,k) &= 2tq \begin{pmatrix} \alpha e^{-ip_{2}} & 0 \\ 0 & -\alpha \end{pmatrix} + \frak{e}_{j}(k-p,k) \\
j_{1}(k,k-p) &= 2tq \begin{pmatrix} \alpha e^{ip_{2}} & 0 \\ 0 & -\alpha \end{pmatrix} + \frak{e}_{j}(k,k-p) \\
h(k_{F}^{\alpha} + k') &= -2t \begin{pmatrix} -\alpha k'_{1} & -ik'_{2} \\ ik'_{2} & \alpha k'_{1} \end{pmatrix} + \frak{e}_{h}(k)\;,
\end{split}
\end{equation}
where, for $k$ such that $|k - k^{\alpha}_{F}| \leq 2\delta$:
\begin{equation}\label{eq:extra}
\| \frak{e}_{j}(k-p,k) \| \leq C|k - k^{\alpha}_{F}|\;,\qquad \| \frak{e}_{j}(k,k-p) \| \leq C|k - k^{\alpha}_{F}|\;,\qquad \| \frak{e}_{h}(k) \|\leq C|k - k^{\alpha}_{F}|^{2}\;.
\end{equation}
In $A_{\delta}(p_{2})$, we replace $j_1$, $h$ with their leading contributions for $k$ close to either $k_{F}^{+}$ or $k_{F}^{-}$, as given by (\ref{eq:jh}). Correspondingly, we write:
\begin{equation}
A_{\delta}(p_{2})  = A^{\text{rel}}_{\delta}(p_{2}) + R_{2}(p_{2})\;;
\end{equation}
the function $A^{\text{rel}}_{\delta}(p_{2})$ takes into account the relativistic approximation of the integrand, while $R_{2}(p_{2})$ is an error term. The relativistic propagator entering in $A^{\text{rel}}_{\delta}(p_{2})$ is, for $k = k' + k_{F}^{\alpha}$, $|k'| \leq \delta$ and recalling that $v = 2t$:
\begin{equation}
\Big(\frac{1}{-i \omega + h(k)}\Big)^{\text{rel}}  = \frac{1}{\omega^{2} + v^{2} |k'|^{2}} \begin{pmatrix} i\omega + v\alpha k'_{1} & iv k'_{2} \\ -i v k'_{2} & i\omega - v\alpha k'_{1} \end{pmatrix}\;.
\end{equation}
Concerning the error terms, due to the extra factors $|k-k^{\alpha}_{F}|$ in the estimates (\ref{eq:extra}), one can check that $R_{2}(p_{2})$ is differentiable in $p$ at $p=0$. All together:
\begin{equation}
A(p_{2}) = A^{\text{rel}}_{\delta}(p_{2}) + \widetilde R(p_{2})\;,\qquad \widetilde R(p_{2}) = R_{1}(p_{2}) + R_{2}(p_{2})\;;
\end{equation}
the function $A^{\text{rel}}_{\delta}(p_{2})$ is even in $p_{2}$, and hence $\widetilde R(p_{2})$ is also even in $p_{2}$. In particular, by differentiability at zero,
\begin{equation}
\frac{1}{p_{2}} (\widetilde R(p) - \widetilde R(0)) \to 0\qquad \text{as $p_{2}\to 0$,}
\end{equation}
which implies:
\begin{equation}\label{eq:chip}
\chi(p_{2}) = -\frac{1}{p_{2}^{2}} (A^{\text{rel}}_{\delta}(p_{2})  - A^{\text{rel}}_{\delta}(0) ) + o\Big( \frac{1}{p_{2}} \Big)\;.
\end{equation}
Thus, we are left with evaluating the first term in the right-hand side of (\ref{eq:chip}), which is fully determined by the relativistic approximation at low energy. We will reduce this term to an integral that can be explicitly computed, plus error terms of the form $(1/p_{2}^{2}) (R_{j}(p_{2}) - R_{j}(0))$, for suitable $R_{j}(p_{2})$ that are even and differentiable; hence, arguing as before, they give a subleading contribution to $\chi(p_{2})$. We have, dropping all primes from now on:
\begin{equation}\label{eq:Ap2}
\begin{split}
A^{\text{rel}}_{\delta}(p_{2}) &=16 t^{2} q^{2} \int_{|k_{j}| \leq \delta} \frac{dk}{(2\pi)^{2}} \int \frac{d\omega}{(2\pi)}\, \frac{1}{(\omega^{2} + v^{2} |k|^{2}) (\omega^{2} + v^{2} |k - p|^{2})} \\
&\qquad \cdot \Big( -\omega^{2} + v^{2} k_{1}^{2} -  v^{2} k_{2}(k_{2} - p_{2}) \Big) + R_{3}(p_{2})\;,
\end{split}
\end{equation}
where $R_{3}(p_{2})$ takes into account the replacement of $e^{\pm ip_{2}}$ with $1$ in (\ref{eq:jh}), and it bounded as:
\begin{equation}
\frac{1}{p^{2}_{2}}\Big|R_{3}(p_{2}) - R_{3}(0)\Big|\leq C\left|\log p_{2}\right|\;. 
\end{equation}
Consider the main term in (\ref{eq:Ap2}). We rewrite it as:
\begin{equation}\label{eq:interm}
\begin{split}
&16 t^{2} q^{2} \int_{|k_{j}| \leq \delta} \frac{dk}{(2\pi)^{2}} \int \frac{d\omega}{(2\pi)}\, \frac{1}{(\omega^{2} + v^{2} |k|^{2}) (\omega^{2} + v^{2} |k - p|^{2})} \\
&\qquad \cdot \Big( -\omega^{2} + v^{2} k_{1}^{2} -  v^{2} k_{2}(k_{2} - p_{2}) \Big) \\
&= 4 q^{2} \int_{|k_{j}| \leq v\delta} \frac{dk}{(2\pi)^{2}} \int \frac{d\omega}{(2\pi)}\, \frac{1}{(\omega^{2} + |k|^{2}) (\omega^{2} + |k - v p|^{2})} \\
&\qquad \cdot \Big( -\omega^{2} + k_{1}^{2} -  k_{2}(k_{2} - v p_{2}) \Big) \\
&= 4 q^{2} \int_{|k_{j}| \leq v\delta} \frac{dk}{(2\pi)^{2}} \int_{|\omega| \leq v\delta} \frac{d\omega}{(2\pi)}\, \frac{1}{(\omega^{2} + |k|^{2}) (\omega^{2} + |k - v p|^{2})} \\
&\qquad \cdot \Big( -\omega^{2} + k_{1}^{2} -  k_{2}(k_{2} - v p_{2}) \Big) + R_{4}(p_{2})\;,
\end{split}
\end{equation}
where $R_{4}(p_{2})$ takes into account the contribution of $|\omega| > v \delta$, and it is smooth and even in $p_{2}$. Concerning the main term, observing that the integration domain is symmetric under exchange of $\omega$ and $k_{1}$, we have:
\begin{equation}
\begin{split}
&4 q^{2} \int_{|k_{j}| \leq v\delta} \frac{dk}{(2\pi)^{2}} \int_{|\omega| \leq v\delta} \frac{d\omega}{(2\pi)}\, \frac{-  k_{2}(k_{2} - v p_{2})}{(\omega^{2} + |k|^{2}) (\omega^{2} + |k - v p|^{2})}\\
&\quad = 4 q^{2} \int_{|\omega| \leq v\delta} \frac{d\omega}{2\pi} \int_{|k_{1}| \leq v\delta} \frac{d k_{1}}{2\pi} \int_{\mathbb{R}} \frac{d k_{2}}{2\pi}\, \frac{-  k_{2}(k_{2} - v p_{2})}{(\omega^{2} + |k|^{2}) (\omega^{2} + |k - v p|^{2})} + R_{5}(p_{2})\;,
\end{split}
\end{equation}
where $R_{5}(p_{2})$ takes into account the contribution of $|k_{2}| > v\delta$, which we added and subtracted; this error term is smooth and even in $p_{2}$. Let us come back to (\ref{eq:chip}). We have:
\begin{equation}\label{eq:fin0}
\begin{split}
\chi(p_{2}) &= \frac{4 q^{2}}{p_{2}^{2}} \int_{|\omega| \leq v\delta} \frac{d\omega}{2\pi} \int_{|k_{1}| \leq v\delta} \frac{d k_{1}}{2\pi} \int_{\mathbb{R}} \frac{d k_{2}}{2\pi}\, \Big( \frac{k_{2}(k_{2} - v p_{2})}{(\omega^{2} + |k|^{2}) (\omega^{2} + |k - v p|^{2})} - \frac{k_{2}^{2}}{(\omega^{2} + |k|^{2})^{2}} \Big) \\
&\quad - \frac{1}{p_{2}^{2}} \sum_{j=1}^{5} (R_{j}(p_{2}) - R_{j}(0))\;. 
\end{split}
\end{equation}
The integral over $k_{2}$ can be evaluated explicitly using residues; it has been done, for instance, in \cite[A.9]{GMPcond}. We have, replacing $p_{0}$ in \cite[A.9]{GMPcond} with $-vp_{2}$:
\begin{equation}
\int_{\mathbb{R}} \frac{d k_{2}}{2\pi}\, \Big( \frac{k_{2}(k_{2} - v p_{2})}{(\omega^{2} + |k|^{2}) (\omega^{2} + |k - v p|^{2})} - \frac{k_{2}^{2}}{(\omega^{2} + |k|^{2})^{2}} \Big) = -\frac{v^{2} p_{2}^{2}}{4(\omega^{2} + k_{1}^{2})^{\frac{1}{2}} (v^{2}p_{2}^{2} + 4 (\omega^{2} + k_{1}^{2}))}\;.
\end{equation}
Thus, we can rewrite the main term in (\ref{eq:fin0}) as:
\begin{equation}\label{eq:545}
\begin{split}
&\frac{4 q^{2}}{p_{2}^{2}} \int_{|\omega| \leq v\delta} \frac{d\omega}{2\pi} \int_{|k_{1}| \leq v\delta} \frac{d k_{1}}{2\pi} \int_{\mathbb{R}} \frac{d k_{2}}{2\pi}\, \Big( \frac{k_{2}(k_{2} - v p_{2})}{(\omega^{2} + |k|^{2}) (\omega^{2} + |k - v p|^{2})} - \frac{k_{2}^{2}}{(\omega^{2} + |k|^{2})^{2}} \Big) \\
&\quad = -\frac{4 q^{2}}{p_{2}^{2}}\int_{|\omega| \leq v\delta} \frac{d\omega}{2\pi} \int_{|k_{1}| \leq v\delta} \frac{d k_{1}}{2\pi}\,\frac{v^{2} p_{2}^{2}}{4(\omega^{2} + k_{1}^{2})^{\frac{1}{2}} (v^{2}p_{2}^{2} + 4 (\omega^{2} + k_{1}^{2}))} \\
&\quad = -\frac{4 v q^{2}}{|p_{2}|}\int_{|\omega| \leq \delta / |p_{2}|} \frac{d\omega}{2\pi} \int_{|k_{1}| \leq \delta / |p_{2}|} \frac{d k_{1}}{2\pi}\,\frac{1}{4(\omega^{2} + k_{1}^{2})^{\frac{1}{2}} (1 + 4 (\omega^{2} + k_{1}^{2}))} \\
&\quad = -\frac{4 v q^{2}}{|p_{2}|} \int_{\omega^{2} + k_{1}^{2} \leq \frac{\delta^{2}}{p_{2}^{2}}} \frac{d\omega}{(2\pi)} \frac{d k_{1}}{(2\pi)}\,\frac{1}{4(\omega^{2} + k_{1}^{2})^{\frac{1}{2}} (1 + 4 (\omega^{2} + k_{1}^{2}))} + E(p_{2})\;,
\end{split}
\end{equation}
where $E(p_{2})$ takes into account the change of integration domain, and it is $o(1 / |p_{2}|)$. The main term in (\ref{eq:545}) can be evaluated explicitly, using that:
\begin{equation}\label{eq:546}
\begin{split}
&\int_{\omega^{2} + k_{1}^{2} \leq \frac{\delta^{2}}{p_{2}^{2}}} \frac{d\omega}{(2\pi)} \frac{d k_{1}}{(2\pi)}\,\frac{1}{4(\omega^{2} + k_{1}^{2})^{\frac{1}{2}} (1 + 4 (\omega^{2} + k_{1}^{2}))} \\
&\quad = \frac{1}{2\pi} \int_{0}^{\delta / |p_{2}|} x dx\, \frac{1}{4x (1 + 4 x^{2})} \\
&\quad = \frac{1}{16 \pi} \int_{0}^{2\delta / |p_{2}|} dy\, \frac{1}{1 + y^{2}} = \frac{1}{16 \pi} \arctan(2\delta / |p_{2}|)\;.
\end{split}
\end{equation}
Thus, putting together (\ref{eq:fin0}), (\ref{eq:545}), (\ref{eq:546}), we have:
\begin{equation}
\chi(p_{2}) = -\frac{v q^{2}}{8 |p_{2}|} - \frac{1}{p_{2}^{2}} \sum_{j=1}^{5} (R_{j}(p_{2}) - R_{j}(0)) + \widetilde{E}(p_{2})\;,
\end{equation}
where $\widetilde{E}(p_{2})$ takes into account the replacement of the arctan with its limit, and it is $o(1 / |p_{2}|)$. Thus, since all $R_{j}(p_{2})$ are even and differentiable, we have:
\begin{equation}
\chi(p_{2}) = -\frac{v q^{2}}{8 |p_{2}|} + o\Big( \frac{1}{|p_{2}|} \Big)\;.
\end{equation}
This concludes the proof of Proposition \ref{prp:susc}.\qed
\appendix

\section{Proof of (\ref{eq:bden})}\label{app:diff}
We will treat the cases $L/2=2n$ and $L/2= 2n+1$ separately, as we expect from Lieb's theorem that the minimizers are different in the two cases (recall the Remark \ref{rem:bdry}). Let:
\begin{equation}
e_-(k|a,b) := e_{-}\Big( k_1 - \frac{\pi}{2L}(a-1), k_2 - \frac{\pi}{2L}(b-1)  \Big) \frac{1}{2t}
\end{equation}
The quasi-momentum $k$ belongs to the Brillouin zone $B_{L}(1,1)$, Eq. (\ref{eq:BL11}); that is, $k = (2\pi/L) (n_{1}, n_{2})$, with $0\leq n_{1} \leq L-1$, $0\leq n_{2} \leq L/2 - 1$. Using that $e_{-}(k_{1}, k_{2}) = e_{-}(k_{1} + \pi, k_{2})$, we can assume that $0\leq n_{1} \leq L/2-1$, and we count twice the energetic contribution of every quasi-momentum in the evaluation of the total energy. Suppose that $L/2 =2n$. The parametrization of the lower energy band as function of $n_{1}, n_{2}$ is, for $0\leq n_{i} \leq 2n-1$:
\begin{equation}
e_-(k|a,b) = - \sqrt{ 1 + \frac{1}{2} \cos\bigg( \frac{\pi n_{1}}{n} - \frac{\pi}{4n}(a-1)\bigg) + \frac{1}{2} \cos\bigg( \frac{\pi n_{2}}{n} - \frac{\pi}{4n}(b-1) \bigg) }.
\end{equation}
For these values of $(n_{1}, n_{2})$, the energy band is vanishing if and only if $(n_{1}, n_{2}) = (n,n)$. To begin, we separate the values of $(n_{1}, n_{2})$ close to $(n,n)$ from the values far from this point. That is, for $0<\alpha < 1$ to be chosen:
\begin{equation}\label{eq:split}
\sum_{n_{1}, n_{2}: 0\leq n_{i} \leq 2n-1 } e_-(k|a,b) = \sum_{n_{1}, n_{2}: |(n_{1}, n_{2}) - (n,n)| \leq n^{\alpha} } e_-(k|a,b) + \sum_{n_{1}, n_{2}: |(n_{1}, n_{2}) - (n,n)| > n^{\alpha} } e_-(k|a,b)\;,
\end{equation}
where here and in the following $|(n_{1}, n_{2}) - (n,n)|$ is the Euclidean distance defined $\text{mod}\, 2n$. Also, in what follows we will always assume that $(n_{1}, n_{2})$ are summed over the range $0\leq n_{i} \leq 2n-1$. We will show that the second term in the right-hand side of (\ref{eq:split}) is essentially independent of $(a,b)$, up to subleading terms as $n\to \infty$, while the first is small. 

Consider the first term in the right-hand side of (\ref{eq:split}). Here we use that, for the considered values of $n_{1}, n_{2}$, it holds $|e_{-}(k|a,b)| \leq C n^{\alpha - 1}$, for an $\alpha$-dependent constant. In what follows, we will not keep track of this dependence, and we will use the symbol $C$ to denote general multiplicative constants appearing in the estimates, that might change from line to line. Therefore:
\begin{equation}\label{eq:err1}
\sum_{n_{1}, n_{2}:\, |(n_{1}, n_{2}) - (n,n)| \leq n^{\alpha} } |e_-(k|a,b)| \leq C \frac{n^{3\alpha}}{n}\;,
\end{equation}
which is $o(1)$ for $\alpha < 1/3$. Consider now the second term in (\ref{eq:split}). Let us expand $e_-(k|a,b)$ around $e_-(k|1,1)$, as follows:
\begin{equation}\label{exp1}
\begin{split}
      e_-(k|a,b)^2 &= e_-(k|1,1)^2 + \bigg[\frac{1}{2} \frac{\pi}{4n} \big((a-1) \sin(\pi n_1/n) +(b-1) \sin(\pi n_2/n) \big) \\ &\quad - \frac{1}{4} \bigg(\frac{\pi}{4n}\bigg)^2 \big( (a-1)^2 \cos(\pi n_1/n) +(b-1)^2 \cos(\pi n_2/n) \big)\bigg] + O\bigg(\frac{1}{n^3}\bigg)\;.
\end{split}
\end{equation}
Observe that, for $n$ large enough, $|e_{-}(k|a,b)|^{2} \geq C (|n - n_{1}|^{2} + |n - n_{2}|^{2}) / n^{2}$, for all $a,b$, while the argument of the square brackets in (\ref{exp1}) is bounded above as $C(|n_{1} - n| + |n_{2} - n|)/n^{2}$; thus, for the considered values of $n_{1}, n_{2}$ and for $\alpha > 0$, the first term in the right-hand side of (\ref{exp1}) is much larger than the other ones.  Taking the square root of \eqref{exp1}, we have:
\begin{equation}\label{eq:B5}
\begin{split}
      e_-(k|a,b)&= e_-(k|1,1)\bigg(1 +\frac{\pi}{4n} \frac{\frac{(a-1)}{2} \sin(\pi n_1/n) +\frac{(b-1)}{2} \sin(\pi n_2/n)}{1 + \frac{1}{2}\cos(\pi n_1/n)+\frac{1}{2}\cos(\pi n_2/n)} \\ &\qquad -  \bigg(\frac{\pi}{4n}\bigg)^2 \frac{(\frac{a-1}{2})^2 \cos(\pi n_1/n) +(\frac{b-1}{2})^2 \cos(\pi n_2/n)}{1 + \frac{1}{2}\cos(\pi n_1/n)+\frac{1}{2}\cos(\pi n_2/n)} + O\bigg(\frac{1}{e^{2}_-(k|1,1) n^3}\bigg) \bigg)^{\frac{1}{2}} \\
      &=  e_-(k, |1,1) \bigg(1 + \frac{1}{2}
    \frac{\pi}{4n} \frac{\frac{(a-1)}{2} \sin(\pi n_1/n) +\frac{(b-1)}{2} \sin(\pi n_2/n)}{1 + \frac{1}{2}\cos(\pi n_1/n)+\frac{1}{2}\cos(\pi n_2/n)}\\
      &\quad - \frac{1}{2} \bigg(\frac{\pi}{4n}\bigg)^2 \frac{(\frac{a-1}{2})^2 \cos(\pi n_1/n) +(\frac{b-1}{2})^2 \cos(\pi n_2/n)}{1 + \frac{1}{2}\cos(\pi n_1/n)+\frac{1}{2}\cos(\pi n_2/n)} \\
      &\quad  -\frac{1}{8} \bigg(\frac{\pi}{4n}\bigg)^2 \bigg(\frac{\frac{(a-1)}{2} \sin(\pi n_1/n) +\frac{(b-1)}{2} \sin(\pi n_2/n)}{1 + \frac{1}{2}\cos(\pi n_1/n)+\frac{1}{2}\cos(\pi n_2/n)} \bigg)^2\bigg) + r(k|a,b)\;,
\end{split}
\end{equation}
where the error term $r(k|a,b)$ collects the Taylor remainder, and it is bounded as, for $|(n_{1}, n_{2}) - (n,n)| > n^{\alpha}$ and $0<\alpha <1/3$:
\begin{equation}
\begin{split}
|r(k|a,b)| &\leq \frac{C}{|e_-(k|1,1)| n^{3}} + \frac{C(|n_{1} - n|^{3} + |n_{2} - n|^{3})}{n^{6} |e_-(k|1,1)|^{5}} \\
&\equiv r_{1}(k|a,b) + r_{2}(k|a,b)\;.
\end{split}
\end{equation}
Consider $r_{1}(k|a,b)$. Using the lower bound for $e_-(k|1,1)$, we have:
\begin{equation}\label{eq:err2}
\begin{split}
\sum_{n_{1}, n_{2}: |(n_{1}, n_{2}) - (n,n)| > n^{\alpha} } r_{1}(k|a,b) &\leq C\sum_{n_{1}, n_{2}: |(n_{1}, n_{2}) - (n,n)| > n^{\alpha} } \frac{1}{n^{2}(|n-n_{1}| + |n - n_{2}|)} \\
&\leq \frac{C \log n}{n}\;.
\end{split}
\end{equation}
Consider now $r_{2}(k|a,b)$. We have:
\begin{equation}
\begin{split}
\sum_{n_{1}, n_{2}: |(n_{1}, n_{2}) - (n,n)| > n^{\alpha} } r_{2}(k|a,b) &\leq \sum_{n_{1}, n_{2}: |(n_{1}, n_{2}) - (n,n)| > n^{\alpha} }\frac{C(|n_{1} - n|^{3} + |n_{2} - n|^{3})}{n (|n_{1} - n|^{5} + |n_{2} - n|^{5})} \\
&\leq \frac{C}{n} \sum_{n_{1}, n_{2}: |(n_{1}, n_{2}) - (n,n)| > n^{\alpha} }\frac{1}{|n_{1} - n|^{2} + |n_{2} - n|^{2}} \\
&\leq \frac{C \log n}{n}\;.
\end{split}
\end{equation}
All together,
\begin{equation}\label{eq:err3}
\sum_{n_{1}, n_{2}: |(n_{1}, n_{2}) - (n,n)| > n^{\alpha} } |r(k|a,b)| \leq \frac{C \log n}{n}\;.
\end{equation}
Observe that all the error terms we accumulated to far, namely (\ref{eq:err1}), (\ref{eq:err3}), are $o(1)$ as $n\to \infty$. Our next task will be to show that the sum over $n_{1}, n_{2}$ of the main three terms in (\ref{eq:B5}) is also $o(1)$. We rewrite:
\begin{equation}\label{eq:endiff}
\begin{split}
&e_-(k|a,b) -   e_-(k|1,1) \\
&\quad = \frac{1}{2}
    \frac{\pi}{4n} \frac{\frac{(a-1)}{2} \sin(\pi n_1/n) +\frac{(b-1)}{2} \sin(\pi n_2/n)}{\sqrt{1 + \frac{1}{2}\cos(\pi n_1/n)+\frac{1}{2}\cos(\pi n_2/n)}}\\
&\qquad - \frac{1}{2} \bigg(\frac{\pi}{4n}\bigg)^2 \frac{(\frac{a-1}{2})^2 \cos(\pi n_1/n) +(\frac{b-1}{2})^2 \cos(\pi n_2/n)}{\sqrt{1 + \frac{1}{2}\cos(\pi n_1/n)+\frac{1}{2}\cos(\pi n_2/n)}} \\
      &\qquad -\frac{1}{8} \bigg(\frac{\pi}{4n}\bigg)^2\frac{(\frac{(a-1)}{2} \sin(\pi n_1/n) +\frac{(b-1)}{2} \sin(\pi n_2/n))^2}{(1 + \frac{1}{2}\cos(\pi n_1/n)+\frac{1}{2}\cos(\pi n_2/n))^{3/2}} + r(k|a,b)\;.
\end{split}
\end{equation}
Let us now consider the sum over all $(n_{1}, n_{2})$ such that $|n - n_{i}| > n^{\alpha}$, for $i=1,2$, of the first three terms in the right-hand side of (\ref{eq:endiff}). We start from the first term. We have:
\begin{equation}
\begin{aligned}
     \frac{\pi}{8n}  & \sum_{\substack{n_1,n_2=0,\cdots,2n-1 \\|n - n_{i}| > n^{\alpha}}} \frac{\frac{(a-1)}{2} \sin(\pi n_1/n) +\frac{(b-1)}{2} \sin(\pi n_2/n)}{\sqrt{1 + \frac{1}{2}\cos(\pi n_1/n)+\frac{1}{2}\cos(\pi n_2/n)}}\\
&= \frac{\pi(a+b-2) }{16n} \sum_{\substack{n_1,n_2=0,\cdots,2n \\|n - n_{i}| > n^{\alpha}}} \frac{\sin(\pi n_1/n)}{\sqrt{1 + \frac{1}{2}\cos(\pi n_1/n)+\frac{1}{2}\cos(\pi n_2/n)}}\\
     &=0\;,
\end{aligned}
\end{equation}
where we used that the integer $n_{1}$ belongs to the summation range if and only if $n_{1}^{*} = 2n - n_{1}$ does so, and that the summand evaluated on $n_{1}$ is minus the summand evaluated in $n_{1}^{*}$ (observe that we could freely add $n_{1} = 2n$ to the sum, since the argument of the sum is vanishing).  Consider now the second term in (\ref{eq:endiff}). We have:
\begin{equation}\label{eq:diff}
\begin{split}
        - \frac{\pi^2}{32 n^2}  &\sum_{\substack{n_1,n_2=0,\cdots,2n-1 \\|n - n_{i}| > n^{\alpha}}} \frac{(\frac{a-1}{2})^2 \cos(\pi n_1/n) +(\frac{b-1}{2})^2 \cos(\pi n_2/n)}{\sqrt{1 + \frac{1}{2}\cos(\pi n_1/n)+\frac{1}{2}\cos(\pi n_2/n)}} \\
         &\quad =-\frac{\pi^2((\frac{a-1}{2})^2+(\frac{b-1}{2})^2)}{32 n^2}  \sum_{\substack{n_1,n_2=0,\cdots,2n-1 \\|n - n_{i}| > n^{\alpha}}} \frac{\cos(\pi n_1/n)}{\sqrt{1 + \frac{1}{2}\cos(\pi n_1/n)+\frac{1}{2}\cos(\pi n_2/n)}}  \\
        &\quad =- \frac{((\frac{a-1}{2})^2+(\frac{b-1}{2})^2)}{32} \int_{-\pi}^{\pi} \int_{-\pi}^{\pi} dx dy \frac{\cos(x)}{\sqrt{1+\frac{1}{2}\cos(x)+\frac{1}{2}\cos(y)}} + O\bigg(\frac{n^{\alpha}}{n}\bigg)\;.\\
\end{split}
\end{equation}
Finally, consider the third term. We get:
\begin{equation}\label{eq:terzo}
\begin{split}
         -\frac{\pi^2}{128n^2} &\sum_{\substack{n_1,n_2=0,\cdots,2n-1 \\|n - n_{i}| > n^{\alpha}}} \frac{(\frac{(a-1)}{2} \sin(\pi n_1/n) +\frac{(b-1)}{2} \sin(\pi n_2/n))^2}{(1 + \frac{1}{2}\cos(\pi n_1/n)+\frac{1}{2}\cos(\pi n_2/n))^{3/2}} \\&\quad =
         -\frac{\pi^2((\frac{a-1}{2})^2+(\frac{b-1}{2})^2)}{128 n^2} \sum_{\substack{n_1,n_2=-n,\cdots,n-1 \\|n - n_{i}| > n^{\alpha}}} \frac{\sin(\pi n_1/n)^2 }{(1 + \frac{1}{2}\cos(\pi n_1/n)+\frac{1}{2}\cos(\pi n_2/n))^{3/2}}  \\
         &\quad = -\frac{((\frac{a-1}{2})^2+(\frac{b-1}{2})^2)}{128} \int_{-\pi}^{\pi} \int_{-\pi}^{\pi} dx dy \frac{\sin(x)^2 }{(1 + \frac{1}{2}\cos(x)+\frac{1}{2}\cos(y))^{3/2}} +O\bigg(\frac{n^{\alpha}}{n}\bigg)\\
         &\quad = \frac{((\frac{a-1}{2})^2+(\frac{b-1}{2})^2))}{32} \int_{-\pi}^{\pi} \int_{-\pi}^{\pi} dx dy \frac{\cos(x)}{\sqrt{1+\frac{1}{2}\cos(x)+\frac{1}{2}\cos(y)}}  +O\bigg( \frac{n^{\alpha}}{n}\bigg)\;.
\end{split}
\end{equation}
Observe that the main term in (\ref{eq:terzo}) exactly cancels the main term in (\ref{eq:diff}). Thus, all together we proved (\ref{eq:bden}), for $L/2 = 2n$.

Next, consider the case $L/2 =2n+1$. Here, we compare $e_-(k|a,b)$ with $e_-(k|-1,-1)$, using that:
\begin{equation}
\begin{split}
\cos(2k_{1} - \frac{\pi}{L}(a-1)) &= \cos(2k_{1} + \frac{2\pi}{L} - \frac{\pi}{L}(a+1) ) \\
&= \cos( \frac{\pi}{2n+1} (2n_{1} + 1) - \frac{\pi}{2(2n + 1)}(a+1) )\;.
\end{split}
\end{equation}
The dispersion relation $e_-(k|-1,-1)$ vanishes for $(n_{1}, n_{2}) = (n,n)$. As before, we study differently the contributions $|(n_{1}, n_{2}) - (n,n)| \leq n^{\alpha}$, $|(n_{1}, n_{2}) - (n,n)| > n^{\alpha}$. The former are estimated as in (\ref{eq:err1}). Let us consider the latter. We Taylor expand:
\begin{equation}\label{exp2}
\begin{aligned}
      e_-(k|a,b)^2 &= e_-(k|-1,-1)^2 \\&\quad +4t^2\bigg[\frac{1}{2} \frac{\pi}{2(2n+1)} \bigg((a+1) \sin(\pi\frac{2 n_1+1}{2n+1}) +(b+1) \sin(\pi \frac{2 n_2+1}{2n+1}) \bigg) \\&\quad - \frac{1}{4} \bigg(\frac{\pi}{2(2n+1)}\bigg)^2 \bigg( (a+1)^2 \cos\big(\pi\frac{2 n_1+1}{2n+1}\big) +(b+1)^2 \cos\big(\pi \frac{2n_2+1}{2n+1}\big) \bigg)\bigg]\\
      &\quad  + O\bigg(\frac{1}{n^2}\bigg)\;.
\end{aligned} 
\end{equation}
The analysis of the error terms is performed as before, and we omit the details. We have, for $|(n_{1}, n_{2}) - (n,n)| > n^{\alpha}$:%
\begin{equation}
    \begin{split}
      e_-(k|a,b)&= e_-(k|-1,-1)\bigg(1 +\frac{\pi}{2(2n+1)} \frac{\frac{(a+1)}{2} \sin(\pi\frac{2 n_1+1}{2n+1}) +\frac{(b+1)}{2} \sin(\pi\frac{2 n_2+1}{2n+1})}{1 + \frac{1}{2}\cos(\pi\frac{2 n_1+1}{2n+1})+\frac{1}{2}\cos(\pi\frac{2 n_2+1}{2n+1})} \\&\quad -  \bigg(\frac{\pi}{2(2n+1)}\bigg)^2 \frac{(\frac{a+1}{2})^2 \cos(\pi\frac{2 n_1+1}{2n+1}) +(\frac{b+1}{2})^2 \cos(\pi\frac{2 n_2+1}{2n+1})}{1 + \frac{1}{2}\cos(\pi\frac{2 n_1+1}{2n+1})+\frac{1}{2}\cos(\pi\frac{2 n_2+1}{2n+1})}\\&\quad + o\bigg(\frac{1}{e^{2}_-(k|-1,-1) n^2}\bigg) \bigg)^{\frac{1}{2}} \\
      &=  e_-(k,l|-1,-1) \bigg(1 + \frac{1}{2}
    \frac{\pi}{2(2n+1)} \frac{\frac{(a+1)}{2} \sin(\pi\frac{2 n_1+1}{2n+1}) +\frac{(b+1)}{2} \sin(\pi\frac{2 n_2+1}{2n+1})}{1 + \frac{1}{2}\cos(\pi\frac{2 n_1+1}{2n+1})+\frac{1}{2}\cos(\pi\frac{2 n_2+1}{2n+1})}\\
      &\quad - \frac{1}{2} \bigg(\frac{\pi}{2(2n+1)}\bigg)^2 \frac{(\frac{a+1}{2})^2 \cos(\pi\frac{2 n_1+1}{2n+1}) +(\frac{b+1}{2})^2 \cos(\pi\frac{2 n_2+1}{2n+1})}{1 + \frac{1}{2}\cos(\pi\frac{2 n_1+1}{2n+1})+\frac{1}{2}\cos(\pi\frac{2 n_2+1}{2n+1})} \\
      &\quad -\frac{1}{8} \bigg(\frac{\pi}{2(2n+1)}\bigg)^2 \bigg(\frac{\frac{(a+1)}{2} \sin(\pi\frac{2 n_1+1}{2n+1}) +\frac{(b+1)}{2} \sin(\pi\frac{2 n_2+1}{2n+1})}{1 + \frac{1}{2}\cos(\pi\frac{2 n_1+1}{2n+1})+\frac{1}{2}\cos(\pi\frac{2 n_2+1}{2n+1}) } \bigg)^2\bigg)\\&
      \quad + \tilde r(k|a,b)\;,
\end{split} 
\end{equation}
where the error term $\tilde r(k|a,b)$ is bounded as in (\ref{eq:err3}). Therefore:
\begin{equation}\label{eq:b18}
\begin{split}
  e_-(k|a,b) -   e_-(k|-1,-1) &= \frac{1}{2}
    \frac{\pi}{2(2n+1)} \frac{\frac{(a+1)}{2} \sin(\pi\frac{2 n_1+1}{2n+1}) +\frac{(b+1)}{2} \sin(\pi\frac{2 n_2+1}{2n+1})}{\sqrt{1 + \frac{1}{2}\cos(\pi\frac{2 n_1+1}{2n+1})+\frac{1}{2}\cos(\pi\frac{2 n_2+1}{2n+1})}}\\
&\quad - \frac{1}{2} \bigg(\frac{\pi}{2(2n+1)}\bigg)^2 \frac{(\frac{a+1}{2})^2 \cos(\pi\frac{2 n_1+1}{2n+1}) +(\frac{b+1}{2})^2 \cos(\pi\frac{2 n_2+1}{2n+1})}{\sqrt{1 + \frac{1}{2}\cos(\pi\frac{2 n_1+1}{2n+1})+\frac{1}{2}\cos(\pi\frac{2 n_2+1}{2n+1})}} \\
      &\quad -\frac{1}{8} \bigg(\frac{\pi}{2(2n+1)}\bigg)^2\frac{(\frac{(a+1)}{2} \sin(\pi\frac{2 n_1+1}{2n+1}) +\frac{(b+1)}{2} \sin(\pi\frac{2 n_2+1}{2n+1}))^2}{(1 + \frac{1}{2}\cos(\pi\frac{2 n_1+1}{2n+1})+\frac{1}{2}\cos(\pi\frac{2 n_2+1}{2n+1}))^{3/2}}\\&\quad + \tilde r(k|a,b)\;.
\end{split}
\end{equation}
Let us evaluate the sum over all $(n_{1},n_{2})$ such that $|(n_{1}, n_{2}) - (n,n)| > n^{\alpha}$ of the first three terms in the right-hand side of (\ref{eq:b18}). Consider the first term. We have:
\begin{equation}
\begin{split}
     &\frac{\pi}{4(2n+1)}   \sum_{\substack{n_1,n_2=0,\cdots,2n \\ |n_{i} - n| > n^{\alpha}}} \frac{\frac{(a+1)}{2} \sin(\pi\frac{2 n_1+1}{2n+1}) +\frac{(b+1)}{2} \sin(\pi\frac{2 n_2+1}{2n+1})}{\sqrt{1 + \frac{1}{2}\cos(\pi\frac{2 n_1+1}{2n+1})+\frac{1}{2}\cos(\pi\frac{2 n_2+1}{2n+1})}}\\
&\qquad = \frac{\pi(a+b+2) }{8(2n+1)} \sum_{\substack{n_1,n_2=0,\cdots,2n \\ |n_{i} - n| > n^{\alpha}}} \frac{\sin(\pi\frac{2 n_1+1}{2n+1})}{\sqrt{1 + \frac{1}{2}\cos(\pi\frac{2 n_1+1}{2n+1})+\frac{1}{2}\cos(\pi\frac{2 n_2+1}{2n+1})}}\\
&\qquad =0\;,
\end{split}
\end{equation}
where we used that $n_{1}$ belongs to the summation range if and only if $n_{1}^{*} = 2n - n_{1}$ does so, and the argument of the sum takes opposite values at $n_{1}$ and $n_{1}^{*}$. Consider the second term. We have:
\begin{equation}\label{eq:aa}
    \begin{split}
        &- \frac{\pi^2}{8 (2n+1)^2}  \sum_{\substack{n_1,n_2=0,\cdots,2n \\ |n_{i} - n| > n^{\alpha}}} \frac{(\frac{a+1}{2})^2 \cos(\pi\frac{2 n_1+1}{2n+1}) +(\frac{b+1}{2})^2 \cos(\pi\frac{2 n_2+1}{2n+1})}{\sqrt{1 + \frac{1}{2}\cos(\pi\frac{2 n_1+1}{2n+1})+\frac{1}{2}\cos(\pi\frac{2 n_2+1}{2n+1})}} \\
         &\quad =-\frac{\pi^2((\frac{a+1}{2})^2+(\frac{b+1}{2})^2)}{ 8 (2n+1)^2}  \sum_{\substack{n_1,n_2=0,\cdots,2n \\|n_{i} - n| > n^{\alpha}}} \frac{\cos(\pi\frac{2 n_1+1}{2n+1})}{\sqrt{1 + \frac{1}{2}\cos(\pi\frac{2 n_1+1}{2n+1})+\frac{1}{2}\cos(\pi\frac{2 n_2+1}{2n+1})}}  \\
        &\quad =- \frac{((\frac{a+1}{2})^2+(\frac{b+1}{2})^2)}{8} \int_{-\pi}^{\pi} \int_{-\pi}^{\pi} dx dy \frac{\cos(x)}{\sqrt{1+\frac{1}{2}\cos(x)+\frac{1}{2}\cos(y)}} + O\bigg(\frac{n^{\alpha}}{n}\bigg)\;.
    \end{split}
\end{equation}
Finally, consider the third term. We have:
\begin{equation}\label{eq:aaa}
    \begin{split}
         &-\frac{\pi^2}{32(2n+1)^2} \sum_{\substack{n_1,n_2=0,\cdots,2n \\ |n_{i} - n| > n^{\alpha}}} \frac{(\frac{(a+1)}{2} \sin(\pi\frac{2 n_1+1}{2n+1}) +\frac{(b+1)}{2} \sin(\pi\frac{2 n_2+1}{2n+1}))^2}{(1 + \frac{1}{2}\cos(\pi\frac{2 n_1+1}{2n+1})+\frac{1}{2}\cos(\pi\frac{2 n_2+1}{2n+1}))^{3/2}} \\
         &\quad = -\frac{((\frac{a+1}{2})^2+(\frac{b+1}{2})^2)}{32} \int_{-\pi}^{\pi} \int_{-\pi}^{\pi} dx dy \frac{\sin(x)^2 }{(1 + \frac{1}{2}\cos(x)+\frac{1}{2}\cos(y))^{3/2}} +O\bigg(\frac{n^{\alpha}}{n}\bigg)\\
         &\quad = \frac{((\frac{a-1}{2})^2+(\frac{b-1}{2})^2)}{8} \int_{-\pi}^{\pi} \int_{-\pi}^{\pi} dx dy \frac{\cos(x)}{\sqrt{1+\frac{1}{2}\cos(x)+\frac{1}{2}\cos(y)}}  +O\bigg( \frac{n^{\alpha}}{n}\bigg)\;.
    \end{split}
\end{equation}
Again, the main terms in (\ref{eq:aa}), (\ref{eq:aaa}) cancel out. This concludes the proof of (\ref{eq:bden}) for $L/2 = 2n+1$.

\end{document}